\newtheorem{theo}{Theorem}[section]
\newtheorem{lemm}[theo]{Lemma}
\newtheorem{coro}[theo]{Corollary}
\newtheorem{prop}[theo]{Proposition}
\newtheorem{defi}[theo]{Definition}
\newtheorem{rema}[theo]{Remark}
\newtheorem{example}[theo]{Example}
\newcommand{\w}{\mbox{wt}}
\newcommand{\ba}{{\mathbf a}}
\newcommand{\bb}{{\mathbf b}}
\newcommand{\bu}{\mathbf{u}}
\newcommand{\be}{\mathbf{e}}
\newcommand{\zero}{{\mathbf 0}}
\newcommand{\one}{{\mathbf 1}}
\newcommand{\Z}{\mathbb{Z}}
\newcommand{\cA}{{\mathcal{A}}}
\newcommand{\cC}{{\mathcal{C}}}
\newcommand{\cD}{{\mathcal{D}}}
\newcommand{\cQ}{{\mathcal{Q}}}
\newcommand{\cG}{{\mathcal{G}}}
\newcommand{\inv}{^{-1}}
\newcommand{\gen}[1]{\langle #1 \rangle}
\newcommand{\genset}{$x_1,\dots,x_{\sigma}$; $y_1,\dots,y_{\delta}$; $z_1,\dots,z_{\rho}$~}
\newcommand{\qh}{qua\-ter\-nio\-nic Ha\-da\-mard~}
\newcommand{\zz}{$\Z_2\Z_4$-linear }
\newcommand{\qq}{\Z_2\Z_4\cQ_8}
\newcommand{\matriz}[1]{\begin{array} #1 \end{array}}
\newcommand{\Aut}{\mbox{\rm Aut}}
\title{Families of Hadamard $\qq$-codes\thanks{This work has been partially supported by the Spanish MICINN under Grants MTM2009-08435
and by the Catalan AGAUR under Grant 2009SGR1224. \newline
\indent $^1$\'{A}ngel del Rio is with the Department of Mathematics, Universidad de Murcia, Spain. (email: adelrio@um.es)\newline
\indent $^2$J. Rif\`{a} is with the Department of Information and Communications Engineering, Universitat Aut\`{o}noma de Barcelona, Spain. (email:~josep.rifa@uab.cat)}}
\author{\'{A}. del Rio$^1$, J. Rif\`{a}$^2$}
\begin{document}
\maketitle
\begin{abstract}
A $\qq$-code $\cC$ is a non-empty subgroup of $\Z_2^{k_1}\times \Z_4^{k_2}\times \cQ_8^{k_3}$, where $\cQ_8$ is the quaternion group on eight elements. Such $\qq$-codes are translation
invariant propelinear codes as the well known $\Z_4$-linear or \zz codes.

In the current paper, we show that there exist
``pure'' $\qq$-codes, that is, codes that
do not admit any abelian translation invariant propelinear
structure. We study the dimension of the kernel and rank of the $\qq$-codes, and we give upper and lower bounds for these parameters. We give tools to construct a new class of Hadamard codes formed by several families of $\qq$-codes; we study and show the different shapes of such a codes and we improve the upper and lower bounds for the rank and the dimension of the kernel when the codes are Hadamard.
\end{abstract}

\section{Introduction}
The discovery of the existence of a quaternary structure in some relevant families with better parameters than any linear code has raised the interest in the study of these codes \cite{z4} and more generally on codes with a group structure. From the Coding Theory perspective it is desired that the group operation preserves the Hamming distance. This is the case, for example, of the \zz codes which has been intensively studied during last years. More generally, the propelinear codes and, specially those which are translation invariant, are particularly interesting because both left and right product preserves the Hamming distance. Translation invariant propelinear codes has been characterized as the image of a subgroup by a suitable Gray map of a direct product of $\Z_2$, $\Z_4$ and $\cQ_8$, the quaternion group of order 8 \cite{rp}. Hence it makes sense to call this codes as $\Z_2\Z_4\cQ_8$-codes. The aim of this paper is to study the structure and main properties of $\Z_2\Z_4\cQ_8$-codes with special focus on those that are Hadamard codes as well.

Section~\ref{sec:preliminaris} has been reserved for notation and preliminaries.

As far a we know there is not any example in the literature of a proper $\qq$-code, i.e., one which is not equivalent to a \zz code. The first result of this paper consists in providing such an example. This result appears in Section~\ref{sec:Z2Z4Q8}, where we also study the group-theoretical properties of the $\Z_2\Z_4\cQ_8$-codes and the relation of this structure with its rank and the dimension of its kernel.
This structure suggests to associate to the group three numerical parameters. We will show that these parameters provide bounds for the rank and dimension of the kernel. Moreover, our example of proper $\Z_2\Z_4\cQ_8$-code shows that these bounds are tight.

Section~\ref{sec:Hadamard} is dedicated to Hadamard \zz codes.
The Hadamard \zz codes as well as the (extended) perfect \zz codes are well known~\cite{br,kr,PRV06}.
The Hadamard linear codes are dual of extended perfect codes.
However we will show that the extended perfect codes, involving at least one quaternionic component do not exists for length $n\geq 8$. For every $n=2^m$ there is a unique Hadamard linear code, up to equivalence.
If $m\le 3$ this is the unique Hadamard code.
However, there are five inequivalent Hadamard codes of length 16. One of them is cyclic, another is a \zz code and the other three cannot be realized as \zz codes. We will show that exactly one of these three can be realized as a $\Z_2\Z_4\cQ_8$-code, more specifically, as a pure $\cQ_8$-code. This provides another example of such codes.
In Theorem~\ref{ClasificacionHadamard} we provide a precise description of the possible group structures of a Hadamard $\Z_2\Z_4\cQ_8$-codes and in Corollary~\ref{Cotas} we obtain bounds for the rank of a Hadamard $\Z_2\Z_4\cQ_8$-code which are better than those for general $\Z_2\Z_4\cQ_8$-codes.

In the last section of the paper we introduce two constructions of $\Z_2\Z_4\cQ_8$-codes which allow to construct many Hadamard $\Z_2\Z_4\cQ_8$-code

\section{Preliminaries}\label{sec:preliminaris}

Let $\Z_2$ and $\Z_4$ denote the binary field and the ring of integers modulo 4, respectively. Let $\Z_2^n$ denote the set of all binary
vectors of length $n$ and let $\Z_4^n$ be the set of all $n$-tuples
over the ring $\Z_4$. The all-zero vector in $\Z^n_2$ is denoted by
${\mathbf 0}$. Let $\w(v)$ denote the {\em Hamming weight} of a
vector $v \in \Z_2^n$ (i.e., the number of its nonzero coordinates),
and let $d(v, u)=\w(v+u)$, the {\em Hamming distance} between two
vectors $v,u\in\Z_2^n$.

Any non-empty subset of $\Z_2^n$ is called a binary code and a linear subspace of
$\Z_2^n$ is called a {\it binary linear code} or a {\it
$\Z_2$-linear code}. Similarly, any non-empty subset of $\Z_4^n$
is a quaternary code and a subgroup of $\Z_4^n$ is called a {\it
quaternary linear code} \cite{z4}.
Quaternary codes can be viewed as binary codes under the Gray map
defined as
$$
\varphi(0)=(0,0),\, \varphi(1)=(0,1), \ \varphi(2)=(1,1),\,
\varphi(3)=(1,0),
$$
which is extended coordinatewise to a bijection $\phi:\Z_4^n\rightarrow \Z_2^{2n}$.
If $\cC$ is a quaternary linear code of length $n$, then the binary code
$C=\varphi(\cC)$ is said to be a {\em
$\Z_4$-linear code} of binary length $2n$ \cite{z4}.

Let $\cQ_8$ be the {\em quaternion group} on eight elements. The following equalities provides a presentation and the list of elements of $\cQ_8$:
$$\cQ_8=\gen{\ba,\bb|\ba^4=\ba^2\bb^2=\one,\bb\ba\bb^{-1}=\ba^{-1}} =\{\one,\ba,\ba^2,\ba^3,\bb,\ba\bb,\ba^2\bb,\ba^3\bb\}.$$
A {\em quaternionic code} $\cC$ is a non-empty subgroup of $\cQ_8^{n}$.
Quaternionic codes can also be seen as binary codes under the following Gray map:
$\phi:\cQ_8\;\longrightarrow\;\Z_2^4$, such
that
\begin{eqnarray*}
\phi(\one)=(0,0,0,0),    & \phi(\bb)=(0,1,1,0), \\
\phi(\ba)=(0,1,0,1),   & \phi(\ba\bb)=(1,1,0,0), \\
\phi(\ba^2)=(1,1,1,1), & \phi(\ba^2\bb)=(1,0,0,1), \\
\phi(\ba^3)=(1,0,1,0), & \phi(\ba^3\bb)=(0,0,1,1).
\end{eqnarray*}
We will also denote by $\phi$ the componentwise extended map from
$\cQ_8^n$ to $\Z_2^{4n}$.
If $\cC$ is a quaternionic code, then we will say that $C=\phi(\cC)$ is a $\cQ_8$-{\em code} of binary length $4n$.

Binary linear codes, quaternary linear codes and $\cQ_8$-codes can be seen as particular cases of a more general family of codes.
More specifically, given non-negative integers $k_1,k_2$ and $k_3$ we can define the generalized Gray map
$$
\Phi:\;\Z_2^{k_1}\times\Z_4^{k_2}\times\cQ_8^{k_3}\;\longrightarrow\;\Z_2^{k_1+2k_2+4k_3},
$$
such that if $x\in\Z_2^{k_1}$, $y\in\Z_4^{k_2}$ and $z\in\cQ_8^{k_3}$ then
$$
\Phi(x,y,z)=(x,\varphi(y),\phi(z)).
$$
A $\Z_2\Z_4\cQ_8$-{\em code} is a binary code $C$ of the form $C=\Phi(\cC)$ where $\cC$ is a subgroup of
$\Z_2^{k_1}\times\Z_4^{k_2}\times\cQ_8^{k_3}$.

Notice that if $k_1>0$ and $k_2=k_3=0$ then $C$ is a binary linear or
$\Z_2$-linear code. If $k_2>0$ and $k_1=k_3=0$, then $C$ is a
$\Z_4$-linear code. If $k_3>0$ and $k_1=k_2=0$, then $C$ is a
$\cQ_8$-code. Finally, if $k_3=0$, then $C$ is called a
\zz code \cite{Z2Z4} (hence, including the cases
$\Z_2$-linear and $\Z_4$-linear). We also remark that $\cC$ is abelian if and only if $C$ is a
\zz code.

We use additive notation for $\Z_2$ and $\Z_4$ and multiplicative notation for $\cQ_8$ and $\Z_2^{k_1}\times \Z_4^{k_2}\times \cQ_8^{k_3}$.
Therefore, the identity of $\Z_2^{k_1}\times \Z_4^{k_2}\times \cQ_8^{k_3}$ is $(0,\stackrel{k_1+k_2}{\dots},0,\one,\stackrel{k_3}{\dots},\one)$.
We denote this element as $\be$. We also denote it $\be_{k_1,k_2,k_3}$ or $\be_l$, with $l=k_1+k_2+k_3$, when we want to emphasize the ambient space of $\be$.
Note that each of the groups $\Z_2$, $\Z_4$ and $\cQ_8$ have exactly one element of order 2. So there is a unique element $\bu$ of $\cG$ which has the element of order 2 in each coordinate. This element is also determined by the fact that $\Phi(\bu)$ is the all one vector. As for $\be$, we also denote $\bu$ by $\bu_{k_1,k_2,k_3}$ or $\bu_l$, with $l=k_1+k_2+k_3$ if we want to emphasize its ambient space.

If $h\in \Z$ and $w=(x,y,z)\in\Z_2^{k_1}\times\Z_4^{k_2}\times \cQ_8^{k_3}$, where
$x\in\Z_2^{k_1}$, $y\in\Z_4^{k_2}$ and $z\in \cQ_8^{k_3}$, then
	$$w^h = (hx,hy,z^h).$$
The order of $w$ is the smallest positive integer $h$ such that $w^h=\be$.

Let ${\cal S}_n$ denote the symmetric group of permutations on the
set $\{1,\ldots,n\}$. For any $\pi\in {\cal S}_n$ and any vector
$v=(v_1,\ldots,v_n)\in \Z_2^n$, we write $\pi(v)$ to denote the
vector $(v_{\pi^{-1}(1)},\ldots,v_{\pi^{-1}(n)})$.

Two binary codes $C_1$ and $C_2$ of length $n$ are said to
be {\em isomorphic} if there is a coordinate  permutation $\pi
\in {\cal S}_n$ such that $C_2=\{ \pi(x) :  x\in C_1 \}$.  They are said
to be {\em equivalent} if there is a vector $y\in \Z_2^n$ and a
coordinate permutation $\pi \in {\cal S}_n$ such that $C_2=\{
y+\pi(x) : x\in C_1 \}$. Although the two definitions above stand
for two different concepts, it follows that two binary linear
codes are equivalent if and only if they are isomorphic.

A binary code $C$ of length $n$ is said to be {\em propelinear}
\cite{rbh} if for any codeword $x\in C$ there exists $\pi_x\in
{\cal S}_n$ satisfying the following properties for all $v\in \Z_2^n$ and $x,y\in C$:
\begin{enumerate}
\item $x+\pi_x(y)\in C$ and
\item $\pi_x(\pi_y(v))=\pi_z(v)$, where $z=x+\pi_x(y)$.
\end{enumerate}

Let $C$ be a propelinear code and for every $x\in C$ let $\pi_x\in S_n$ satisfy the above conditions. For all $x\in C$ and $y\in \Z_2^n$ let $xy=x+\pi_x(y)$. This endows $C$ with a group structure, which is not abelian in general. Therefore, the vector $\zero$ is
always a codeword and $\pi_{\zero}$ is the identity permutation
$I$. Hence, $\zero$ is the identity element in $C$ and
$x^{-1}=\pi_x^{-1}(x)$ for all $x\in C$ \cite{rbh}.
Notice that a binary code may have several structures of propelinear code with different group structures.


The following lemma is straightforward \cite{rbh,rp,br}.

\begin{lemm}\label{DistInv}
Let $C$ be a propelinear code of length $n$. Then,
\begin{center}
$d(u,v)=d(xu,xv)$ for all $x\in C$ and $u,v\in \Z_2^n$.
\end{center}
\end{lemm}

This means that  left multiplication in a propelinear code is {\em Hamming compatible}
\cite{bfr} in the sense that $d(xz,x)=\w(z)$ for all $x\in C$ and
$z\in\Z_2^n$.

\begin{defi}
A propelinear code $C$ of length $n$ is said to be {\em
translation invariant} if
\begin{center}
$d(x,y)=d(xu,yu)$ for all $x,y\in C$ and $u\in \Z_2^n$.
\end{center}
\end{defi}

In \cite{rp}, it is proven that a binary code is translation invariant
propelinear if and only if it is a $\Z_2\Z_4\cQ_8$-code.
Then, a translation invariant propelinear binary code is isomorphic to $C=\Phi(\cC)$ for a subgroup of $\cG=\Z_2^{k_1}\times\Z_4^{k_2}\times\cQ_8^{k_3}$. The permutation $\pi_x$ associated to each element of $\cG$ is obtained by concatenation of permutations in each $\Z_4$ or $\cQ_8$ block, such that the permutation in a component of order 2 is the identity; the permutation of a $\Z_4$-coordinate of order 4 is the transposition of the binary components and of a $\cQ_8$-coordinate of order 4 is a product of two disjoint transpositions of the four binary components. More precisely, if $w=(x_1,\dots,x_{k_1},y_1,\dots,y_{k_2},z_1,\dots,z_{k_3})$ and $w'=\Phi(w)$
then $\pi_{w'}=\sigma_1\dots\sigma_{k_2}\delta_1\dots \delta_{k_3}$ where
	\begin{eqnarray*}
	 \sigma_i &=& \left\{\matriz{{ll} I, & \text{if } y_i \in \{0,2\}; \\ (k_1+2i-1,k_1+2i), & \text{if } y_i \in \{1,3\}.}\right.
	\end{eqnarray*}
and if $t=k_1+2k_2$ then
  \begin{eqnarray*}
	 \delta_i &=& \left\{\matriz{{ll} I, & \text{if } z_i \in \{\one,\ba^2\}; \\
																 (t+4i-3,t+4i-2)(t+4i-1,t+4i), & \text{if } z_i \in \{\ba,\ba^3\}; \\
																 (t+4i-3,t+4i-1)(t+4i-2,t+4i), & \text{if } z_i \in \{\bb,\ba^2\bb\}; \\
																 (t+4i-3,t+4i)(t+4i-2,t+4i-1), & \text{if } z_i \in \{\ba\bb,\ba^3\bb\}.}\right.
	\end{eqnarray*}


The \emph{rank} of a binary code $C$ is the dimension of the binary vector space generated by its codewords. We denote the rank of $C$ with $r(C)$ or simply $r$.
The {\em kernel} of a binary code $C$ of length $n$ is
$$
K(C)=\{z\in\Z_2^n \mid C+z=C\}.
$$
If $C$ contains the all-zero vector, then $K(C)$ is linear. In that case the dimension of $K(C)$ is denoted with $k(C)$ or simply $k$.
These two parameters, the rank and dimension of the kernel,
can be used to classify binary codes, since if two binary codes have
different ranks or dimensions of the kernel, they are non-equivalent.
Note that if $C$ is a propelinear code and $x\in C$ is such that $\pi_x=I$ then $x\in K(C)$.

The binary code $C$ can be partitioned by $K(C)$-cosets and therefore $|C|$ is a multiple of $|K(C)|$. Since the union of $K(C)$
and anyone of its cosets is again linear, it is clear that either $C$ is linear or $|C| > 2|K(C)|$.

If $C$ is not linear and $\overline{C}$ is the linear span of $C$ then $|K(C)|$ divides $|\overline{C}|$ and $|\overline{C}|>|C|$. Therefore, $|\overline{C}|\ge 4|K(C)|$ and $r=\log_2(|\overline{C}|) \ge \log_2(4|K(C)|) = k+2$.
If moreover $C$ is  a $\Z_2\Z_4\cQ_8$-code then $|C|$ is a power of $2$ and therefore, if $C$ is not linear then $|C|\ge 4|K(C)|$. Hence, $|\overline{C}|\ge 8 |K(C)|$ and $r\ge k+3$.
We summarize this in the following lemma.

\begin{lemm}\label{lem:KernelRank}
 If $C$ is a non-linear binary code then $r(C)\ge k(C)+2$. If moreover $C$ is  a $\Z_2\Z_4\cQ_8$-code then $r(C)\ge k(C)+3$.
\end{lemm}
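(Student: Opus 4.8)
The plan is to convert both invariants into cardinalities and then run a counting argument on the cosets of the kernel. Writing $\overline{C}$ for the linear span of $C$ and assuming $\zero\in C$ (so that $K(C)$ is linear, as noted above), we have $r(C)=\log_2|\overline{C}|$ and $k(C)=\log_2|K(C)|$, since both $\overline{C}$ and $K(C)$ are linear subspaces of $\Z_2^n$. The entire statement then reduces to comparing the three integers $|K(C)|$, $|C|$ and $|\overline{C}|$, and I would aim to show that non-linearity forces $|\overline{C}|$ to be a sufficiently large power-of-$2$ multiple of $|K(C)|$.

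First I would record the two divisibilities. Since $\zero\in C$, the kernel satisfies $K(C)\subseteq C$ and its cosets $x+K(C)$ (for $x\in C$) partition $C$, so $|K(C)|$ divides $|C|$; and since $K(C)\subseteq\overline{C}$ are both linear, $|K(C)|$ also divides $|\overline{C}|$, with both of these cardinalities being powers of $2$. The key structural step is the elementary observation that the union of a subspace $K$ with at most one of its cosets is again a subspace: $K\cup(K+a)$ is closed under addition because $2a=\zero$ in characteristic $2$. Hence if $C$ were the union of at most two cosets of $K(C)$ it would be linear, so a non-linear $C$ is a union of at least three such cosets and therefore $|C|>2|K(C)|$.

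Now I would feed this into a powers-of-$2$ divisibility argument. When $C$ is non-linear we have $C\subsetneq\overline{C}$, hence $|\overline{C}|>|C|>2|K(C)|$; since $|\overline{C}|/|K(C)|$ is a power of $2$ strictly greater than $2$, it is at least $4$, which gives $r(C)=\log_2|\overline{C}|\ge k(C)+2$. For the sharper bound I would invoke the extra hypothesis that a $\qq$-code has $|C|=|\cC|$ equal to a power of $2$ (as $\cC$ is a subgroup of a $2$-group and $\Phi$ is injective). Then $|C|/|K(C)|$ is itself a power of $2$ exceeding $2$, so $|C|\ge 4|K(C)|$; substituting $|\overline{C}|>|C|\ge 4|K(C)|$ into the same argument forces $|\overline{C}|/|K(C)|\ge 8$ and hence $r(C)\ge k(C)+3$.

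I expect the only delicate point—and thus the main obstacle—to be keeping careful track of which quantities are guaranteed to be powers of $2$. In the general case only $|K(C)|$ and $|\overline{C}|$ are powers of $2$, while $|C|$ need not be, so the inequality $|C|>2|K(C)|$ can be upgraded only to $|\overline{C}|\ge 4|K(C)|$ after passing to the span. It is precisely the additional fact that $|C|$ is a power of $2$ for $\qq$-codes that supplies the extra factor of $2$ and yields the improved bound; the two halves of the statement differ only in this bookkeeping.
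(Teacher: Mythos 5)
Your proposal is correct and follows essentially the same route as the paper: partition $C$ into cosets of $K(C)$, note that the union of $K(C)$ with a single coset is still linear so non-linearity forces $|C|>2|K(C)|$, then pass to the span and use that $|\overline{C}|/|K(C)|$ (and, for $\qq$-codes, $|C|/|K(C)|$) is a power of $2$ to upgrade the strict inequalities. Your explicit bookkeeping of which quantities are powers of $2$ is exactly the point the paper's argument rests on.
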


A \emph{Hadamard matrix} of order $n$ is a matrix of size $n \times n$ with entries $\pm 1$, such that $HH^T=n I$. We can easily see that any two rows (columns) of a Hadamard matrix agree in precisely $n/2$ coordinates. If $n>2$ then any three rows (columns) agree in precisely $n/4$ coordinates. Thus, if $n>2$ and there is a Hadamard matrix of orden $n$ then $n$ is multiple of 4. It is conjectured that the converse holds, i.e., if $n$ es multiple of 4 then there are Hadamard matrices of order $n$ \cite{Key}.

Two \emph{Hadamard matrices} are \emph{equivalent} if one can be obtained from the other by permuting rows and/or columns and multiplying rows and/or columns by $-1$. With the last operations we can change the first
row and column of $H$ into $+1$'s and we obtain an equivalent Hadamard matrix which is called normalized. If $+1$'s are replaced by $0$'s and $-1$'s by $1$'s, the initial Hadamard matrix is changed into a (binary) Hadamard matrix and, from now on, we will refer to it when we deal with Hadamard matrices. The binary code consisting of the rows of a (binary) Hadamard matrix and their complements is called a (binary) \emph{Hadamard code}, which is of length $n$, with $2n$ codewords, and minimum distance $n/2$.

\section{Properties of $\Z_2\Z_4\cQ_8$-codes. Rank and dimension of the kernel.}\label{sec:Z2Z4Q8}

In this section we study some of the group theoretical properties of  $\Z_2\Z_4\cQ_8$-codes. We also present an example of a pure $\cQ_8$-code, i.e., a $\cQ_8$-code which is not equivalent to a \zz code.

Throughout this section $\cG = \Z_2^{k_1}\times \Z_4^{k_2}\times \cQ_8^{k_3}$, we fix a non-trivial subgroup $\cC$ of $\cG$ and let $C=\Phi(\cC)$. Then, the length of $C$ is $n=k_1+2k_2+4k_3$ and we set $l=k_1+k_2+k_3$.

We use the notation bellow for $x,y \in \cG$:
\begin{eqnarray*}
x^y & = & y\inv x y, \text{ conjugate of } x \in \cC \text{ by } y \in \cC, \\
(x,y)&=&x\inv y\inv x y, \text{ commutator of } x,y\in \cC, \\
\cC' & =& \gen{(x,y):x,y\in \cC}, \text{ commutator subgroup of } \cC,\\
Z(\cC) &=& \{z\in \cC: zx=xz, \text{ for every } x\in \cC\}, \text{ the center of } \cC,\\
T(\cC) &=& \{z\in \cC : z^2=\be\}.
\end{eqnarray*}

Note that
	$$\cC' \subseteq T(\cC) \subseteq Z(\cC)$$
and hence, both $\cC'$ and $T(\cC)$ are central subgroups of $\cC$. This implies that
	\begin{equation}\label{prop:commutator}
	 (x,y)=(y,x) \text{ and } (xy,z)=(x,z)(y,z)
	\end{equation}
for every $x,y,z\in \cC$.

\begin{defi}\label{def:type}
 We say that $\cC$ is of type $(\sigma,\delta,\rho)$ if $|T(\cC)|=2^{\sigma}$, $[Z(\cC):T(\cC)]=2^{\delta}$ and $[\cC:Z(\cC)]=2^{\rho}$.
\end{defi}

For instance, if $\delta=\rho=0$ then $C$ is a linear code and if $C$ is a \zz code then $\cC\cong \Z_2^{\gamma}\times \Z_4^{\delta}$ for some non-negative integers $\gamma$ and $\delta$. In the latter case $\sigma=\gamma+\delta$ and $\rho=0$. Note that the type depends on the group $\cC$ rather than on the binary code $C$. For example, if  $\cC=\Z_4$ then the type of $\cC$ is $(1,1,0)$. However the corresponding binary code is  $\Z_2^2$ and henceforth, it is also the binary code of a subgroup of $\Z_2^2$ of type $(2,0,0)$. Similarly, if $\cC=\cQ_8$ then $\cC$ has type $(1,0,2)$ but $C$ is linear and hence it is also the binary code of a group of type $(3,0,0)$.

Assume that $\cC$ is of type $(\sigma,\delta,\rho)$.
Clearly $T(\cC)\cong \Z_2^{\sigma}$.
Moreover, as every element of $\cG$ has order 1, 2 or 4, $x^2 \in T(\cC)$ for every $x\in \cC$ and therefore $\cC/T(\cC)\cong \Z_2^{\delta+\rho}$, $Z(\cC)/T(\cC)\cong \Z_2^{\delta}$ and $\cC/Z(\cC)\cong \Z_2^{\rho}$.

Furthermore, $\sigma\ge \delta$, $\cC$ is generated by $\sigma +\delta+\rho$ elements and $x_1,\ldots,x_{\sigma}$; $y_1,\ldots,y_{\delta}$; $z_1,\ldots,z_{\rho}$ with
	$$T(\cC)=\gen{x_1}\times \dots \times \gen{x_{\sigma}} \quad  \text{and} \quad Z(\cC)=\gen{x_1,\dots,x_{\sigma},y_1,\dots,y_{\delta}}\cong \Z_2^{\sigma-\delta}\times \Z_4^{\delta}.$$
Any element in $\cC$ can be written in a unique way as $\prod_{i}x_i^{\alpha_i}\prod_{j}y_j^{\beta_j}\prod_{k}z_k^{\gamma_j}$, where $\alpha_i,\beta_j,\gamma_k\in \{0,1\}$. Moreover this element belongs to $T(\cC)$ if and only if each $\beta_i$ and each $\gamma_j$ are even; and it belongs to $Z(C)$ if and only if each $\gamma_i$ is even.
In particular, the $y_i$'s and $z_i$'s have order 4.

In the remainder of the paper when we write
\begin{equation}\label{generating}
\cC=\gen{\mbox{\genset}}
\end{equation}
we are implicitly assuming that \genset is a generating set of $\cC$ satisfying the above conditions.

\begin{lemm}\label{lemm:cuadrados}
Let $a,b\in \cC\setminus T(\cC)$.
\begin{enumerate}
\item\label{Conmutan} If $(a,b)=\be$ then either $ab\in T(\cC)$ or $a^2\not=b^2$.
\item\label{Tres} $a^2,b^2$ and $(a,b)$ coincide in each non-trivial coordinate of $(a,b)$. In particular, $\w(\Phi((a,b))\le \w(\Phi(a^2))$.
\item\label{sigmatau1} If $\cC$ is of type $(\sigma,\delta,\rho)$ then $\sigma\ge \delta+\min\{1,\rho\}$.
\end{enumerate}
\end{lemm}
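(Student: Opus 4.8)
The plan rests on a coordinatewise analysis that exploits the fact that the $\Z_2$- and $\Z_4$-blocks are abelian, so all non-commutativity comes from the $\cQ_8$-blocks. First I would record the facts about $\cQ_8$ that drive everything: every element of order $4$ squares to the central involution $\ba^2$; the commutator of two elements of $\cQ_8$ is $\one$ if they commute and $\ba^2$ otherwise (because $\cQ_8'=\{\one,\ba^2\}$); any two non-commuting elements generate $\cQ_8$, since every proper subgroup of $\cQ_8$ is cyclic and hence abelian; and $Z(\cQ_8)=\{\one,\ba^2\}$. Since commutators are trivial on abelian coordinates, the non-trivial coordinates of $(a,b)$ are exactly the $\cQ_8$-coordinates $i$ in which $a_i$ and $b_i$ fail to commute, and in each such coordinate $a_i$ and $b_i$ both have order $4$, so that $a_i^2=b_i^2=\ba^2$ and $(a,b)_i=\ba^2$.

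With these observations items \ref{Conmutan} and \ref{Tres} are short. For \ref{Conmutan} I would assume $(a,b)=\be$ and $a^2=b^2$ and deduce $ab\in T(\cC)$: since $a$ and $b$ commute, $(ab)^2=a^2b^2=a^2a^2=a^4=\be$, using that every element of $\cG$ has order dividing $4$. For \ref{Tres}, the coordinate analysis above says that in each non-trivial coordinate $i$ of $(a,b)$ one has $a_i^2=b_i^2=(a,b)_i=\ba^2$, so the three elements coincide there. For the weight bound, note $\phi(\ba^2)=(1,1,1,1)$ has weight $4$ while $\phi(\one)=(0,0,0,0)$, so $\w(\Phi((a,b)))=4m$, where $m$ is the number of non-commuting $\cQ_8$-coordinates; on each of those same coordinates $a^2$ also equals $\ba^2$ and contributes weight $4$, whence $\w(\Phi((a,b)))\le \w(\Phi(a^2))$.

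The substantive claim is \ref{sigmatau1}, and since $\sigma\ge\delta$ is already established it suffices to produce an element of $T(\cC)$ outside the squares of the center when $\rho\ge 1$. I would set $\Omega=\{z^2:z\in Z(\cC)\}$; as $Z(\cC)$ is abelian, squaring is a homomorphism with kernel $T(\cC)$, so $\Omega\cong Z(\cC)/T(\cC)\cong\Z_2^{\delta}$ is a $\delta$-dimensional subspace of $T(\cC)$. Now $\rho\ge 1$ means $\cC\ne Z(\cC)$, i.e. $\cC$ is non-abelian, so there is a non-trivial commutator $c=(a,b)\in\cC'\subseteq T(\cC)$. The heart of the argument is to show $c\notin\Omega$: if $c=z^2$ with $z\in Z(\cC)$, choose a non-trivial $\cQ_8$-coordinate $i$ of $c$, so that $z_i^2=c_i=\ba^2$ and $z_i$ has order $4$; but $z$ centralizes both $a$ and $b$, so $z_i$ commutes with the non-commuting pair $a_i,b_i$, which generates $\cQ_8$, forcing $z_i\in Z(\cQ_8)=\{\one,\ba^2\}$ and contradicting $z_i^2=\ba^2$. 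Hence $\Omega\subsetneq T(\cC)$, giving $\sigma>\delta$ and therefore $\sigma\ge\delta+1=\delta+\min\{1,\rho\}$. The main obstacle is exactly this last claim, and it is where the quaternionic structure is indispensable: it is precisely the fact that a non-commuting pair generates all of $\cQ_8$ that squeezes any common centralizer into $Z(\cQ_8)$ and produces the contradiction.
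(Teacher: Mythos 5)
Your proofs of items \ref{Conmutan} and \ref{Tres} are correct and essentially identical to the paper's: item \ref{Conmutan} via $(ab)^2=a^2b^2=a^4=\be$ when $a,b$ commute and $a^2=b^2$, and item \ref{Tres} via the observation that a non-trivial coordinate of $(a,b)$ must be a $\cQ_8$-coordinate with $a_i,b_i$ non-commuting, hence both of order $4$ with $a_i^2=b_i^2=(a_i,b_i)=\ba^2$. For item \ref{sigmatau1} you take a genuinely different and equally valid route. The paper works with the distinguished generating set: it shows that the square of the non-central generator $z_1$ cannot lie in $\gen{y_1^2,\dots,y_\delta^2}$, because otherwise item \ref{Conmutan} (applied to $z_1$ and a central product of $y_j$'s, which automatically commute) would force some $y_1^{\alpha_1}\cdots y_\delta^{\alpha_\delta}z_1$ into $T(\cC)$, contradicting the normal form of elements with respect to that generating set; this yields $\gen{y_1^2,\dots,y_\delta^2,z_1^2}\cong\Z_2^{\delta+1}$ inside $T(\cC)$. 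You instead identify $\Omega=\{z^2:z\in Z(\cC)\}$ as a $\Z_2^\delta$ via the squaring homomorphism on the abelian group $Z(\cC)$ (with kernel $T(\cC)$), and exhibit a non-trivial commutator $c=(a,b)$ lying in $T(\cC)\setminus\Omega$: if $c=z^2$ with $z$ central, then at a non-trivial coordinate of $c$ the element $z_i$ would have order $4$ yet centralize the non-commuting pair $a_i,b_i$, which generates $\cQ_8$, forcing $z_i\in Z(\cQ_8)$ and giving a contradiction. Note that $\gen{y_1^2,\dots,y_\delta^2}$ is precisely your $\Omega$, so both arguments amount to exhibiting an element of $T(\cC)$ outside the squares of the center, but the witnesses differ (a generator's square versus a commutator) and so do the mechanisms (the normal form of the generating set together with item \ref{Conmutan}, versus the fact that a non-commuting pair generates $\cQ_8$ and so has central centralizer). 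Your version has the mild advantage of not depending on the chosen generating set; the paper's is the one that feeds naturally into the later normalization arguments. Both are complete.
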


\begin{proof}~
\ref{Conmutan}. Assume that $a^2=b^2$ and $(a,b)=\be$. As $a^4=\be$, we have $b^2=a^2=a^{-2}$ and hence $(ab)^2=a^2b^2=\be$.

\ref{Tres}. Let $a=(a_1,\dots,a_l)$ and $b=(b_1,\dots,b_l)$.
If the commutator $(a_i,b_i)$ is non-trivial then $a_i$ and $b_i$ are two non-commuting elements of $\cQ_8$ and hence $(a_i,b_i)=a_i^2=b_i^2$.

\ref{sigmatau1}. We know $\sigma \geq \delta$. If $\rho\ne 0$ and $z_1^2 \in \gen{y_1^{2},\dots, y_{\delta}^{2}}$ then, by item~\ref{Conmutan}, $y_1^{\alpha_1}\dots y_{\delta}^{\alpha_{\delta}}z_1 \in T(\cC)$ for some integers $\alpha_1,\dots,\alpha_{\delta}$, contradicting the construction of the generating set. Thus $\gen{y_1^2,\dots, y_{\delta}^2,z^2_1}$ is a subgroup of $T(\cC)$ isomorphic to $\Z_2^{\delta+1}$ and hence $\sigma\ge \delta+1$.
\end{proof}

\begin{defi}
The \emph{swapper} of $x,y\in \cG$ is
$$[x,y] = \Phi\inv(\Phi(x)+\Phi(y)+\Phi(xy)).$$
We will define the swapper of $\cC$ as the set, $$S(\cC) = \{[x,y] : x,y \in \cC\}.$$
\end{defi}

Note that if $x=(x_1,\dots,x_l),y=(y_1,\dots,y_l)\in \cC$ then
	\begin{equation}\label{SwapperComponents}
    [x,y] = ([x_1,y_1],\dots,[x_l,y_l]).
    \end{equation}
Therefore, to compute the swapper it is enough to compute the swapper in $\Z_2$, $\Z_4$ and $\cQ_8$. Clearly $[x,y]=\be$ for $x,y\in \Z_2$. The following tables describe the swapper in $\Z_4$ and $\cQ_8$:
\begin{center}
\begin{table}[h]
\label{Swappers}
\hspace{1cm}
\begin{tabular}{c|cc}
& 0,2 & 1,3 \\\hline
0,2 & 0 & 0 \\
1,3 & 0 & 2
\end{tabular}
\hspace{1cm}
\begin{tabular}{c|cccc}
& $\one$,$\ba^2$ & $\ba$,$\ba^3$ & $\bb$,$\ba^2\bb$ & $\ba\bb$,$\ba^3\bb$ \\\hline
$\one$,$\ba^2$ & $\one$ & $\one$ & $\one$ & $\one$\\
$\ba$,$\ba^3$  & $\one$ & $\ba^2$ & $\ba^2$ & $\one$\\
$\bb$,$\ba^2\bb$ & $\one$ & $\one$ & $\ba^2$ & $\ba^2$\\
$\ba\bb$,$\ba^3\bb$ & $\one$ & $\ba^2$ & $\one$ & $\ba^2$
\end{tabular}
\caption{Swappers}\label{table:1}
\end{table}
\end{center}

In particular $[x,y]\in T(\cG)$ and this implies that
	\begin{equation}\label{SwapperProperty}
	 \Phi([x,y]xy)=\Phi([x,y])+\Phi(xy) = \Phi(x)+\Phi(y),
	\end{equation}
i.e., the swapper of $x$ and $y$ is the element needed to pass from $\Phi(xy)$ to $\Phi(x)+\Phi(y)$.

Using (\ref{SwapperComponents}) and the swapper tables of $\Z_4$ and $\cQ_8$ one can easily prove the following properties about swappers.

\begin{lemm} \label{lem:prop} Let $x,y,z,t\in \cG$.
\renewcommand{\labelenumi}{\alph{enumi}$)$ }
 \begin{enumerate}
  \item If $z^2=\be$ then $[zx,y]=[x,zy]=[x,y]$ and $[z,x]=[x,z]=\be$.
  \item $[x,x^{-1}]=[x,x]$.
 \item $[x,y][y,x] = (x,y)$.
 \item  $[x,x]=x^2$.
 \item $[x,yz]=[x,y][x,z]$ and $[xy,z]=[x,z][y,z]$.
  \end{enumerate}
\end{lemm}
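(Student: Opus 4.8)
The plan is to verify each of the five identities by reducing everything to the component-wise behaviour of the swapper via (\ref{SwapperComponents}), so that the only genuine computation happens inside the three small groups $\Z_2$, $\Z_4$ and $\cQ_8$, where Table~\ref{table:1} gives a complete description. Since $[x,y]$ is always an element of $T(\cG)$, hence central and of order dividing $2$, each identity I want to prove is itself an equality of elements of $T(\cG)$, and by (\ref{SwapperComponents}) it suffices to check it coordinate by coordinate. In each coordinate the entry lies in $\Z_2$, $\Z_4$ or $\cQ_8$; the $\Z_2$ case is trivial because the swapper there is always $\be$, so in practice I only need to read off the two swapper tables.

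First I would dispatch parts (a)--(d), which are direct table lookups. For (a), if $z^2=\be$ then in every coordinate $z_i$ is the identity or the unique order-2 element (the rows/columns labelled $0,2$ in $\Z_4$ and $\one,\ba^2$ in $\cQ_8$); multiplying $x_i$ by such a $z_i$ does not change which row or column of the table is selected, since the table entries are indexed by the pairs $\{0,2\},\{1,3\}$ and $\{\one,\ba^2\},\dots$ rather than by individual elements. This gives $[zx,y]=[x,zy]=[x,y]$ at once, and the top row and left column of both tables being all zeros (resp.\ all $\one$) gives $[z,x]=[x,z]=\be$. Part (b) follows because $x^{-1}$ lies in the same $\{x,x^{-1}\}$-pair as $x$ in each coordinate, so it selects the same table entry as $x$ itself; part (d) is the assertion that the diagonal entries of the tables are exactly the squares, which one reads off directly ($(1,3)\mapsto 2=1^2$ in $\Z_4$, and $\ba^2$ on the diagonal for the order-4 classes in $\cQ_8$). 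Part (c) I would obtain by observing that each table is \emph{symmetric} except in the $\cQ_8$ block, where the two off-diagonal entries in a non-commuting pair differ; comparing $[x,y][y,x]$ coordinatewise against the commutator table for $\cQ_8$ shows the product equals $(x,y)$ (this is essentially the combination of Lemma~\ref{lemm:cuadrados}\ref{Tres} with the table, since in a non-trivial coordinate $(x_i,y_i)=x_i^2$).

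The one part that requires real care is the bilinearity in (e), $[x,yz]=[x,y][x,z]$ and $[xy,z]=[x,z][y,z]$, because here the \emph{input} is a product $yz$ and swappers are genuinely non-additive. By (\ref{SwapperComponents}) it again reduces to a single coordinate, and the cleanest route is to use the defining identity (\ref{SwapperProperty}): applying $\Phi$ to both sides and using that $\Phi$ is additive on the swapper part, the claim $[x,yz]=[x,y][x,z]$ is equivalent to the statement that the map $y\mapsto \Phi(x)+\Phi(y)-\Phi(xy)$ behaves additively in $y$ modulo the image of $T(\cG)$; since $T(\cG)$ is exactly where all swappers live and $\Phi$ restricted to it is a homomorphism onto the relevant $\Z_2$-coordinates, this reduces to a finite check. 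The main obstacle is precisely confirming this additivity inside the $\cQ_8$ block, where $yz$ may land in a different class-pair than either $y$ or $z$; I expect to handle it by a short exhaustive verification over the four class-pairs of $\cQ_8$ using the swapper table, which is routine but is the only place where one cannot simply quote symmetry or a diagonal. The second identity of (e) is symmetric to the first and follows by the same argument with the roles of the arguments exchanged.
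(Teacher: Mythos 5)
Your proposal is correct and follows essentially the same route as the paper, which disposes of this lemma in one line by invoking the componentwise decomposition (\ref{SwapperComponents}) and the swapper tables of $\Z_4$ and $\cQ_8$; your plan is simply a fleshed-out version of that table verification, including the finite check of multiplicativity over the class-pairs of $\cQ_8$ needed for part (e). The only quibble is the phrase ``additively \ldots modulo the image of $T(\cG)$'' in your discussion of (e) --- the identity is an exact equality in $T(\cG)$, not a congruence --- but the exhaustive check you then propose establishes exactly that, so nothing is missing.
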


By Lemma~\ref{lem:prop}, for every $x\in \cG$ the maps
    $$\matriz{{rcccrcc} [x,-]:\cG& \rightarrow & T(\cG)  &       & [-,x]:\cG& \rightarrow & T(\cG) \\
                                y& \mapsto     & [x,y] & \quad &        y & \mapsto & [y,x]}$$
are group homomorphisms and their kernels contain $T(\cG)$. For a subgroup $\cC$ of $\cG$ let
    $$K(\cC) = \{x \in \cC : [x,y]\in \cC \text{ for every } y\in \cC\} = \{x \in \cC : [y,x]\in \cC \text{ for every } y\in \cC\}.$$
The equality of the above two sets is a consequence of item \emph{c)} in Lemma~\ref{lem:prop}. Moreover by item~\emph{e)} in Lemma~\ref{lem:prop}, $K(\cC)$ is a subgroup of $\cC$.

The following lemma is a consequence of the definition of swapper and the fact that $T(\cC)\subseteq \ker [x,-]$ for every $x$.

\begin{lemm}[Lower bound for $k(C)$]\label{lem:k}
Let $\cC$ be a $\Z_2\Z_4\cQ_8$-code of type $(\sigma,\delta,\rho)$. Then $\Phi(T(\cC)) \subseteq \Phi(K(\cC))= K(\Phi(C))$ and hence $\delta\le \sigma\le k(C)$.
\end{lemm}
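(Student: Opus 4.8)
The plan is to prove the three assertions in turn, working on the group side and transporting everything through $\Phi$, which is injective on $\cG$; thus for subsets $A,B\subseteq\cG$ one has $\Phi(A)\subseteq\Phi(B)$ if and only if $A\subseteq B$. With this, the first inclusion $\Phi(T(\cC))\subseteq\Phi(K(\cC))$ reduces to $T(\cC)\subseteq K(\cC)$, which is immediate from Lemma~\ref{lem:prop}(a): if $z\in T(\cC)$ then $z^2=\be$, so $[z,y]=\be\in\cC$ for every $y\in\cC$, whence $z\in K(\cC)$ by the definition of $K(\cC)$.

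The core is the equality $\Phi(K(\cC))=K(C)$, and the engine is the swapper identity (\ref{SwapperProperty}), which I read in the form $\Phi(y)+\Phi(x)=\Phi([y,x]\,yx)$. First I would take $x\in K(\cC)$: for every $y\in\cC$ we have $yx\in\cC$ and $[y,x]\in\cC$, so $[y,x]\,yx\in\cC$ and hence $\Phi(y)+\Phi(x)\in\Phi(\cC)=C$. This gives $C+\Phi(x)\subseteq C$, and equality follows because translation by a fixed vector is a bijection of $\Z_2^n$ and $C$ is finite; thus $\Phi(x)\in K(C)$ and $\Phi(K(\cC))\subseteq K(C)$. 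For the reverse inclusion I would start from $v\in K(C)$ and first observe $v\in C$: since $\zero=\Phi(\be)\in C$ we get $v=\zero+v\in C+v=C$, so $v=\Phi(x)$ for a unique $x\in\cC$. Running the computation backwards, for every $y\in\cC$ the element $\Phi(y)+\Phi(x)=\Phi([y,x]\,yx)$ lies in $C=\Phi(\cC)$, so by injectivity $[y,x]\,yx\in\cC$, and since $yx\in\cC$ this forces $[y,x]\in\cC$ for all $y\in\cC$, i.e.\ $x\in K(\cC)$. Hence $v\in\Phi(K(\cC))$.

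Finally, for the numerical bound $\delta\le\sigma\le k(C)$ the inequality $\delta\le\sigma$ is already recorded before the statement. For $\sigma\le k(C)$ the point to get right is that $\Phi(T(\cC))$ is not merely a subset but a linear subspace of $K(C)$: by Lemma~\ref{lem:prop}(a) the swapper of any two elements of $T(\cG)$ is trivial, so (\ref{SwapperProperty}) yields $\Phi(zw)=\Phi(z)+\Phi(w)$ for $z,w\in T(\cG)$; thus $\Phi$ restricts to an injective group homomorphism on $T(\cG)$, and $\Phi(T(\cC))$ is a $\Z_2$-subspace of dimension $\sigma$ (as $T(\cC)\cong\Z_2^{\sigma}$). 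Since it sits inside $K(C)$, which is linear because $\zero\in C$, we conclude $k(C)=\dim K(C)\ge\sigma$.

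The step I expect to require the most care is the reverse inclusion $K(C)\subseteq\Phi(K(\cC))$: one must first establish $K(C)\subseteq C$ (using $\zero\in C$) in order to lift an arbitrary kernel vector to a group element of $\cC$, and then invoke injectivity of $\Phi$ to pass from $\Phi([y,x]\,yx)\in\Phi(\cC)$ back to the membership $[y,x]\in\cC$ that defines $K(\cC)$. The passage from the set inclusion $\Phi(T(\cC))\subseteq K(C)$ to the dimension inequality likewise hinges on the explicit verification that $\Phi$ is additive on $T(\cG)$, which is precisely what turns a cardinality statement into a genuine lower bound on $k(C)$.
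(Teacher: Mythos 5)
Your proof is correct and follows exactly the route the paper indicates (the paper leaves this lemma as "a consequence of the definition of swapper and the fact that $T(\cC)\subseteq \ker [x,-]$", and your argument via the identity $\Phi(y)+\Phi(x)=\Phi([y,x]yx)$ together with the observation that $\Phi$ is additive on $T(\cG)$ is precisely the intended filling-in of those details). The care you take with the reverse inclusion $K(C)\subseteq\Phi(K(\cC))$ and with upgrading the set inclusion to a dimension bound is exactly what is needed and is all in order.
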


Whenever we have a quotient group $G/N$, with $G$ a group and $N$ a normal subgroup of $G$, and the meaning be clear from the context we use the standard bar notation, i.e., if $g\in G$ then $\overline{g}=gN$, the $N$-coset containing $g$.

\begin{lemm}[Upper bound for $r(C)$]\label{lem:KernelSpan}
Let $\cC$ be a $\Z_2\Z_4\cQ_8$-code of type $(\sigma,\delta,\rho)$. Let $\cD=\gen{\cC\cup S(\cC)}$, the group generated by $\cC$ and the swappers of the elements of $\cC$.
Then
\begin{enumerate}
\item $\Phi(\cD)$ is the binary linear span of $C$;
\item if $|\cC|=2^m$ then $r(C)\le m+\binom{m-k(C)}{2}$;
\item $r(C)\le \sigma+\delta+\rho+h$ with $h\le \min\left\{\binom{\delta+\rho}{2},l-\sigma\right\}$.
\end{enumerate}
\end{lemm}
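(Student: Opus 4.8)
The plan is to prove items~1--3 in sequence, deriving both numerical bounds from a single quantity, namely the index $[\cD:\cC]$, once item~1 identifies $\Phi(\cD)$ with the linear span $\overline{C}$. The workhorse throughout is the rearranged swapper identity $\Phi(xy)=\Phi(x)+\Phi(y)+\Phi([x,y])$ coming from (\ref{SwapperProperty}), together with the fact that every swapper lies in the central, elementary abelian subgroup $T(\cG)$ of $\cG$.

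For item~1 I would first observe that, since $\gen{S(\cC)}\subseteq T(\cG)$ is central and consists of involutions, every element of $\cD$ factors as $w=cs$ with $c\in\cC$ and $s\in\gen{S(\cC)}$; hence $\cD=\cC\gen{S(\cC)}$. By part (a) of Lemma~\ref{lem:prop} these central involutive factors can be absorbed, so $[w_1,w_2]=[c_1,c_2]\in S(\cC)\subseteq\cD$ for all $w_1,w_2\in\cD$. Combined with the swapper identity, $\Phi(w_1)+\Phi(w_2)=\Phi([w_1,w_2]w_1w_2)\in\Phi(\cD)$, so $\Phi(\cD)$ is closed under addition; containing $\zero$ and $C$, it is a linear code containing $C$, whence $\overline{C}\subseteq\Phi(\cD)$. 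Conversely, writing any $w\in\cD$ as a product of generators and applying the swapper identity inductively shows $\Phi(w)\in\overline{C}$: each generator maps to a codeword or to a swapper image $\Phi(x)+\Phi(y)+\Phi(xy)$, and the cross-term swappers reduce again to elements of $S(\cC)$. Thus $\Phi(\cD)=\overline{C}$.

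As $\Phi$ is a bijection, item~1 gives $r(C)=\log_2|\overline{C}|=\log_2|\cD|$, so writing $[\cD:\cC]=2^h$ yields $r(C)=m+h$ for item~2 (and $r(C)=\sigma+\delta+\rho+h$ for item~3, since $|\cC|=2^{\sigma+\delta+\rho}$). The crux is bounding $h=\dim_{\Z_2}(\cD/\cC)$. Here the swapper induces a map $\beta\colon \cC/K(\cC)\times \cC/K(\cC)\to \cD/\cC$, $\beta(\overline{x},\overline{y})=\overline{[x,y]}$, which is well defined by the definition of $K(\cC)$, bilinear by part (e) of Lemma~\ref{lem:prop}, and alternating because $[x,x]=x^2\in T(\cC)\subseteq\cC$ by part (d). Since $\cD/\cC$ is generated by swapper images, $\beta$ is surjective and factors through the exterior square $\bigwedge^2(\cC/K(\cC))$, of dimension $\binom{d}{2}$ with $d=\dim_{\Z_2}(\cC/K(\cC))$; hence $h\le\binom{d}{2}$. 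Because $\Phi(K(\cC))=K(C)$ by Lemma~\ref{lem:k}, we have $d=m-k(C)$, which gives item~2.

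Item~3 then follows from two estimates on $h$. For the binomial bound, $T(\cC)\subseteq K(\cC)$ forces $\cC/K(\cC)$ to be a quotient of $\cC/T(\cC)\cong\Z_2^{\delta+\rho}$, so $d\le\delta+\rho$ and $h\le\binom{\delta+\rho}{2}$ by monotonicity of $\binom{\cdot}{2}$. For the linear bound, $\gen{S(\cC)}\subseteq T(\cG)\cong\Z_2^{l}$ while $\gen{S(\cC)}\cap\cC\subseteq T(\cG)\cap\cC=T(\cC)\cong\Z_2^{\sigma}$, so $\cD/\cC\cong\gen{S(\cC)}/(\gen{S(\cC)}\cap\cC)$ embeds into $T(\cG)/T(\cC)$ and $h\le l-\sigma$. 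The step I expect to require most care is the swapper reduction underlying item~1 --- verifying that the central involutive factors from $\gen{S(\cC)}$ never contribute to either a sum $\Phi(w_1)+\Phi(w_2)$ or a further swapper --- since this is exactly what makes $\Phi(\cD)$ linear and legitimizes the clean description $r(C)=m+h$ used in the remaining parts.
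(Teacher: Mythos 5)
Your proof is correct and follows essentially the same route as the paper: item 1 via the decomposition $\cD=\cC\gen{S(\cC)}$ with central involutive swappers being absorbed, and items 2--3 by bounding $h=\log_2[\cD:\cC]$ using that swappers vanish modulo $\cC$ on $K(\cC)$ and that $\gen{S(\cC)}$ sits inside $T(\cG)$. The only cosmetic difference is that you package the paper's explicit expansion of $[c,c']$ over coset representatives $a_1,\dots,a_t$ of $K(\cC)$ in $\cC$ as a surjective linear map $\bigwedge^2(\cC/K(\cC))\to\cD/\cC$, which yields the identical bounds $h\le\binom{m-k(C)}{2}\le\binom{\delta+\rho}{2}$ and $h\le l-\sigma$.
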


\begin{proof}
By (\ref{SwapperProperty}), it is clear that $\Phi(\cD)$ is included in the linear span of $C$. To prove the inverse it is enough to show that $\Phi(\cD)$ is closed under addition. To see this let $x_1,x_2\in \cD$.
As all the swappers have order at most 2 we have $x_1=b_1c_1$ and $x_2=b_2c_2$ with $b_1,b_2\in \cC$ and $c_1,c_2\in \gen{[c,c']:c,c'\in \cC}$. Then, by (\ref{SwapperProperty}) and item a) of Lemma~\ref{lem:prop} we have
	$$\Phi(x_1)+\Phi(x_2) = \Phi(x_1x_2[x_1,x_2]) = \Phi(b_1b_2c_1c_2[x_1,x_2]) \in \Phi(\cD),$$
as desired.

Let $\cC=\gen{K(\cC),a_1,\dots,a_t}$ with $t$ minimal. Since $T(\cC)\subseteq K(\cC)$, we have $a_i^2\in K(\cC)$ for every $i$ and $t\le \delta+\rho$. Let $c,c'\in \cC$. Then $c=x \prod_{i=1}^t a_i^{\alpha_i}$ and $c'=x'\prod_{i=1}^t a_i^{\beta_i}$ with $x,x'\in K(\cC)$ and each $\alpha_i,\beta_i\in\{0,1\}$. Using Lemma~\ref{lem:prop} we have
	\begin{eqnarray*}
	 [c,c'] &=&
		[x,x']\prod_{i=1}^t [x,a_i]^{\beta_i} [a_i,x']^{\alpha_i}\prod_{1\le i,j \le t} [a_i,a_j]^{\alpha_i\beta_j}\\
					&=& [x,x']\prod_{i=1}^t [x,a_i]^{\beta_i} [a_i,x']^{\alpha_i}a_i^{2\alpha_i\beta_i} \prod_{1\le i<j< t} (a_i,a_j)^{\alpha_j\beta_i}[a_i,a_j]^{\alpha_i\beta_j+\alpha_j\beta_i}.
	\end{eqnarray*}
By Lemma~\ref{lem:k}, $[x,x']\prod_{i=1}^t [x,a_i]^{\beta_i} [a_i,x']^{\alpha_i} a_i^{2\alpha_i\beta_i} \prod_{1\le i<j< h} (a_i,a_j)^{\alpha_j\beta_i}\in \cC$.
This proves $\cD\subseteq \gen{\cC,[a_i,a_j]:1\le i<j\le t}$. The reverse inclusion is obvious, hence $\cD=\gen{\cC\cup \{[a_i,a_j]:1\le i<j\leq t\}}$.

Now, from Lemma~\ref{lem:k} and having in mind that $|S(\cC)|\le |T(\cG)|\le 2^{k_1+k_2+k_3}=2^l$ we have that $\cD$ is of type $(\sigma+h,\delta,\rho)$ and so $r(C)= \sigma+\delta+\rho+h$ with $h\le \binom{t}{2}\le \binom{\delta+\rho}{2}$ and $h\le l-\sigma$.
\end{proof}

\begin{lemm}\label{lineal}
Let $C$ be a \zz code. If $|C|\leq 8$, then $C$ is also a $\Z_2$-linear code.
\end{lemm}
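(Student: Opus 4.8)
The plan is to reduce linearity to the rank bound of Lemma~\ref{lem:KernelSpan} and then use the size hypothesis to eliminate the only source of nonlinearity, namely the ``cross-swappers'' between distinct order-$4$ generators. First I would record the consequences of $C$ being \zz. By the remark in Section~\ref{sec:preliminaris}, $\cC$ is abelian, so $\cC\cong \Z_2^{\gamma}\times\Z_4^{\delta}$ and $\cC$ has type $(\sigma,\delta,\rho)=(\gamma+\delta,\delta,0)$; in particular $\rho=0$. Since $\Phi$ is a bijection, $|C|=|\cC|=2^{\gamma+2\delta}=2^{\sigma+\delta}$. Hence the hypothesis $|C|\le 8$ reads $\gamma+2\delta\le 3$, which forces $\delta\le 1$ (if $\delta\ge 2$ then $\gamma+2\delta\ge 4$).

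Next I would feed $\delta\le 1$ and $\rho=0$ into Lemma~\ref{lem:KernelSpan}(3). The bound there gives $r(C)\le \sigma+\delta+\rho+h$ with $h\le\binom{\delta+\rho}{2}=\binom{\delta}{2}=0$, since both $\delta=0$ and $\delta=1$ yield $\binom{\delta}{2}=0$. Thus $r(C)\le \sigma+\delta=\log_2|C|$. On the other hand the linear span $\overline{C}$ always satisfies $\overline{C}\supseteq C$, so $2^{r(C)}=|\overline{C}|\ge|C|$ and $r(C)\ge\log_2|C|$. Combining the two inequalities gives $r(C)=\log_2|C|$, whence $|\overline{C}|=|C|$ and $\overline{C}=C$; that is, $C$ is a $\Z_2$-linear code.

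The conceptual heart, and the only step that really uses the hypothesis, is the observation that $|C|\le 8$ limits $\cC$ to at most one direct factor $\Z_4$, i.e.\ to at most one generator $y_1$ of order $4$. Equivalently, one could bypass the rank count and argue directly from the swapper table of $\Z_4$: the swapper $[x,y]$ is nontrivial only in $\Z_4$-coordinates where both $x$ and $y$ are odd, and since elements of order $2$ lie in the kernel of $[-,-]$ (Lemma~\ref{lem:prop}, items a and d), every swapper of elements of $\cC$ reduces, via the homomorphism property, to a product of terms $[y_1,y_1]=y_1^2\in\cC$. Hence $S(\cC)\subseteq\cC$, and by Lemma~\ref{lem:KernelSpan}(1) this is exactly the assertion that $\Phi(\cD)=C$, i.e.\ linearity. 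I do not anticipate a genuine obstacle here; the only point requiring a little care is to treat the degenerate case $\delta=0$ (where $\cC=T(\cC)$ already maps into $K(C)$ by Lemma~\ref{lem:k}, so $C$ is trivially linear) uniformly with $\delta=1$, which the rank bound handles automatically.
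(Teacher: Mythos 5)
Your proof is correct, but it takes a genuinely different route from the paper's. The paper argues by contradiction through the \emph{kernel}: if $C$ is nonlinear then $|C|>2|K(C)|$, i.e.\ $k(C)+1<\sigma+\delta\le 3$, and combining this with the lower bound $\delta\le\sigma\le k(C)$ of Lemma~\ref{lem:k} forces $k(C)=\sigma=\delta=0$, which is absurd. You instead go through the \emph{rank}: the size hypothesis gives $\gamma+2\delta\le 3$, hence $\delta\le 1$, and then Lemma~\ref{lem:KernelSpan}(3) with $\rho=0$ yields $h\le\binom{\delta}{2}=0$, so $r(C)\le\sigma+\delta=\log_2|C|$ and the span collapses onto $C$. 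Both are sound; your identity $|C|=2^{\gamma+2\delta}=2^{\sigma+\delta}$ and the application of the rank bound are correct, as is the direct swapper argument in your closing paragraph ($S(\cC)\subseteq\cC$ when there is at most one order-$4$ generator, by Lemma~\ref{lem:prop}). What your route buys is a sharper statement: since $|C|\le 8$ is used only to force $\delta\le 1$, you have in fact shown that \emph{every} \zz code with at most one $\Z_4$ direct factor is $\Z_2$-linear, with no restriction on $|C|$, and you identify the structural reason for linearity (no cross-swappers between distinct order-$4$ generators) rather than deriving a numerical contradiction. The paper's kernel argument is marginally more self-contained, needing only the elementary inequality $|C|>2|K(C)|$ and Lemma~\ref{lem:k} rather than the heavier Lemma~\ref{lem:KernelSpan}.
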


\begin{proof}
 Assume that $\cC$ is of type $(\sigma,\delta,\rho)$. As $\cC$ is commutative $\rho=0$. If $C$ is not linear then $8\ge 2^{\sigma+\delta}=|C|>2|K(C)|=2^{k(C)+1}$. Hence, by Lemma~\ref{lem:k}, we have $\delta\le \sigma \le k(C)\le 1$ and so $\sigma+\delta \leq 2$. Thus, $k(C)+1<\delta+\sigma\le 2$ and so $k(C)=\delta=\sigma=0$, a contradiction. Hence, $C$ is $\Z_2$-linear.
\end{proof}

We now present a ``pure'' $\cQ_8$-code.

\begin{prop} \label{Q8Example}
Let $\cC$ be the quaternionic code $\cC=\gen{(\ba,\ba),
(\ba\bb,\bb)} \leq \cQ_8^2$. Let $C=\Phi(\cC)$ be the corresponding $\cQ_8$-code. Then, $C$ is not a \zz code.
\end{prop}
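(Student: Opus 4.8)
The plan is to compute the type $(\sigma,\delta,\rho)$ of $\cC$ and exhibit a structural obstruction that no abelian (i.e. \zz) code can satisfy. Concretely, I would first determine the group structure of $\cC=\gen{(\ba,\ba),(\ba\bb,\bb)}$ as an abstract group, and then invoke Lemma~\ref{lemm:cuadrados} to show that the type forced by $\cC$ is incompatible with commutativity. Recall from the preliminaries that a \zz code corresponds to an \emph{abelian} subgroup $\cC$; equivalently, $C$ is a \zz code if and only if $\cC$ can be taken abelian. So it suffices to show $\cC$ is non-abelian, and moreover that $C$ admits no abelian translation-invariant propelinear structure at all---but the cleanest route is to pin down a parameter of $C$ (rank or kernel dimension) that a \zz code of the same size cannot achieve.

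\textbf{Key computational steps.}
First I would check that $\cC$ is non-abelian by computing the commutator of the two generators. Writing $g_1=(\ba,\ba)$ and $g_2=(\ba\bb,\bb)$, I would compute $(g_1,g_2)$ componentwise: in $\cQ_8$ the commutator $(\ba,\ba\bb)$ and $(\ba,\bb)$ are each $\ba^2$ (since $\ba$ and $\bb$ anticommute), so I expect $(g_1,g_2)=(\ba^2,\ba^2)=\bu$, the all-ones element, which is nontrivial. This shows $\cC$ is non-abelian, so $\rho\geq 1$. Next I would compute $|\cC|$ by listing the subgroup generated by $g_1,g_2$ (it should have order $8$, giving $|C|=8$), and determine $T(\cC)$, $Z(\cC)$. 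I expect to find $\sigma=1$ with $T(\cC)=\gen{\bu}$ of order $2$, $\delta=0$, and $\rho=2$, so $\cC$ is of type $(1,0,2)$---exactly the type of $\cQ_8$ itself noted after Definition~\ref{def:type}.

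\textbf{The decisive contradiction.}
With $|C|=8$, the result is immediate from Lemma~\ref{lineal}: if $C$ were a \zz code of size at most $8$, then $C$ would be $\Z_2$-linear. So the plan reduces to proving that $C$ is \emph{not} linear, which I would do by computing the binary span via Lemma~\ref{lem:KernelSpan}: since $\cC$ is non-abelian there is a nontrivial swapper $[g_1,g_2]$ with $\Phi([g_1,g_2])$ lying outside $C$, forcing $\dim\overline{C}>\log_2|C|=3$, hence $r(C)>3$ and $C$ nonlinear. This contradicts Lemma~\ref{lineal}, so $C$ cannot be a \zz code.

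\textbf{The main obstacle.}
The hard part will not be the abstract argument but the explicit $\cQ_8$-arithmetic: I must correctly enumerate the eight elements of $\cC$ and verify that the image $\Phi(\cC)$ under the quaternionic Gray map is a genuinely nonlinear binary code of length $8$. The subtlety is that $\cC$ being non-abelian does not by itself prevent $C=\Phi(\cC)$ from being linear (the remark after Definition~\ref{def:type} warns that $\cQ_8$ alone has nonabelian type $(1,0,2)$ yet gives a \emph{linear} code). Thus the crux is to confirm that the swapper $[g_1,g_2]$ (equivalently, that $\Phi(g_1)+\Phi(g_2)\neq\Phi(g_1g_2)$ up to a codeword of $C$) produces a vector genuinely escaping the code, which is precisely where the two coordinates of $\cQ_8^2$ interact nontrivially and distinguish this example from the single-block case.
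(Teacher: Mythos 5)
Your proposal is correct and follows essentially the same route as the paper: the paper lists the eight elements of $\cC$, computes the swapper of the two generators, observes that it does not lie in $\cC$ (so $C$ is nonlinear by Lemma~\ref{lem:KernelSpan}), and concludes via Lemma~\ref{lineal}. You also correctly flag the one point that needs explicit verification --- non-abelianness alone does not force the swapper outside $\cC$ --- and that verification (the swapper is $(\one,\ba^2)$, resp. $(\ba^2,\one)$ in the other order, neither of which is among the eight codewords) is exactly the computation the paper performs.
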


\begin{proof}
$\cC$ has eight elements, namely
\begin{eqnarray*}
\cC =& \{(\one ,\one ),(\ba,\ba),(\ba^2,\ba^2),(\ba^3,\ba^3),(\ba\bb,\bb),(\ba^2\bb,\ba\bb),(\ba^3\bb,\ba^2\bb),(\bb,\ba^3\bb)\}.
\end{eqnarray*}
The swapper $[(\ba,\ba),(\ba\bb,\bb)]=(\ba^2,\one)\not\in \cC$ and hence, by Lemma~\ref{lem:KernelSpan}, $C$ is not linear.
Finally, by Lemma~\ref{lineal}, $C$ is not \zz.
\end{proof}

\begin{rema}
\rm  The type of the group $\cC$ of Proposition~\ref{Q8Example} is $(\sigma,\delta,\rho)=(1,0,2)$. Let $C=\Phi(\cC)$ and let $k=k(C)$ and $r=r(C)$. By Lemma~\ref{lem:KernelRank} we have $r\ge k+3$, by Lemma~\ref{lem:k}, we have $k\ge \sigma=1$; by Lemma~\ref{lem:KernelSpan}, we have $r\le \sigma+\delta+\rho+\binom{\delta+\rho}{2}=1+2+1=4$ and by statement~\ref{sigmatau1} of Lemma~\ref{lemm:cuadrados}, $\sigma\ge \delta+\min\{1,\rho\}=1$. In this example the previous bounds on $\sigma$, the rank and the dimension of the kernel are absolutely tight, we have $\sigma=1$, $k=1$ and  $r=4$.\end{rema}

\section{Hadamard $\Z_2\Z_4\cQ_8$-codes}\label{sec:Hadamard}

In this section we will focus on Hadamard $\Z_2\Z_4\cQ_8$-codes. But, first of all, we shall begin seeing that the usual companions of the Hadamard codes, that is the (extended) perfect codes which are $\Z_2\Z_4\cQ_8$-codes, do not exist for binary length $n>8$, except for those which are \zz codes.
However we will present a number of Hadamard $\Z_2\Z_4\cQ_8$-codes. The main result of this section is Theorem~\ref{ClasificacionHadamard} which provides a classification of Hadamard $\qq$-codes in terms of its structure.
For Hadamard $\qq$-codes we also refine the upper bound for the rank given in Lemma~\ref{CuadradosIndependientes} for arbitrary $\qq$-codes.
As an application we classify the Hadamard codes of length 16 which are $\qq$-codes.
More precisely, for any $n=2^m$, there is an unique (up to isomorphism) extended Hamming code of length $n$ and, since the dual of an extended Hamming code is a Hadamard code~\cite{mac}, for the same length $n$ it always exist Hadamard codes. But, there are much more non isomorphic Hadamard codes. As an example, there are exactly five non isomorphic Hadamard codes of length 16~\cite{Key}.
One of them is the linear Hadamard code. The other four have the following parameters for the rank $r$ and the dimension of the kernel $k$:
$(r,k) \in \{ (6,3), (7,2), (8,2), (8,1) \}$ \cite{PRV05}.
The Hadamard code with parameters $(r,k)=(6,3)$ is a \zz code, and the
other three nonlinear Hadamard codes are not \zz \cite{PRV06}.
In Proposition \ref{Q8linearN16}, we show that the one with parameters  $(r,k)=(7,2)$ is
a $\cQ_8$-code. Moreover, we will see that the remaining two codes, the ones with parameters
$(8,2)$ and $(8,1)$, are not $\Z_2\Z_4\cQ_8$-codes (see Example~\ref{Hadamar16}). This is a consequence of an analysis of the structure and relations of the type and parameters of Hadamard $\Z_2\Z_4\cQ_8$-codes.

It is well known that there exist \zz perfect codes and extended perfect Hadamard codes~\cite{br,kr}, but we are interested in (extended) perfect $\Z_2\Z_4\cQ_8$-codes where the quaternion group is involved.

\begin{theo}
 If $n=k_1+2k_2+4k_3$ with $k_3>0$ then there is a perfect (respectively, extended perfect) binary code of the form $\Phi(\cC)$ with $\cC$ a subgroup of $\Z_2^{k_1}\times \Z_4^{k_2}\times \cQ_8^{k_3}$ if and only if $n=7$ and $(k_1,k_2,k_3)=(3,0,1)$ (respectively, either $n=8$ and $(k_1,k_2,k_3)\in \{(4,0,1)$, $(0,2,1),(0,0,2)\}$; or $n=4$ and $(k_1,k_2,k_3)=(0,0,1)$.)
\end{theo}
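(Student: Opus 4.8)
The plan is to separate the two necessary conditions from the two families of constructions, and to isolate one combinatorial lemma that does most of the work. First I would record the elementary necessary conditions from the sphere-packing bound: a $1$-error-correcting perfect code of length $n$ forces $n=2^m-1$, minimum distance $3$ and $|C|=2^{n-m}$, while an extended perfect code has $n=2^m$, minimum distance $4$, $|C|=2^{n-m-1}$, and its puncture at a suitable coordinate is perfect of length $n-1$. Since $k_3>0$ gives $n\ge4$, a perfect code has $n\in\{7,15,\dots\}$ and an extended perfect code has $n\in\{4,8,16,\dots\}$.

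The key lemma I would prove is that for $n\ge4$ (perfect), respectively $n\ge8$ (extended perfect), $C$ can have neither an all-zero coordinate nor two coordinates that agree on every codeword. This is a short count: if coordinates $p,q$ coincide on all codewords, then each point $w$ with $w_p\ne w_q$ is covered only by flipping $w_p$ or $w_q$, so each codeword covers exactly two such points; matching $2|C|$ against the $2^{n-1}$ unequal points (and puncturing away from $p,q$ in the extended case) forces $n=3$, resp. $n=4$, and an all-zero coordinate forces $n=1$. Reading this back through $\phi$ and $\varphi$ is the crux: a $\cQ_8$-coordinate with $\pi_i(\cC)$ trivial or cyclic produces two equal bits inside its block (e.g. $\pi_i(\cC)\subseteq\gen{\ba}$ gives $\phi(\ba^k)=(\varphi(k),\varphi(k))$), and the only non-cyclic subgroup of $\cQ_8$ is $\cQ_8$; a $\Z_4$-coordinate with $\pi_i(\cC)\subseteq\{0,2\}$ repeats its two bits. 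Hence in these ranges every $\cQ_8$-coordinate satisfies $\pi_i(\cC)=\cQ_8$ (so $\cC$ is non-abelian and $C$ is proper, cf. Proposition~\ref{Q8Example}), every $\Z_4$-coordinate surjects onto $\Z_4$, and every $\Z_2$-coordinate is used.

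From full projection I extract linear relations. The restriction of $C$ to the four bits of a $\cQ_8$-block is $\phi(\cQ_8)$, the even-weight $[4,3,2]$ code, whose only parity check is the sum of the four bits; a full $\Z_4$-block restricts to all of $\Z_2^2$ and contributes nothing. Thus $C^{\perp}$ contains $k_3$ weight-$4$ words with pairwise disjoint supports, so $d(C^{\perp})\le4$. I would then compare against the dual-weight structure of the ambient perfect/extended perfect codes: since the linear span $\overline{C}$ has covering radius $\le1$ (resp. $\le2$), the presence of a weight-$4$ word in $C^{\perp}=\overline{C}^{\perp}$ is compatible only with $m\le3$, because for $m\ge4$ the admissible codes (Hamming/extended Hamming and their covering-radius-bounded supercodes) have dual minimum weight $2^{m-1}\ge8$. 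This kills $n\ge15$ (perfect) and $n\ge16$ (extended), leaving $n=7$ and $n\in\{4,8\}$. It then remains to decide the triples: at $n=7$ only $(3,0,1)$ and $(1,1,1)$ have $k_3>0$, and at $n=8$ only $(4,0,1),(0,2,1),(0,0,2)$ and $(2,1,1)$; I would eliminate the mixed triples $(1,1,1)$ and $(2,1,1)$ by a finite check, showing that a full $\Z_4$-coordinate beside a full $\cQ_8$-coordinate over-determines the weights and orders and cannot reproduce the unique perfect (resp. extended perfect) code of that length. For the survivors I would exhibit explicit generators: $\cC=\gen{\bu}$ at $(0,0,1)$ giving $\{0000,1111\}$; a suitable order-$16$ subgroup of $\cQ_8^2$ at $(0,0,2)$ and of $\Z_2^3\times\cQ_8$ at $(3,0,1)$ whose images are the $[8,4,4]$ and Hamming $[7,4,3]$ codes; and analogously for $(4,0,1),(0,2,1)$, verifying $|C|$ and the minimum distance and invoking the sphere-packing bound.

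The hard part will be the uniform elimination of large $n$. Because perfect and extended perfect codes need not be linear, one cannot simply cite the Hamming/Reed–Muller dual minimum distance; the genuine content is to show that the $k_3$ disjoint weight-$4$ parity relations forced by the quaternionic blocks are incompatible with a covering-radius-$\le1$ (resp. $\le2$) linear span once $m\ge4$, together with dispatching the fiddly mixed $\Z_4$–$\cQ_8$ triples at $n=7$ and $n=8$.
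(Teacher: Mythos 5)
Your route is genuinely different from the paper's and its skeleton is sound, but the pivotal step --- eliminating $m\ge 4$ --- is not actually proved, and the justification you offer for it would not survive scrutiny. You correctly observe that $\phi(\cQ_8)$ consists entirely of even-weight vectors, so the weight-$4$ indicator of each $\cQ_8$-block lies in $\overline{C}^{\perp}$ (this needs no surjectivity of the projections, so your ``no repeated coordinates'' lemma is dispensable for the main elimination). But you then assert that a weight-$4$ word in $\overline{C}^{\perp}$ is impossible for $m\ge 4$ ``because the admissible codes (Hamming/extended Hamming and their covering-radius-bounded supercodes) have dual minimum weight $2^{m-1}$''. For a nonlinear perfect code the linear span is not tied to a Hamming code in any controlled way (perfect codes of length $15$ of full rank exist), and covering radius $\le 1$ by itself does not constrain dual weights: the even-weight code has covering radius $1$ and a dual word of weight $n$. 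You flag this yourself as ``the genuine content'' still to be shown, so as written the proof has a hole exactly where the theorem's difficulty sits; the mixed triples $(1,1,1)$ and $(2,1,1)$ are likewise deferred to an unspecified finite check.

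The hole is closed by the very count you already perform for two agreeing coordinates, applied to the support $S$ of the block indicator: every codeword meets $S$ evenly, so the vectors meeting $S$ oddly (restricted to odd total weight in the extended case) are exactly the $|C|\cdot|S|$ vectors $c+e_i$ with $c\in C$, $i\in S$; comparing with their number, $2^{n-1}$ resp.\ $2^{n-2}$, forces $|S|=2^{m-1}$ (or $|S|=n$ in the extended case). With $|S|=4$ this yields $n=7$, resp.\ $n\in\{4,8\}$, in one stroke and in particular disposes of the paper's hardest subcase ($k_1=0$, $k_2\ge 6$) more cleanly than the paper does. The paper's own argument is local and more elementary: a single weight-$3$ vector such as $(x\,|\,1,0,0,0\,|\,1,0,0,0)$ straddling two $\cQ_8$-blocks cannot lie at distance $1$ from a weight-$4$ codeword, because codewords meet each $\cQ_8$-block evenly; this kills $k_3\ge 2$ at once, and the cases $k_3=1$ are settled by puncturing down to the perfect case (which the paper simply cites from the literature) together with a pigeonhole on the six order-$4$ elements of $\cQ_8$. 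In short: right key idea and a viable, arguably cleaner global argument, but the decisive lemma is asserted rather than proved, and the mechanism you propose for proving it is not valid.
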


\begin{proof}
The statement about perfect codes was already proved in \cite{br}.

We begin with the exceptional extended perfect codes. The extended Hamming code of length 8 can be constructed as the binary code of a  $\Z_2\Z_4\cQ_8$-code $\cC$ in the following three ways: taking $\cC$ as the subgroup of $\Z_2^4\times \cQ_8$ generated by  $(1,1,0,0,\ba^3)$, $(1,0,1,0,\ba^2\bb)$ and $(1,1,1,1,\ba^2)$; taking $\cC$ as the subgroup of $\Z_4^2\times \cQ_8$ generated by  $(2,0,\ba^3), (1,3,\ba^2\bb)$ and $(2,2,\ba^2)$; and taking $\cC$ as the subgroup of $\cQ_8^2$ generated by  $(\ba,\ba),(\bb,\bb)$ and $(\one,\ba^2)$. Puncturing the first one in the first coordinate we obtain the Hamming code of length 7 which is the binary code associated to the subgroup of $\Z_2^3\times \cQ_8$ generated by $(1,0,0,\ba^3)$, $(0,1,0,\ba^2\bb)$ and $(1,1,1,\ba^2)$.
The extended Hamming code of length 4 can be constructed as the binary code of the subgroup of $\cQ_8$ generated by  $\ba^2$.

We now prove the main assertion of the statement. Let $\cC$ be a subgroup of $\Z_2^{k_1}\times \Z_4^{k_2}\times \cQ_8^{k_3}$ with $k_3\ne 0$ such that $C=\Phi(\cC)$ is a extended perfect code. Then $n=k_1+2k_2+4k_3=2^m$ for some $m$. All the argument is based on the fact that every vector of weight 3 has to be at Hamming distance 1 of an element of $\Phi(\cC)$ of weight 4.

Assume that $k_3\ge 2$ and $(k_1,k_2,k_3)\ne (0,0,2)$ and take a vector of the form $(x|1,0,0,0|1,0,0,0)$ where the last two blocks of length 4 correspond to two $\cQ_8$-coordinates and $x$ has weight 1. Then this element is not at distance 1 of any codeword. This proves that either $k_3=1$ or $(k_1,k_2,k_3)=(0,0,2)$.
In the latter case $\cC$ is the exceptional subgroup of $\cQ_8^2$.

So in the remainder of the proof we assume that $k_3=1$:

If $k_1=k_2=0$ then $n=4$ and necessarily $\cC=\gen{a^2}$, the unique subgroup of $\cQ_8$ of order 2.

If $k_1 \not=0$ then we can puncture the code at a $\Z_2$-coordinate obtaining a perfect code which should be the binary code of the subgroup of $\Z_2^3\times \cQ_8$ obtained above. Hence, when $k_1\not=0$, $\cC$ is the above exceptional afore mentioned subgroup of $\Z_2^4\times \cQ_8$, which coincides with the unique linear Hadamard code of length 8.

Finally, suppose that $k_1=0$ and $k_2\ne 0$. Then $3\le k_2+2=k_2+2k_3= 2^{m-1}$ and therefore $k_2\equiv 2 \mod 4$.
We claim that $k_2=2$. Otherwise, $k_2 \geq 6$.
For $j=2,\dots,k_2$, let $y_j$ (respectively, $y'_j$) denote the element of $\Z_4^{k_2}$ having 1 at the first and $j$-th entries (respectively, 1 at the first entry and $-1$ at the $j$-th entry) and zero at the other entries.
Then both $(\phi(y_j)|1,0,0,0)$ and $(\phi(y'_j)|1,0,0,0)$ are vectors of weight 3 and hence they are at distance one of some $x_i,x'_i\in \Phi(\cC)$, respectively.
Then $x_i=\Phi(y_j|a_i)$ and $x'_i=\Phi(y'_j|a'_i)$ for some $a_i,a'_i\in \cQ_8$ of order 4. As $\phi(y_i-y_j)$ and $\phi(y'_i-y'_j)$ have both weight 2, the $a_i$'s are pairwise different elements of order 4 and so are the $a'_i$'s. Moreover $\cQ_8$ has exactly 6 elements of order 4.
Thus $a_i=a'_j$ for some $i,j$. Therefore $\Phi(y_i-y'_j|\one)$ is an element of $\Phi(\cC)$ of weight 2, a contradiction.
\end{proof}

In the following, we use the notation $N\rtimes H$ for a semidirect product, i.e., a group such that as a set $N\rtimes H=N\times H$ with multiplication given by
	$$(n_1,h_1)(n_2,h_2)=(n_1\alpha_{h_1}(n_2),h_1h_2) \quad (n_1,n_2\in N, h_1,h_2\in H),$$
where $h\mapsto \alpha_h$ is a group homomorphism $\alpha:H\rightarrow \Aut(N)$.
The direct product $N\times H$ is the semidirect product with $\alpha_h=I$ for every $h\in H$.

\begin{prop}\label{Q8linearN16}
Consider the quaternionic code
\begin{equation}\label{h72}
\cC=\gen{(\ba,\ba,\ba,\ba),(\bb,\ba \bb,\bb,\ba \bb),(\ba^2,\one ,\ba,\ba^3)} \leq  \cQ_8^4
\end{equation}
and let $C=\Phi(\cC)$. Then $\cC$ is of type $(2,0,3)$ and $C$ is a Hadamard code of length $16$, rank 7 and dimension of the kernel 2.
\end{prop}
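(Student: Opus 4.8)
The plan is to verify the three assertions about $\cC=\gen{(\ba,\ba,\ba,\ba),(\bb,\ba\bb,\bb,\ba\bb),(\ba^2,\one,\ba,\ba^3)}$ separately: the type $(2,0,3)$, the Hadamard property together with the length, and finally the rank and kernel dimension.

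Let me denote the three generators $g_1=(\ba,\ba,\ba,\ba)$, $g_2=(\bb,\ba\bb,\bb,\ba\bb)$, $g_3=(\ba^2,\one,\ba,\ba^3)$. First I would pin down the group structure. Each $g_i$ has order $4$, and I would compute the pairwise commutators coordinatewise using the rule that in $\cQ_8$ two non-commuting elements have commutator equal to their common square $\ba^2$. I expect $(g_1,g_2)$, $(g_1,g_3)$, $(g_2,g_3)$ to land in $\gen{\bu}$ where $\bu=(\ba^2,\ba^2,\ba^2,\ba^2)$ is the unique element with $\Phi(\bu)=\one$. So $\cC'\subseteq\gen{\bu}$ and $\cC$ is a group of exponent $4$ and nilpotency class $2$. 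I would then list $T(\cC)$, the elements of order dividing $2$: these are exactly the products $g_1^{2\alpha_1}g_2^{2\alpha_2}g_3^{2\alpha_3}$ together with $\bu$-multiples, and I expect $|T(\cC)|=4$, i.e. $\sigma=2$, with $T(\cC)=Z(\cC)$ giving $\delta=0$, and $[\cC:Z(\cC)]=2^3$ giving $\rho=3$. Concretely I would exhibit $Z(\cC)=\{\be,\bu,g_1^2,g_1^2\bu\}\cong\Z_2^2$ and check each $g_i$ is non-central, confirming type $(2,0,3)$ and $|\cC|=2^{\sigma+\delta+\rho}=2^5=32$.

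Next, the length is immediate: $k_3=4$ gives $n=4k_3=16$, and $|C|=|\cC|=32=2n$, the correct cardinality for a Hadamard code of length $16$. To prove $C$ is Hadamard I would show every nonzero codeword $\Phi(c)$ with $c\ne\bu$ has weight exactly $8=n/2$, while $\Phi(\bu)=\one$ has weight $16$; equivalently, pairs $\{c,\bu c\}$ of codewords map to complementary vectors and all other codewords have weight $n/2$. Since $\Phi$ on each $\cQ_8$-coordinate sends elements of order $4$ to weight-$2$ binary vectors, $\ba^2$ to weight $4$, and $\one$ to weight $0$, I would run through the $32$ elements (organized by the $16$ cosets of $\gen{\bu}$) and tally the weight. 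The key point is the balanced structure: in each coset representative the four $\cQ_8$-coordinates contribute weights summing to $8$. Rather than grinding all cases I would argue via the defining property of a Hadamard matrix—every nonidentity codeword is equidistant from $\zero$—using that $\Phi(c)+\Phi(d)=\Phi(cd[c,d])$ and that swappers lie in $T(\cG)$, so distances between codewords reduce to weights of single codewords.

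Finally, for the parameters $(r,k)=(7,2)$, I would apply the lemmas already proved. By Lemma~\ref{lem:k}, $k(C)\ge\sigma=2$, giving $\Phi(T(\cC))\subseteq K(C)$ of dimension $2$; to show equality $k=2$ I would check that no element outside $\Phi(T(\cC))$ fixes $C$ under translation, which follows because any $c\notin Z(\cC)$ has a nontrivial swapper with some codeword. For the rank, Lemma~\ref{lem:KernelSpan} gives $r(C)\le\sigma+\delta+\rho+h=5+h$ with $h\le\min\{\binom{\delta+\rho}{2},l-\sigma\}=\min\{3,2\}=2$, so $r\le 7$; here $l=k_3=4$. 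The matching lower bound $r\ge 7$ comes from exhibiting the swappers $[g_i,g_j]$ and showing that adjoining them to $\cC$ genuinely enlarges the type by $h=2$—i.e. the swappers generate a rank-$2$ extension of $T(\cC)$ inside $T(\cG)$, equivalently $\gen{[g_1,g_2],[g_1,g_3],[g_2,g_3]}$ together with $T(\cC)$ has order $2^4$. The main obstacle will be this last coordinated computation: verifying simultaneously that the three swappers span exactly a $2$-dimensional complement to $\Phi(T(\cC))$ (not $1$- or $3$-dimensional), since this is what forces $r=7$ rather than $6$ or $8$, and hence distinguishes this code from the other length-$16$ Hadamard codes.
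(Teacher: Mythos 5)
Your overall strategy is the same as the paper's (explicit group-theoretic computation of the type, a direct weight check for the Hadamard property, and a swapper computation for the kernel and rank; the paper additionally packages the group as $\gen{c}\rtimes\gen{a,b}\cong\Z_4\rtimes\cQ_8$). However, two of your concrete intermediate claims are false and would derail the type computation if carried out as written. First, the commutators do \emph{not} all land in $\gen{\bu}$: with $g_2=(\bb,\ba\bb,\bb,\ba\bb)$ and $g_3=(\ba^2,\one,\ba,\ba^3)$ one gets $(g_2,g_3)=(\one,\one,\ba^2,\ba^2)=g_3^2\notin\gen{\bu}$, so $\cC'=\gen{\bu,g_3^2}=T(\cC)$ has order $4$, not $2$. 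Second, your exhibited center $\{\be,\bu,g_1^2,g_1^2\bu\}$ collapses to $\{\be,\bu\}$ because $g_1^2=\bu$; as written it is a group of order $2$, which would force $\sigma\le 1$ and contradict the type you are trying to establish. The correct identification is $T(\cC)=Z(\cC)=\gen{g_1^2,g_3^2}=\gen{\bu,(\one,\one,\ba^2,\ba^2)}$, which is what the paper writes as $\gen{a^2,c^2}$.

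Two smaller points. For the kernel, ``any $c\notin Z(\cC)$ has a nontrivial swapper with some codeword'' is not the right criterion: $[x,x]=x^2$ is a nontrivial swapper that always lies in $\cC$; what you must verify is that every $x\in\cC\setminus T(\cC)$ has some swapper $[x,y]\notin\cC$, i.e.\ $K(\cC)=T(\cC)$ in the sense of the paper's definition. For the Hadamard property, the reduction of pairwise distances to weights via propelinearity is valid, but it does not spare you the computation: ``every nonidentity codeword is equidistant from $\zero$'' is precisely the statement that all $30$ codewords other than $\zero,\one$ have weight $8$, which must still be checked (over the $15$ nontrivial cosets of $\gen{\bu}$), exactly as in the paper. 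Your rank argument (upper bound $r\le 5+\min\{\binom{3}{2},\,l-\sigma\}=7$ from Lemma~\ref{lem:KernelSpan}, matched below by showing $\gen{T(\cC),[g_1,g_2],[g_2,g_3]}$ has order $2^4$) is correct; note in passing that $[g_1,g_3]=g_3^2$ already lies in $\cC$, so only two of the three swappers contribute.
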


\begin{proof}
Let $a=(\ba,\ba,\ba,\ba),b=(\bb,\ba \bb,\bb,\ba \bb)$ and $c=(\ba^2,\one ,\ba,\ba^3)$. Then $a$, $b$ and $c$ have order $4$, $a^2b^2=\be$, $a^b=a^{-1}$, $c^a=c$ and $c^b=c^{-1}$. Moreover $\gen{c}\cap \gen{a,b}=\{\be\}$. Therefore, $\cC$ is a semidirect product $\gen{c}\rtimes \gen{a,b}\cong \Z_4\rtimes \cQ_8$. Hence, $T(\cC)=Z(\cC)=\gen{a^2,c^2}\cong \Z_2^2$. Thus, $\cC$ has type $(2,0,3)$ and $G/T(\cC)=\gen{\overline{a}}\times \gen{\overline{b}}\times \gen{\overline{c}}$.
Furthermore, a straightforward calculation shows that $K(\cC)=T(\cC)$ and every element of $C$ has weight $0$, $8$ or $16$. Therefore, $C$ is a Hadamard code. The dimension of its kernel is $2$. By Lemma~\ref{lem:KernelSpan} the linear span of $C$ is $\Phi(\gen{\cC,[a,b]=(\ba^2,\one,\ba^2,\one),[a,c]=\be,[b,c]=(\one,\one,\one,\ba^2)})$. Thus, the rank of $C$ is $7$.
\end{proof}

\begin{lemm}\label{lemm:cuadradosHadamard}
Let $\cC$ be a subgroup of $\Z_2^{k_1}\times \Z_4^{k_2}\times \cQ_8^{k_3}$ such that $\Phi(\cC)$ is a Hadamard code.
Let $a,b,c\in \cC\setminus T(\cC)$.
Then
\begin{enumerate}
 \item either $(a,b)\in \gen{a^2}$ or $a^2=\bu$;
 \item if $a^2=b^2=c^2\ne \bu$ then $|\gen{a,b,c,T(\cC)}/T(\cC)|\le 4$
 \item if $a^2=b^2=(a,b)\ne c^2$ then $\gen{[a,c],[b,c],T(\cC)}/T(\cC)|\le 2$.
\end{enumerate}
\end{lemm}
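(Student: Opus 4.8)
The plan is to use throughout that, since $\Phi(\cC)$ is Hadamard of length $n$, every $x\in\cC$ satisfies $\w(\Phi(x))\in\{0,n/2,n\}$, with $\w(\Phi(x))=0$ only for $x=\be$ and $\w(\Phi(x))=n$ only for $x=\bu$ (because $\Phi$ is injective, $\Phi(\be)=\zero$ and $\Phi(\bu)=\one$); in particular every element of $\cC\setminus T(\cC)$ has order $4$ and its square lies in $T(\cC)\setminus\{\be\}$. For statement (1), if $a^2=\bu$ we are done, so assume $a^2\ne\bu$; then $\w(\Phi(a^2))=n/2$. Since $(a,b)\in\cC'\subseteq T(\cC)$, item~\ref{Tres} of Lemma~\ref{lemm:cuadrados} gives $\w(\Phi((a,b)))\le\w(\Phi(a^2))=n/2$, so $\w(\Phi((a,b)))\in\{0,n/2\}$. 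If it is $0$ then $(a,b)=\be\in\gen{a^2}$. If it is $n/2$, then because $(a,b)$ and $a^2$ agree on the (necessarily $\cQ_8$-) coordinates where $(a,b)$ is nontrivial, the element $a^2(a,b)=a^2(a,b)^{-1}$ is trivial on those coordinates and equals $a^2$ elsewhere, so $\w(\Phi(a^2(a,b)))=\w(\Phi(a^2))-\w(\Phi((a,b)))=0$; hence $a^2(a,b)=\be$ and $(a,b)=a^2\in\gen{a^2}$.

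For statement (2), write $d=a^2=b^2=c^2\ne\bu$. By (1) applied to each pair, $(a,b),(a,c),(b,c)\in\gen{d}=\{\be,d\}$. Recall $\cC/T(\cC)$ is elementary abelian and that, commutators being central, $(xy)^2=x^2y^2(x,y)$ and $(abc)^2=a^2b^2c^2(a,b)(a,c)(b,c)$. If some pairwise commutator vanishes, say $(a,b)=\be$, then $(ab)^2=d^2=\be$, so $ab\in T(\cC)$; otherwise all three commutators equal $d$ and $(abc)^2=d^6=\be$, so $abc\in T(\cC)$. Either way $\bar a,\bar b,\bar c$ are linearly dependent over $\Z_2$, and therefore $|\gen{a,b,c,T(\cC)}/T(\cC)|=2^{\dim_{\Z_2}\gen{\bar a,\bar b,\bar c}}\le4$.

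For statement (3) set $d=a^2=b^2=(a,b)$ (so $c^2\ne d$). By item (e) of Lemma~\ref{lem:prop}, $[ab,c]=[a,c][b,c]$, and all three swappers have order at most $2$; hence the claimed bound is \emph{equivalent} to showing that one of $[a,c],[b,c],[ab,c]$ lies in $T(\cC)$. I first record two structural facts. Since $(a,b)=d$ forces a nontrivial commutator on each coordinate of its support, $d$ is supported only on $\cQ_8$-coordinates; on this support $S$ the components of $a,b$ are anticommuting order-$4$ elements, while off $S$ they have order $\le2$, so by item (a) of Lemma~\ref{lem:prop} the three swappers are all supported inside $S$. Secondly, if $\bar c\in\gen{\bar a,\bar b}$, say $c=xt$ with $x\in\{a,b,ab\}$ and $t\in T(\cC)$, then by items (a), (d), (e) of Lemma~\ref{lem:prop} we get $[x,c]=[x,x][x,t]=x^2=d\in T(\cC)$, which places one swapper in $T(\cC)$ and finishes the proof. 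So it remains to prove $\bar c\in\gen{\bar a,\bar b}$.

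To obtain $\bar c\in\gen{\bar a,\bar b}$ I would run statement (2) as an engine. Assume first $d\ne\bu$; then by (1) each of $(a,c),(b,c)$ lies in $\{\be,d\}$. Using $(xy)^2=x^2y^2(x,y)$ to compute the squares of the eight elements of the coset $c\gen{a,b}$, one checks that in each of the four possibilities for $((a,c),(b,c))$ there are three coset elements whose squares coincide and equal $c^2$ or $dc^2$, and whose classes generate $\gen{\bar a,\bar b,\bar c}$; provided this common square is not $\bu$ (it is automatically $\ne\be$, since $c\notin T(\cC)$ and $c^2\ne d$), statement (2) applies and gives $\dim_{\Z_2}\gen{\bar a,\bar b,\bar c}\le2$, that is $\bar c\in\gen{\bar a,\bar b}$. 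The main obstacle is exactly the degenerate situations where this engine stalls: when the relevant common square is $\bu$ (i.e.\ $c^2\in\{\bu,d\bu\}$) and, separately, the whole case $d=\bu$, in which (1) no longer constrains the commutators. Here I expect to argue directly: restricting to $S$ and reading off the $\cQ_8$ swapper table (Table~\ref{table:1}), the distribution of the components of $c$ over the three $\cQ_8$-axes is constrained by the Hadamard conditions $\w(\Phi(xc))\in\{0,n/2,n\}$ for $x\in\{\be,a,b,ab\}$ (and their $d$-translates), and this should force one swapper to coincide on $S$ with an element of $T(\cC)$ such as $c^2$ or $dc^2$ --- precisely what occurs in Proposition~\ref{Q8linearN16}, where $d=\bu$ and $[a,c]=c^2\in T(\cC)$ although $\bar c\notin\gen{\bar a,\bar b}$. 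Pushing this weight bookkeeping through the residual cases is the crux of the argument.
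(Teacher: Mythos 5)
Your proofs of items 1 and 2 are correct and essentially identical to the paper's (weight $n/2$ for non-trivial, non-$\bu$ codewords plus items \ref{Conmutan} and \ref{Tres} of Lemma~\ref{lemm:cuadrados}, and the identity $(abc)^2=a^2b^2c^2(a,b)(a,c)(b,c)=t^6=\be$). The problem is item 3, where your argument has a genuine gap that you yourself flag. Your strategy reduces the claim to showing $\bar c\in\gen{\bar a,\bar b}$ in $\cC/T(\cC)$, but as you observe via Proposition~\ref{Q8linearN16}, this containment can simply be false while the conclusion of the lemma still holds; so the reduction is only a sufficient condition, and the cases it misses --- $d=(a,b)=\bu$, and $c^2\in\{\bu,d\bu\}$ when $d\ne\bu$ --- are not fringe degeneracies but the heart of the matter. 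Indeed, when $d\ne\bu$ and $(a,c)\ne\be$, item 1 applied to the pair $(c,a)$ forces $c^2=\bu$ (otherwise $(a,c)=c^2=a^2$, contradicting $c^2\ne a^2$), so essentially every configuration with a non-trivial commutator against $c$ lands in exactly the regime where your ``engine'' stalls. The sentence ``this should force one swapper to coincide on $S$ with an element of $T(\cC)$ \dots\ Pushing this weight bookkeeping through the residual cases is the crux of the argument'' is an accurate self-assessment: that bookkeeping \emph{is} the proof, and it is not done.

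For comparison, the paper's proof of item 3 proceeds by contradiction assuming $\gen{[a,c],[b,c]}$ maps onto $\Z_2^2$ modulo $T(\cC)$, and then (i) disposes of the case $(a,c)=(b,c)=\be$ by a support argument showing the order-4 coordinates of $c$ and of $a,b$ are disjoint, whence $[a,c]=\be$; (ii) reduces by replacing $b$ with $ab$ to the case $(a,c)\ne\be\ne(b,c)$, where item 1 forces either $a^2=\bu$ with $(a,c)=(b,c)=c^2$, or $c^2=\bu$ with $(a,c)=(b,c)=a^2$; and (iii) in the case $a^2=\bu$ shows $\cG=\cQ_8^l$, normalizes $c^2=(\bu_{l/2}\mid\be_{l/2})$, and uses the partition of the order-4 elements of $\cQ_8$ into the three classes $A_1,A_2,A_3$ together with Table~\ref{table:1} to prove $[a,c][b,c]=c^2\in T(\cC)$, the desired contradiction. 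Steps (i) and (iii) are precisely the ``residual cases'' you defer, so the proposal cannot be accepted as a proof of item 3 as it stands.
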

\begin{proof}
 We know that the binary all ones vector belongs to any Hadamard code and hence, if $\cC$ is a subgroup of $\cG$ such that $\Phi(C)$ is a Hadamard code then $\bu\in \cC$.

1. If $y\in \cC\setminus \{1,\bu\}$ then $\w(\Phi(y))=\frac{n}{2}$. Thus, by item~\ref{Tres} of Lemma~\ref{lemm:cuadrados}, if $a^2\ne \bu$ and $(a,b)\ne \be$ then $(a,b)=a^2$.

2. Assume $a^2=b^2=c^2=t\ne \bu$ and $|\gen{a,b,c,T(\cC)}/T(\cC)|>4$. Then $\{ab,ac,bc,abc\}\cap T(\cC)=\emptyset$. Therefore, by items~\ref{Conmutan} and \ref{Tres} of Lemma~\ref{lemm:cuadrados}, $(a,b)=(a,c)=(b,c)=t$. Hence $(abc)^2=(a,b)(a,c)(b,c)a^2b^2c^2=t^6=\be$. Thus $abc\in T(\cC)$, a contradiction.

3. Suppose $a^2=b^2=(a,b)\ne c^2$ and $|\gen{[a,c],[b,c],T(\cC)}/T(\cC)|>2$.
Then $\gen{[a,c],[b,c]}$ is isomorphic to $\Z_2^2$ and intersects $T(\cC)$ trivially.
Write $a=(a_1,\dots,a_l), b=(b_1,\dots,b_l)$ and $c=(c_1,\dots,c_l)$.

Suppose $(a,c)=(b,c)=\be$. Then $a^2\ne \bu$ for otherwise $\gen{a_i,b_i}=\cQ_8$ for every $i$ and hence $c$ has order 2, contradicting the fact that $[a,c]\ne \be$.
Therefore, after reordering the coordinates we may assume that $a^2=(\ba_{l_1}|\be_{l_2})$. Hence, if $i\le l_1$ then $\gen{a_i,b_i}=\cQ_8$ and $c_i$ has order at most two and when $i > l_1$, $a_i$ and $b_i$ have order at most two and $c_i$ has order four. Then $[a,c]=\be$, contradicting the assumption.

Thus either $(a,c)\ne \be$ or $(b,c)\ne \be$. By symmetry we may assume that $(a,c)\ne \be$. If $(b,c)=\be$ then $(ab,c)\ne \be$, $(ab)^2=a^2$ and $[ab,c]=[a,c][b,c]$, so that $\gen{[a,c],[b,c]}=\gen{[a,c],[ab,c]}$. Therefore, we can replace $b$ by $ab$ and so we may assume that $(b,c)\ne \be$.
Then, by item 1, either $a^2=\bu$ and $(a,c)=(b,c)=c^2$ or $c^2=\bu$ and $(a,c)=(b,c)=a^2$.

Suppose that $a^2=\bu$. Then $\gen{a_i,b_i}\cong \cQ_8$ for every $i$, so that $\cG=\cQ_8^l$ and $\w(c^2)=2l=\frac{n}{2}$. Then $l$ is even and after reordering the coordinates we may assume that $c^2=(\bu_{\frac{l}{2}}|\be_{\frac{l}{2}})$. Then each $a_i$ and $b_i$ have order 4 and $c_i$ has order 4 if and only if $i\le \frac{l}{2}$. Let $A_1=\{\ba,\ba^3\}$, $A_2=\{\bb,\ba^2\bb\}$ and $A_3=\{\ba\bb,\ba^3\bb\}$.
If $r\in A_i$ and $s\in A_j$ then $(r,s)=\be$ if and only if $i=j$; and $[r,s]=\be$ if and only if $i-j\equiv 1 \mod 3$ (see Table~\ref{table:1}).
Each $a_i$, $b_i$  and $c_i$,  with $i\le \frac{l}{2}$, belongs to some $A_i$ and $(a_i,b_i)=(a_i,c_i)=(b_i,c_i)=\ba^2$. Therefore $a_i$, $b_i$ and $c_i$ belong to different $A_i$'s. This implies that for every $i\le \frac{l}{2}$, $\{[a_i,c_i],[b_i,c_i]\}=\{1,\ba^2\}$. On the other hand, $(a_i,c_i)=(b_i,c_i)=\one$ for every $i>\frac{l}{2}$. Therefore $[a,c][b,c]=(\bu_{\frac{l}{2}}|\be_{\frac{l}{2}})=c^2\in T(\cC)$ which yields a contradiction. A slight modification of this argument, with the roles of $a$ and $c$ interchanged, yields also a contradiction in the case $c^2=\bu$. This finishes the proof of 3.
\end{proof}

The following corollary is a straightforward consequence of Lemma~\ref{lemm:cuadrados} and Lemma~\ref{lemm:cuadradosHadamard}.

\begin{coro}\label{HadamardZs}
Let $\cC$ be a subgroup of $\Z_2^{k_1}\times \Z_4^{k_2}\times \cQ_8^{k_3}$ such that $\Phi(\cC)$ is a Hadamard code and let \genset be a set of generators of $\cC$.
\begin{enumerate}
 \item For every $t\in T(\cC)$, $t\not=\bu$, the cardinality of $\{i=1,\dots,\rho:z_i^2=t\}$ is at most 2.
 \item \label{segon} If $(z_i,z_j)\ne \be$ then either $z_i^2=\bu$ or $(z_i,z_j)=z_i^2$.
 \item If $(z_i,z_j)=e$ then $z_i^2\ne z_j^2$.
\end{enumerate}
\end{coro}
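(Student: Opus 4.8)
The plan is to deduce Corollary~\ref{HadamardZs} directly from the two preceding lemmas by reading off what each of their conclusions says when $a$, $b$, $c$ are specialized to the quaternionic generators $z_i$. Since $\Phi(\cC)$ is Hadamard, each $z_i$ lies in $\cC\setminus T(\cC)$ (the generators of order $4$ never square to $\be$), so both Lemma~\ref{lemm:cuadrados} and Lemma~\ref{lemm:cuadradosHadamard} apply with $a,b,c\in\{z_1,\dots,z_\rho\}$. I would treat the three items in turn.

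For item~2, I would simply apply item~1 of Lemma~\ref{lemm:cuadradosHadamard} with $a=z_i$, $b=z_j$: it gives that either $(z_i,z_j)\in\gen{z_i^2}$ or $z_i^2=\bu$. When $(z_i,z_j)\ne\be$ and $z_i^2\ne\bu$, the first alternative forces $(z_i,z_j)=z_i^2$, since the only nontrivial element of $\gen{z_i^2}$ is $z_i^2$ itself. This is exactly the claimed dichotomy.

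For item~3, I would apply item~\ref{Conmutan} of Lemma~\ref{lemm:cuadrados} with $a=z_i$, $b=z_j$: if $(z_i,z_j)=\be$ then either $z_iz_j\in T(\cC)$ or $z_i^2\ne z_j^2$. The first possibility is excluded by the standing assumption on the generating set: a product $z_iz_j$ of two distinct $z$'s cannot lie in $T(\cC)$, because by the uniqueness of the normal-form expression $\prod x_i^{\alpha_i}\prod y_j^{\beta_j}\prod z_k^{\gamma_k}$ an element of $T(\cC)$ must have all $\gamma_k$ even, whereas $z_iz_j$ has two of its $\gamma$-exponents equal to $1$. Hence only $z_i^2\ne z_j^2$ survives, which is the assertion.

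For item~1, I would argue by contradiction: if three distinct indices $i,j,k$ satisfied $z_i^2=z_j^2=z_k^2=t\ne\bu$, then setting $a=z_i$, $b=z_j$, $c=z_k$ in item~2 of Lemma~\ref{lemm:cuadradosHadamard} would give $|\gen{z_i,z_j,z_k,T(\cC)}/T(\cC)|\le 4$. But the images $\overline{z_i},\overline{z_j},\overline{z_k}$ are three of the free generators of $\cC/T(\cC)\cong\Z_2^{\delta+\rho}$, so they are $\Z_2$-independent modulo $T(\cC)$ and generate a subgroup of order $8$, contradicting the bound of $4$. I expect the only real subtlety to be making this independence argument airtight --- that is, invoking the normal-form/uniqueness statement to guarantee the three cosets are genuinely independent in $\cC/T(\cC)$; once that is stated cleanly, the rest is immediate bookkeeping. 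I would close by noting that items 1--3 together are precisely the corollary, and remark that it follows from Lemma~\ref{lemm:cuadrados} and Lemma~\ref{lemm:cuadradosHadamard} as asserted.
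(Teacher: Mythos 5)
Your derivation is correct and is exactly the route the paper intends: the paper gives no written proof, stating only that the corollary is a straightforward consequence of Lemma~\ref{lemm:cuadrados} and Lemma~\ref{lemm:cuadradosHadamard}, and your three specializations (item~1 of Lemma~\ref{lemm:cuadradosHadamard} for the dichotomy, item~\ref{Conmutan} of Lemma~\ref{lemm:cuadrados} plus the normal-form characterization of $T(\cC)$ for item~3, and item~2 of Lemma~\ref{lemm:cuadradosHadamard} together with the $\Z_2$-independence of the $\overline{z_i}$ in $\cC/T(\cC)$ for item~1) are precisely the missing details. The independence argument you flag as the only subtlety is indeed settled by the paper's statement that an element $\prod x_i^{\alpha_i}\prod y_j^{\beta_j}\prod z_k^{\gamma_k}$ lies in $T(\cC)$ only if all $\gamma_k$ are even, so nothing further is needed.
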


Corollary~\ref{HadamardZs} implies that if $\cC$ is a subgroup of $\cG$ such that $\Phi(\cC)$ is a Hadamard code and $t\ne \bu$ then a generating set of $\cC$ has at most two $z_i$'s with square equal to $t$. Moreover if $z_i^2=z_j^2\ne \bu$ then $(z_i,z_j)=z_i^2$, by item~\ref{Conmutan} of Lemma~\ref{lemm:cuadrados} and item 1 of Lemma~\ref{lemm:cuadradosHadamard}. For our proposes it is convenient to use a generating set for which this property also holds for $z_i$'s with square $\bu$.

\begin{defi}
Let $\cC$ be a subgroup of $\Z_2^{k_1}\times \Z_4^{k_2}\times \cQ_8^{k_3}$ such that $\Phi(\cC)$ is a Hadamard code and let \genset be a set of generators of $\cC$. We say that this generating set is \emph{normalized} if $z_i^2=\bu$ for at most two $i=1,\dots,\rho$ and if $z_i^2=z_j^2=\bu$ with $i\ne j$ then $(z_i,z_j)=\bu$.
\end{defi}

\begin{lemm}
Every Hadamard $\Z_2\Z_4\cQ_8$-code $\cC$ has a normalized set of generators.
\end{lemm}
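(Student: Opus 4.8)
The plan is to begin with an arbitrary generating set \genset satisfying the conditions implicit in~\eqref{generating} and to repair it by modifying only the non-central generators whose square equals $\bu$. Recall first that, since $\Phi(\cC)$ is Hadamard, the all-ones vector lies in the code and hence $\bu\in\cC$. Write $I=\{i:z_i^2=\bu\}$; the generating set is normalized exactly when $|I|\le 2$ and $(z_i,z_j)=\bu$ for the unique pair $i\ne j$ of $I$ when $|I|=2$. The two facts I would establish first, both for $i,j\in I$, are: the commutator $(z_i,z_j)$ is never trivial, and $(z_iz_j)^2=(z_i,z_j)$. The first holds by (the contrapositive of) item~\ref{Conmutan} of Lemma~\ref{lemm:cuadrados}, because $z_i^2=z_j^2=\bu$ while $z_iz_j\notin T(\cC)$ (the images $\overline{z_i},\overline{z_j}$ are independent in $\cC/Z(\cC)$, so $\overline{z_iz_j}\ne\overline{\be}$). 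The second follows from the class-two squaring identity $(z_iz_j)^2=z_i^2z_j^2(z_i,z_j)$ --- valid because commutators lie in $T(\cC)\subseteq Z(\cC)$ --- together with $\bu^2=\be$.

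With these in hand, the reduction step runs as follows: while there is a pair $i,j\in I$ with $(z_i,z_j)\ne\bu$, replace $z_j$ by $z_iz_j$. This is a legitimate change of generating set, since modulo $Z(\cC)$ it is the invertible substitution $\overline{z_j}\mapsto\overline{z_i}\,\overline{z_j}$ of $\cC/Z(\cC)\cong\Z_2^{\rho}$, while $T(\cC)$, $Z(\cC)$, the $x_i$'s, the $y_i$'s and the other $z_i$'s are untouched; and $z_iz_j$ has order $4$ because its square $(z_i,z_j)$ lies in $T(\cC)\setminus\{\be\}$. Since this square is moreover different from $\bu$ by the choice of the pair, the modified generator leaves $I$, whereas the commutators among the members of $I$ that were not touched are unchanged. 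Hence each step strictly decreases $|I|$, the process terminates, and upon termination every pair $i,j\in I$ satisfies $(z_i,z_j)=\bu$ (the only remaining possibility, as the trivial value was excluded above).

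The decisive step, and the one I expect to do the real work, is to bound $|I|$ by $2$. Suppose $i,j,k\in I$ were distinct. Using bilinearity and symmetry of the commutator~\eqref{prop:commutator} and the centrality of commutators, the three-factor squaring identity gives $(z_iz_jz_k)^2=z_i^2z_j^2z_k^2(z_i,z_j)(z_i,z_k)(z_j,z_k)=\bu^6=\be$, so that $z_iz_jz_k\in T(\cC)\subseteq Z(\cC)$; this contradicts the independence of $\overline{z_i},\overline{z_j},\overline{z_k}$ in $\cC/Z(\cC)$. Therefore $|I|\le 2$, and when $|I|=2$ the two generators with square $\bu$ have commutator $\bu$ by the terminating condition of the reduction. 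This is exactly the normalized condition, so the resulting generating set is normalized. The only genuine bookkeeping to be careful about is checking that each replacement keeps \genset a valid generating set in the sense of~\eqref{generating}; everything else reduces to the two short squaring identities above.
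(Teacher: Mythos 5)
Your proof is correct, and it rests on the same basic mechanism as the paper's: multiply together generators whose squares equal $\bu$, so that the product --- whose square is the (necessarily non-trivial, central) commutator --- drops out of the offending set. The difference is organizational, and your version is arguably tighter. The paper makes a single pass anchored on $z_1$ and $z_2$: it replaces each remaining $z_i$ with square $\bu$ by $z_1z_i$, $z_2z_i$ or $z_1z_2z_i$ according to which commutators equal $\bu$, checking in each case that the new square is not $\bu$. In the last of these cases the paper computes $(z_1z_2z_i)^2=\bu^6=\be$, i.e.\ it produces a ``generator'' lying in $T(\cC)\subseteq Z(\cC)$, which cannot occur among the $z$'s of a generating set as in (\ref{generating}); that case is in fact vacuous, and your three-element identity $(z_iz_jz_k)^2=z_i^2z_j^2z_k^2(z_i,z_j)(z_i,z_k)(z_j,z_k)=\be$ is exactly the observation that makes this explicit, since a central product of three generators contradicts their independence modulo $Z(\cC)$. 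Your iterative reduction (shrink $I$ while some pair has commutator $\ne\bu$, each step strictly decreasing $|I|$ because the new square is a non-trivial element different from $\bu$, then bound $|I|\le 2$ a posteriori) therefore buys a cleaner termination argument and avoids the paper's silent reliance on a vacuous case, at the price of being a loop rather than a closed-form substitution. The preliminary facts you use --- $(z_i,z_j)\ne\be$ via item~\ref{Conmutan} of Lemma~\ref{lemm:cuadrados}, the squaring identity from centrality of commutators, and the preservation of the conditions of (\ref{generating}) under the substitution $z_j\mapsto z_iz_j$ --- are all handled correctly.
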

\begin{proof}
Let $x_1,\ldots,x_\sigma; y_1\ldots,y_\tau; z_1,\ldots, z_\rho$ be a generating set of $\cC$. We may assume without loss of generality that $z^2_i=\bu$ if and only if $i\le k$ and either $(z_1,z_i)\ne \bu$ for every $2\le i \le k$ or $(z_1,z_2)=\bu$.
If $k\le 1$ there is nothing to prove. Otherwise, for every $i=1,\dots,\rho$ we define
	$$z'_i = \left\{\matriz{{ll}
						z_i, & \text{if either } i=1, \text{ or } i=2\le k \text{ and } (z_1,z_2)=\bu \text{ or } i > k; \\

						z_1z_i, & \text{if } 2\le i \le k \text{ and } (z_1,z_i)\ne \bu; \\
						z_2z_i, & \text{if } 3\le i \le k, (z_1,z_i)=\bu\ne (z_2,z_i); \\
						z_1z_2z_i, & \text{if } 3\le i \le k, (z_1,z_i)=\bu = (z_2,z_i).}\right.
$$
Then $x_1,\ldots,x_\sigma; y_1\ldots,y_\tau; z'_1,\ldots, z'_\rho$ is a generating set of $\cC$ and we claim that it is normalized.
Indeed, if $i>k$ then $z_i'^2=z_i^2\ne \bu$.
Assume $3\le i \le k$.
If $(z_1,z_i)\ne \bu$ then $z_i'^2=(z_1z_i)^2 = (z_1,z_i)z_1^2z_2^2 = (z_1,z_i)\ne \bu$.
Assume that $(z_1,z_i)=\bu$. Then, by construction, $(z_1,z_2)=\bu$. Therefore, if $(z_2,z_i)\ne \bu$ then $z_i'^2=(z_2z_i)^2=(z_2,z_i)z_2^2z_i^2=(z_2,z_i)\ne \bu$, and if $(z_2,z_i)=\bu$ then $z_i'^2 = (z_1z_2z_i)^2 = (z_1,z_2)(z_1,z_i)(z_2,z_i)z_1^2z_2^2z_i^2 = \bu^6=1\ne \bu$. Finally, assume that $i=2\le k$. If $(z_1,z_2)=\bu$ then $z_2'=z_2$ and so $(z_1,z'_2)=\bu$ and otherwise $z_2'^2=(z_1z_2)^2=(z_1,z_2)\ne \bu$.
\end{proof}

In the remainder of the section
\genset is a normalized generating set of $\cC$. Moreover, let $\epsilon$ be the number of pairs of different $z_i$'s with the same squares, We reorder the $z_i$'s in such a way that two $z_i$'s with the same square are consecutive and placed at the beginning of the list, i.e., $z_1^2=z_2^2,\dots,z_{2\epsilon-1}^2=z_{2\epsilon}^2$ and $z_2^2,z_4^2,\dots,z_{2\epsilon}^2,z_{2\epsilon+1}^2,\dots,z_{\rho}^2$ are pairwise different. Note that for each $i=1,\dots,\rho$ there is $j=1,\dots,\rho$ such that $(z_i,z_j)\ne \be$. In that case, either $z_i^2=\bu$ or $z_j^2=\bu$ or $z_i^2=z_j^2$. In the last case $\{i,j\}=\{2t-1,2t\}$ for some $t\le \epsilon$.

\begin{lemm}\label{CuadradosIndependientes}
Let $\cC$ be a subgroup of $\Z_2^{k_1}\times \Z_4^{k_2}\times \cQ_8^{k_3}$ such that $\Phi(\cC)$ is a Hadamard code and let \genset be a normalized set of generators of $\cC$. Then the following assertions hold:
\begin{enumerate}
 \item $\epsilon\le 2$.
 \item If $\epsilon=2$ then $\delta=0$ and $\rho=4$. In the case when $z_1^2,z_3^2$ and $\bu$ are pairwise different we have $(z_i,z_j)=\be$ and $z_i^2z_j^2=\bu$ for $i\in \{1,2\}$, $j\in \{3,4 \}$.
 \item If $z_i^2=\bu$ with $i\le 2\epsilon$ then $\delta=0$.
 \item Let $V=\{y_1,\dots,y_{\delta},z_1,z_3,\dots,z_{2\epsilon-1},z_{2\epsilon+1},z_{2\epsilon+2},\dots,z_{\rho}\}$ and  $W=\{w^2 : w\in V\}$ and $U=\{ u\in V : u^2 \not=\bu \}$. Then $|\gen{W}|\ge 2^{\delta+\rho-\epsilon-1}$, and hence $\sigma\ge \delta+\rho-\epsilon-1$. If moreover $\bu \notin \gen{U}$ then $|\gen{W}|=2^{\delta+\rho-\epsilon}$, and hence $\sigma\ge \delta+\rho-\epsilon$.
 \item {\rm (Upper bound for $r(C)$)} If $\cD=\gen{\cC\cup S(\cC)}$ (as in Lemma~\ref{lem:KernelSpan}) then $\cD$ is of type $(\sigma+h,\delta,\rho)$ and $r(C)\le \sigma+\delta+\rho+h$ with
	$$h\le \left\{\matriz{{ll} \epsilon+\binom{\delta+\rho-\epsilon}{2}, & \text{if } \epsilon\le 1;\\ 3, & \text{if } \epsilon=2}\right.$$
\end{enumerate}
\end{lemm}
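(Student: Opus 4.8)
The backbone of the argument is the observation that squaring induces a well-defined map $s\colon \cC/T(\cC)\to T(\cC)$, $s(\overline{w})=w^2$ (well-defined since $T(\cC)\subseteq Z(\cC)$ and $t^2=\be$ for $t\in T(\cC)$), whose polarization is the commutator form: for $w_1,\dots,w_r\in\cC$ one has $(\prod_i w_i)^2=\prod_i w_i^2\,\prod_{i<j}(w_i,w_j)$, all commutators being central of order at most $2$. Since $\Phi(\cC)$ is Hadamard, $\bu\in\cC$ and every $w\in\cC\setminus\{\be,\bu\}$ satisfies $\w(\Phi(w))=n/2$; in particular each non-trivial square $t\ne\bu$ has weight $n/2$, so two distinct non-$\bu$ elements of $T(\cC)$ either meet in exactly $n/4$ coordinates or are complementary, i.e.\ $\Phi(T(\cC))$ is a constant-weight Reed--Muller-type code. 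I would also record at the outset the key auxiliary fact, obtained by the ``$\bu^{6}=\be$'' trick: the set $\{\overline{w}\in\cC/T(\cC):w^2=\bu\}$ is a $\Z_2$-subspace of dimension at most $2$. Indeed three elements $a,b,c$, independent modulo $T(\cC)$, with square $\bu$ must have all three pairwise commutators equal to $\bu$ (by item~\ref{Conmutan} of Lemma~\ref{lemm:cuadrados} and Lemma~\ref{lemm:cuadradosHadamard}(1)), whence $(abc)^2=\bu^{6}=\be$, so $abc\in T(\cC)$, contradicting independence.

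For Parts 1--3 I would analyse the pairs $P_t=\{z_{2t-1},z_{2t}\}$, $t\le\epsilon$, with distinct squares $t_t=z_{2t-1}^2=z_{2t}^2$. By item~\ref{segon} of Corollary~\ref{HadamardZs} together with normalization, $(z_{2t-1},z_{2t})=t_t$, so each $\gen{\overline{z_{2t-1}},\overline{z_{2t}}}$ is a plane on which $s$ is constantly $t_t$ on the non-zero vectors. For two pairs with distinct non-$\bu$ squares, Corollary~\ref{HadamardZs} forces every cross commutator to vanish, since a non-trivial one would have to equal both $t_i$ and $t_j$. Part~1 ($\epsilon\le 2$) then follows by feeding these planes into Lemma~\ref{lemm:cuadradosHadamard}(2) and the dimension-$\le 2$ bound on the $\bu$-square locus. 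Part~2 (the implication $\epsilon=2\Rightarrow\delta=0,\ \rho=4$, together with $(z_i,z_j)=\be$ and the complementarity $z_i^2z_j^2=\bu$) and Part~3 (a paired square equal to $\bu$ forces $\delta=0$) are obtained by the same bookkeeping, now using that a central $y_i$ of order $4$ would supply an extra independent square commuting with everything, overflowing Lemma~\ref{lemm:cuadradosHadamard}(2)--(3); the complementarity $z_i^2 z_j^2=\bu$ I would read off from the coordinatewise $\cQ_8$-structure exactly as in the proof of Lemma~\ref{lemm:cuadradosHadamard}.

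Part~4 is the crux. Writing $W=\{w^2:w\in V\}$ with $|V|=\delta+\rho-\epsilon$, I would bound the relation space $R=\{S\subseteq V:\prod_{w\in S}w^2=\be\}$, since $\dim\gen{W}=|V|-\dim R$. For a relation $S$, put $g=\prod_{w\in S}w$; the square formula gives $g^2=\prod_{i<j}(w_i,w_j)$. Because $V$ contains only one representative of each equal-square pair, two elements of $V$ that fail to commute must have one square equal to $\bu$, and then their commutator is $\bu$ by Lemma~\ref{lemm:cuadradosHadamard}(1); hence $g^2=\bu^{k}\in\{\be,\bu\}$. On the other hand the classes $\overline{w}$, $w\in V$, are distinct basis vectors of $\cC/T(\cC)$, so $g\in T(\cC)$ if and only if $S=\emptyset$; thus every non-empty relation has $g^2=\bu$, i.e.\ $\overline{g}$ lies in the $\bu$-square locus. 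This identifies the non-zero part of $R$ with a subspace of that locus. The main obstacle is to sharpen the generic dimension-$2$ bound down to $\dim R\le 1$ here, and to $\dim R=0$ when $\bu\notin\gen{U}$ (a non-empty relation then gives $\prod_{w\in S\cap U}w^2=\bu$, so $\bu\in\gen{U}$); this is where the normalized ordering of the $z_i$'s and Parts~1--3 must be invoked. The conclusion is $\sigma\ge\dim\gen{W}\ge\delta+\rho-\epsilon-1$, respectively $\ge\delta+\rho-\epsilon$.

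Finally, for Part~5 I would start from Lemma~\ref{lem:KernelSpan}, by which $\cD=\gen{\cC\cup S(\cC)}$ has type $(\sigma+h,\delta,\rho)$ with $h\le\binom{t}{2}$ for a minimal generating set $\cC=\gen{K(\cC),a_1,\dots,a_t}$, $t\le\delta+\rho$, where $h$ counts the new swappers $[a_i,a_j]$ independent modulo $T(\cC)$. The refinement exploits Part~4: for an equal-square pair $\{z_{2t-1},z_{2t}\}$ the commutator $(z_{2t-1},z_{2t})=z_{2t-1}^2$ already lies in $\cC$, so these directions contribute swappers that do not enlarge $T(\cC)$; choosing the $a_i$ to separate off the $\epsilon$ paired directions leaves $\binom{\delta+\rho-\epsilon}{2}$ genuinely new swappers plus at most $\epsilon$ from the pairs, giving $h\le\epsilon+\binom{\delta+\rho-\epsilon}{2}$ when $\epsilon\le 1$. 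When $\epsilon=2$, Parts~2--3 pin the structure to $\delta=0$, $\rho=4$ with two complementary planes, and a direct computation of the finitely many swappers yields $h\le 3$.
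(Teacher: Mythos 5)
Your ``key auxiliary fact'' --- that $\{\overline{w}\in\cC/T(\cC):w^2=\bu\}$ is a $\Z_2$-subspace of dimension at most $2$ --- is false, and since Parts 1--4 of your argument rest on it, this is a fatal gap. The error is in the justification: from $a^2=b^2=\bu$ and $\overline{ab}\ne\overline{\be}$ you correctly get $(a,b)\ne\be$, but Lemma~\ref{lemm:cuadradosHadamard}(1) then says \emph{either} $(a,b)\in\gen{a^2}$ \emph{or} $a^2=\bu$; when $a^2=\bu$ the lemma yields no information, so you cannot conclude $(a,b)=\bu$. Concretely, in the code of Proposition~\ref{Q8linearN16} with $a=(\ba,\ba,\ba,\ba)$, $b=(\bb,\ba\bb,\bb,\ba\bb)$, $c=(\ba^2,\one,\ba,\ba^3)$, one has $a^2=b^2=(bc)^2=\bu$ with $\bar a,\bar b,\overline{bc}$ independent modulo $T(\cC)$ (so the locus spans dimension $3$), and $(b,bc)=(b,c)=c^2\notin\{\be,\bu\}$; moreover $\overline{b}\cdot\overline{bc}=\overline{c}$ has square $c^2\ne\bu$, so the locus is not even closed under products. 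The same misreading of Lemma~\ref{lemm:cuadradosHadamard}(1) reappears in your Part~4 (``two elements of $V$ that fail to commute \dots their commutator is $\bu$''): if $u^2=\bu\ne v^2$ then Corollary~\ref{HadamardZs} forces $(u,v)=v^2$, not $\bu$, so your $g^2$ is a product of assorted squares rather than a power of $\bu$, and the identification of the relation space with the $\bu$-locus collapses. You also explicitly defer the decisive step of Part~4 (``the main obstacle is to sharpen \dots down to $\dim R\le 1$'') without supplying it.

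For comparison, the paper proves items 1--3 not by a dimension count on the $\bu$-square locus but by a coordinate-support analysis: for generators $z_i,z_j$ in distinct pairs with $z_i^2,z_j^2,\bu$ pairwise different, the Hadamard weight condition forces the sets $X_i$ of order-$4$ coordinates to have size $n/8$ and to partition all coordinates, which pins down $k_1=k_2=\delta=0$ and $z_i^2z_j^2=\bu$, and a fifth paired generator is then excluded by exhibiting an explicit product $z_1'z_3'z_5$ of order $2$ outside $T(\cC)$. Item 4 is handled by the much simpler observation that the elements of $U$ pairwise commute (by Corollary~\ref{HadamardZs}), so item~\ref{Conmutan} of Lemma~\ref{lemm:cuadrados} makes their squares independent, giving $|\gen{W}|\ge 2^{|U|}$ directly. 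Your Part~5 outline has the right count for $\epsilon\le 1$ via Lemma~\ref{lemm:cuadradosHadamard}(3), but the case $\epsilon=2$, $h\le 3$ is not ``a direct computation'': the paper needs a delicate argument assigning to each $\cQ_8$-coordinate a label in $\{1,2,3\}$ and tracking swappers coordinatewise to derive a contradiction from $h=4$. You would need to rebuild Parts 1--4 on the paper's weight/support arguments (or an equivalent) rather than on the false subspace claim.
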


\begin{proof}~
Item 3. Assume $z_i^2=\bu$ for some $i\le 2\epsilon$. Then we may assume without loss of generality that $z_1^2=z_2^2=(z_1,z_2)=\bu$. (Recall that our generating set is normalized.) Then the projection of $\gen{z_1,z_2}$ onto each coordinate is $\cQ_8$, so that $k_1=k_2=0$, and no element of $\cC$ of order $4$ is central, i.e., $\delta=0$.

For every $i=1,\dots,\rho$ let $X_i=\{j:\text{ the } j\text{-th coordinate of } z_i \text{ has order 4}\}$.

We claim that if $z_i^2, z_j^2$ and $\bu$ are pairwise different for some $i,j\le 2\epsilon$ then $k_1=k_2=\delta=0$, $X_i$ and $X_j$ form a partition of $\{1,\dots,l\}$, $z_1^2z_3^2=\bu$ and $\{z_1^2,\dots,z_{\rho}^2\}\subseteq \{z_i^2,z_j^2,\bu\}$. For simplicity we also assume that $i=1$ and $j=3$.
We know that $z_1^2=z_2^2$ and $z_3^2=z_4^2$, and, by Lemma~\ref{lemm:cuadrados}, $(z_1,z_2)\ne \be\ne (z_3,z_4)$. Hence, by Corollary~\ref{HadamardZs}, $(z_1,z_2)=z_1^2=z_2^2$ and $(z_3,z_4)^2=z_3^2=z_4^2$,
and the images of these elements by $\Phi$ have weight $\frac{n}{2}$.
This implies that $X_1=X_2$, the projections of $z_1$ and $z_2$ on the coordinates of $X_1$ generates $\cQ_8$ and $|X_1|=\frac{n}{8}$. Similarly $X_3=X_4$, the projections of $z_3$ and $z_4$ on $X_3$ generate $\cQ_8$ and $|X_3|=|X_4|=\frac{n}{8}$. Moreover, by Corollary~\ref{HadamardZs}, $(z_i,z_j)=\be$ for $i=1,2$ and $j=3,4$. Therefore the projection of $z_3$ and $z_4$ on the coordinates of $X_1$ have order 2. This implies that $X_1$ and $X_3$ are disjoint. Therefore $4(|X_1|+|X_3|)=n$ and hence $k_1=k_2=0$. Furthermore, no element of $\cC$ of order $4$ can commute with each $z_i$ with $i=1,2,3,4$. This implies that $\delta=0$ and, by item 1 of Lemma~\ref{lemm:cuadradosHadamard}, $z_k^2=\bu$ for every $k\ne 1,2,3,4$. This finishes the proof of the claim.

Items 1 and 2. Assume first that $z_i^2,z_j^2$ and $\bu$ are pairwise different for some $i,j\le 2\epsilon$. Observe that this holds if $\epsilon>2$.
By the claim, we may assume without loss of generality that $X_1$ is formed by the first $\frac{n}{8}$ coordinates and $X_3$ is formed by the last $\frac{n}{8}$ coordinates. Moreover, if $\rho>5$, then by the claim $z_5^2=\bu=z_1^2z_3^2$.
Then there are $z'_1\in \gen{z_1,z_2}$ and $z'_3\in \gen{z_3,z_4}$ such that $z'_1$ and $z_5$ coincide in the first coordinate and $z'_3$ and $z_5$ coincide in the last coordinate. As all the coordinates of $z_5$ have order 4, and both the last $\frac{l}{2}$ coordinates of $z_1^2$ and the first $\frac{l}{2}$ coordinates of $z_3^2$ are $\be$, we deduce that the first $\frac{l}{2}$ coordinates of $z'_1$ and the last coordinates $\frac{l}{2}$ of $z'_3$ have order 4. Moreover $(z'_1,z_5)\ne {z'_1}^2$ and $(z'_3,z_5)\ne {z'_3}^2$. So,  $z'^2_1z'^2_3 = z_5^2$ and $(z'_1,z_5)=(z'_3,z_5)=\be$. Then $(z'_1z'_3z_5)^2= z'^2_1z'^2_3z_5^2=\be$, contradicting the fact that $z'_1z'_3z_5\not\in T(\cC)$. This finishes the proof of 1, and proves 2 in case $z_1^2\ne \bu\ne z_3^2$.

Suppose that $\epsilon=2$ and $z_1^2=\bu$. As above, we may assume that $\gen{z_3,z_4}$ projects to $\cQ_8$ in the first $\frac{n}{8}$ coordinates and projects to an element of order at most 2 in the remaining coordinates. Suppose $\rho>4$. By Corollary~\ref{HadamardZs}, $(z_3,z_5)=(z_4,z_5)=\be$, so that the projection of $z_5$ in the first $\frac{n}{8}$ coordinates has order at most 2 and the projection on the remaining coordinates has order 4.
Therefore $(z_3z_5)^2=(z_4z_5)^2 = z_3^2z_5^2=z_4^2z_5^2=\bu$.
Moreover, by the same argument as in the previous paragraph, we could take $z'_1 \in \langle z_1,z_2\rangle$ and  $z'_3 \in \langle z_3,z_4\rangle$ such that $(z'_1,z_5) =(z'_3,z_5)=(z'_1,z'_3)=\be$. This implies that $(z'_1z'_3z_5)^2=z'^2_1z'^2_3z_5^2=\be$, contradicting the fact that $z_1z_3z_5\not\in T(\cC)$. This finishes the proof of 2.

Item 4. Let $t=|U|$ and set $U=\{u_1,\dots,u_t\}$. Observe that $t= \delta+\rho-\epsilon-1$ if $\bu\in W$ and otherwise $t=\delta+\rho-\epsilon$.
From item~\ref{segon} of Corollary~\ref{HadamardZs} the elements of $U$ commute. Moreover the map $\Z_2^t\rightarrow \cC/T(\cC)$ given by $(\alpha_1,\dots,\alpha_t)\mapsto u_1^{\alpha_1}\dots u_t^{\alpha_t}T(\cC)$ is injective. Then, by item~\ref{Conmutan} of Lemma~\ref{lemm:cuadrados}, the rule $(\alpha_1,\dots,\alpha_t)\mapsto u_1^{2\alpha_1}\dots u_t^{2\alpha_t}$ defines a bijection $\Z_2^t\rightarrow T(U)$.
Then $|\gen{W}|\ge|T(\gen{U})|= 2^t\ge 2^{\delta+\rho-\epsilon-1}$.
If $\bu \not\in \gen{U}$ then either $\bu\not\in W$ and hence $t=\delta+\rho-\epsilon$, $W=T(U)$ and $|W|=|T(U)|=2^{\delta+\rho-\epsilon}$ or $\gen{W}=\gen{T(U),\bu}$ and then
$|W|=2|T(U)|=2^{\delta+\rho-\epsilon}$. In both cases $|W|=2^{\delta+\rho-\epsilon}$.

Item 5. Using the argumentation in the proof of Lemma~\ref{lem:KernelSpan} the span of $C$ is $\Phi(\cD)$ where $\cD$ is the group generated $\cC$ and the swappers of $A=\{y_1,\dots,y_{\delta},z_1,\dots,z_{\rho}\}$, where we only take one of the two swappers $[a,b]$ or $[b,a]$ for $a\not=b\in A$.
If $i\le \epsilon$ then $|\gen{[z_{2i-1},a],[z_{2i},a],T(\cC)}/T(\cC)|\le 2$ (Lemma~\ref{lemm:cuadradosHadamard}),  for every $a\in A\setminus \{z_{2i-1},z_{2i}\}$. Therefore, in order to generate $\cD$ modulo $\cC$ it is enough to take the swappers of the form
$[z_j,z_k]$ with $2\epsilon < j < k \le \rho$, the swapper $[z_{2i-1},z_{2i}]$ for each $i\le \epsilon$ and one out of the two swappers $[z_{2i-1},z_j]$ or $[z_{2i},z_j]$ for each $i\le \epsilon$ and $2i<j\le \rho$.
This gives a total of $s=\binom{\delta+\rho-2\epsilon}{2}+\epsilon+\sum_{i=1}^{\epsilon} (\delta+\rho-2i)$ swappers.
Thus $r(C)=\sigma+\delta+\rho+h$ with $h\le s$. If $\epsilon =0$ then $s=\binom{\delta+\rho}{2}$.
If $\epsilon=1$ then $\alpha=1+\binom{\delta+\rho-2}{2}+(\delta+\rho-2)=1+\binom{\delta+\rho-1}{2}$. Finally, assume that $\epsilon=2$. Then $\delta=0$, $\rho=4$ and $h\le s=4$.
We claim that in this case we may assume that $z_1^2=\bu$. Otherwise, by item 2 in this Lemma, $(z_i,z_j)=\be$ and $(z_iz_j)^2=\bu$ for every $i=1,2$ and $j=3,4$, and therefore $(z_1z_3)^2=(z_2z_4)^2=\bu$ and $(z_1z_3,z_2z_4)=(z_1,z_2)(z_3,z_4)=z_1^2z_3^2=\bu$. Thus, replacing $z_1$ and $z_2$ by $z_1z_3$ and $z_2z_4$, respectively, we obtain the desired claim.
To finish the proof it remains to prove that $h\le 3$.
In this case $\cD=\gen{\cC,[z_1,z_2],[z_3,z_4],s_1,s_2}$ where $\{s_1,s_2\}=\{[z_1,z_3],[z_2,z_4]\}$ or $\{s_1,s_2\}=\{[z_1,z_4],[z_2,z_3]\}$. After reordering $z_3$ and $z_4$, if necessary, one may assume that $\cD=\gen{\cC,[z_1,z_2],[z_3,z_4],[z_1,z_3],[z_2,z_4]}$. By means of contradiction assume that $h=4$. This means that
\begin{equation}\label{eq:a}
\cA=\gen{[z_1,z_2],[z_3,z_4],[z_1,z_3],[z_2,z_4],T(\cC)}/T(\cC)
\end{equation}
is of order 16.
By $(z_1,z_2)=\bu$ we have $k_1=k_2=0$. After a suitable reordering we may assume that $z_3^2=(\bu_{\frac{l}{2}},\be_{\frac{l}{2}})$. Write $z_1=(a_1,\dots,a_l)$, $z_2=(b_1,\dots,b_l)$, $z_3=(c_1,\dots,c_l)$ and $z_4=(d_1,\dots,d_l)$.
As $(a_1,b_1)=\ba^2$ and $c_1$ has order 4, $c_1$ does not commute with either $a_1$ or $b_1$. If $(a_1,c_1)=\one$ then replacing $z_1$ by $z_1z_2$ we may assume that $(a_1,c_1)=(b_1,c_1)=\ba^2$. We argue similarly if $(b_1,c_1)=\be$ to deduce that we may always assume that $(a_1,c_1)=(b_1,c_1)=\ba^2$. Then $(z_1,z_3)=(z_2,z_3)=z_3^2$.
For every $x\in \cQ_8$ of order $4$ let $l(x)\in \{1,2,3\}$ with $x\in A_{l(x)}$, where $A_1=\{\ba,\ba^3\}$, $A_2=\{\bb,\ba^2\bb\}$ and $A_3=\{\ba\bb,\ba^3\bb\}$. Then $\{l(a_i),l(b_i),l(c_i)\}=\{1,2,3\}$ for every $i\le \frac{l}{2}$. As $(c_1,d_1)\ne \one$, $l(c_1)\ne l(d_1)$. If $l(b_1)=l(d_1)$ then $(b_1,d_1)=\one$ and therefore $(z_2,z_4)=\be$. Then $l(b_i)=l(d_i)$ for every $i\le \frac{l}{2}$ and hence $s_4=[z_2,z_4]=(\bu_{\frac{l}{2}},\be_{\frac{l}{2}})=z_3^2\in \cC$, contradicting the assumption. Thus $l(b_1)\ne l(d_1)$ and hence $(z_2,z_4)=z_3^2$. Then $l(b_i)\ne l(d_i)$ for every $i\le \frac{l}{2}$. Thus for every $i\le \frac{l}{2}$ we have $\{l(b_i),l(c_i),l(d_i)\}=\{1,2,3\}$, hence $l(a_i)=l(d_i)$ and $[a_i,c_i][c_i,d_i]=[a_i,c_i][d_i,c_i](c_i,d_i)=[a_id_i,c_i](c_i,d_i)=(c_i,d_i)$. We conclude with $[z_1,z_3][z_3,z_4]=(z_3,z_4)^2\in T(\cC)$, a contradiction.
\end{proof}

Next theorem describes the group structure of Hadamard $\Z_2\Z_4\cQ_8$-codes and specify the bounds for the rank and dimension of the kernel in each case.

\begin{theo}\label{ClasificacionHadamard}
 Let $\cC$ is a subgroup of $\Z_2^{k_1}\times \Z_4^{k_2}\times \cQ_8^{k_3}$ such that $\Phi(\cC)$ is a Hadamard code of length $n=2^m$. Let $r=r(C)$ be the rank of $C$ and $k=k(C)$ be the dimension of its kernel.
 Then $\cC$ has a normalized generating set \genset, where $m = \sigma+\delta+\rho-1$, satisfying one of the following conditions.
\begin{enumerate}
 \item\label{rho=0} $\rho=0$. Then $\cC\cong \Z_2^{\sigma-\delta}\times \Z_4^{\delta}$ and $C$ is a \zz code (which could be $\Z_4$-linear code or not), with length $n=2^{m}$ and $m=\sigma+\delta-1$.
 \begin{enumerate}
 \item If $C$ is $\Z_4$-linear then \\
for $\delta\in \{1,2\}$, code $C$ is linear;\\
for $\delta \geq 3$: $k=\sigma+1$; $r = \sigma+\delta+\binom{\delta-1}{2}$.
\item If $C$ is not $\Z_4$-linear then $\sigma > \delta$ and\\
for  $\delta\in \{0,1\}$, code $C$ is linear;\\
for $\delta \geq 2$: $k=\sigma$; $r= \sigma+\delta+\binom{\delta}{2}$.
 \end{enumerate}

 \item\label{rho=arb-tau=0} $\delta=0$, $z_1^2=z_2^2=(z_1,z_2)=\bu$, and $(z_i,z_j)=z_j^2$ and $(z_j,z_k)=\be$ for every $i\in\{1,2\}$ and $3\le j,k \le \rho$. Then $\cC\cong \Z_2^{\sigma-\rho+1}\times (\Z_4^{\rho-2}\rtimes \cQ_8)$,
	$k\ge \sigma\ge \rho-1$ and $r\le \sigma+\rho+1+\binom{\rho-1}{2}$.
 \item\label{rho=arb-tau=arb} $\delta=0$, $z_1^2=\bu\not\in \gen{z_2^2,\dots,z_{\rho}^2}\cong \Z_2^{\rho-1}$ and $(z_1,z_i)=z_i^2$; $(z_i,z_j)=\be$, for every $i\not=j$ in $\{2,\dots,\rho\}$. Then $\cC\cong \Z_2^{\sigma-\rho} \times (\Z_4^{\rho-1}\rtimes \Z_4)$, $k\ge \sigma\ge \rho$ and $r\le \sigma+\rho+\binom{\rho}{2}$.
\item\label{rho=2} $\rho=2$, $\delta \leq 1$, $z_1^2=z_2^2=(z_1,z_2) \not=\bu$. Then $\cC\cong \Z_2^{\sigma-\delta-1}\times \Z_4^{\delta}\times \cQ_8$, $k\ge \sigma\ge \delta+1$ and $r\le \sigma+\delta+\rho+1\leq\sigma+4$.
 \item\label{rho=4} $\delta=0$, $\rho=4$, $z_1^2=z_2^2=(z_1,z_2)=\bu\ne z_3^2=z_4^2=(z_3,z_4)$. Moreover, $(z_i,z_j) \in \langle z_j^2\rangle$ for every $i\in \{1,2\}$ and $j\in\{3,4\}$. Then $\cC\cong \Z_2^{\sigma-2}\times (\cQ_8\rtimes \cQ_8)$ and
	$k\ge \sigma\ge 2$; $r\le \sigma+7$.

\end{enumerate}
\end{theo}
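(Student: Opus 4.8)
The plan is to begin from a normalized generating set \genset (which exists by the previous lemma) and to decide, from the pattern of the commutators $(z_i,z_j)$, which of the five configurations holds; the inequalities for $r$ and $k$ are then extracted from Lemmas~\ref{lem:k} and~\ref{CuadradosIndependientes}. First I would settle the length: since $\Phi(\cC)$ is Hadamard we have $\bu\in\cC$ and $|\cC|=2n=2^{m+1}$, while $|\cC|=2^{\sigma+\delta+\rho}$, so $m=\sigma+\delta+\rho-1$ in every case. If $\rho=0$ then $\cC=Z(\cC)\cong\Z_2^{\sigma-\delta}\times\Z_4^{\delta}$ is abelian and $C$ is a \zz code; this is case~\ref{rho=0}, whose refined values of $(r,k)$ in the $\Z_4$-linear and non-$\Z_4$-linear subcases are the known parameters of Hadamard \zz codes and would be quoted from \cite{kr,PRV06}.

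For $\rho>0$ the core is to determine the commutators. By Lemma~\ref{CuadradosIndependientes}(1) the number $\epsilon$ of equal-square pairs among the $z_i$ is at most $2$, and by normalization at most two $z_i$ square to $\bu$. Corollary~\ref{HadamardZs}, together with items~\ref{Conmutan} and~\ref{Tres} of Lemma~\ref{lemm:cuadrados} and item~1 of Lemma~\ref{lemm:cuadradosHadamard}, shows that two non-commuting generators either share a square (and then form a quaternion pair with $(z_i,z_j)=z_i^2=z_j^2$) or at least one of them squares to $\bu$. I would then branch: if $\epsilon=1$ with the pair squaring to $\bu$, the remaining $z_j$ have pairwise distinct non-$\bu$ squares and must commute, giving case~\ref{rho=arb-tau=0}; if $\epsilon=0$ with a single $\bu$-square generator, the other $z_j$ commute and are inverted by it, giving case~\ref{rho=arb-tau=arb}; if $\epsilon=1$ with the pair squaring to $t\ne\bu$ and no $\bu$-square, then no further non-central generator can exist (a short weight argument also bounds $\delta\le1$), forcing $\rho=2$ and case~\ref{rho=2}; and if $\epsilon=2$, then $\rho=4,\ \delta=0$ by Lemma~\ref{CuadradosIndependientes}(2), and the recombination $z_1,z_2\mapsto z_1z_3,z_2z_4$ from the proof of item~5 there puts one pair at square $\bu$ and the other at $t\ne\bu$, giving case~\ref{rho=4}.

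The step I expect to be the main obstacle is excluding the one leftover pattern, which is a fixed point of the normalization: a single generator $z_1$ with $z_1^2=\bu$ alongside a quaternion pair $z_2,z_3$ of square $t\ne\bu$. Here I would argue by weights. By Corollary~\ref{HadamardZs}, $(z_1,z_j)\in\{\be,t\}$ for $j=2,3$, and $z_1$ cannot commute with both (else it is central). If $z_1$ anticommutes with both, then on each coordinate of the support of $t$ the elements $z_1,z_2,z_3$ are the three pairwise anticommuting order-$4$ elements of $\cQ_8$, whose product is the central involution; counting weights then yields $\w(\Phi(z_1z_2z_3))=\tfrac{3n}{4}\ne\tfrac n2$, contradicting that $\Phi(\cC)$ is Hadamard. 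If $z_1$ commutes with one of them, say $z_3$, then $y:=z_1z_3$ commutes with both $z_2$ and $z_3$ and has order $4$, hence is central, forcing $\delta\ge1$; since $z_1y=\bu z_3$ squares to $t$ and anticommutes with $z_2$, the pair $\{z_1y,z_2\}$ is quaternionic and the configuration collapses to case~\ref{rho=2}. This disposes of the leftover pattern and completes the exhaustiveness.

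It remains to read off the structures and bounds. In each of cases~\ref{rho=arb-tau=0}--\ref{rho=4} the commutator relations identify $\cC$ as an elementary abelian $2$-group times the stated semidirect product. The inequalities $k\ge\sigma$ are Lemma~\ref{lem:k}; the lower bounds $\sigma\ge\rho-1,\ \rho,\ \delta+1,\ 2$ follow from Lemma~\ref{CuadradosIndependientes}(4) using that in each case $\bu\notin\gen{U}$, so its sharper alternative applies; and the upper bounds on $r$ follow from Lemma~\ref{CuadradosIndependientes}(5). The only refinement needed is the rank bound in case~\ref{rho=2} when $\delta=1$: the generic estimate of item~5 only gives $h\le2$, whereas the stated $r\le\sigma+4$ requires $h\le1$. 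I would obtain this by applying Lemma~\ref{lemm:cuadradosHadamard}(3) with $a=z_1,\ b=z_2,\ c=y_1$, which collapses the swappers $[z_1,y_1]$ and $[z_2,y_1]$ modulo $\cC$ and leaves only the single pair-swapper $[z_1,z_2]$.
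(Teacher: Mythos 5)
Your overall strategy (branch on $\epsilon$ and on which generator squares equal $\bu$, then read the structure and the bounds off Lemmas~\ref{lem:k} and~\ref{CuadradosIndependientes}) is the same as the paper's, but the step you yourself single out as the main obstacle does not work as written. In the pattern ``$z_1^2=\bu$ together with a quaternion pair $z_2,z_3$ of common square $t\ne\bu$, with $z_1$ anticommuting with both,'' you claim that on each coordinate of the support of $t$ the product of the three entries is the central involution of $\cQ_8$, whence $\w(\Phi(z_1z_2z_3))=\tfrac{3n}{4}$. This is false: three pairwise anticommuting order-$4$ elements of $\cQ_8$ lie one in each of the classes $A_1=\{\ba,\ba^3\}$, $A_2=\{\bb,\ba^2\bb\}$, $A_3=\{\ba\bb,\ba^3\bb\}$, and the product of one representative from each class is $\one$ or $\ba^2$ depending on the representatives chosen (e.g.\ $\ba\cdot\bb\cdot\ba\bb=\ba^2$ but $\ba^3\cdot\bb\cdot\ba\bb=\one$). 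Hence the support-of-$t$ coordinates contribute anywhere from $0$ to $\tfrac{n}{2}$, the total weight is $w+\tfrac{n}{4}$ with $w$ undetermined, and the value $\tfrac{n}{2}$ is not excluded; no contradiction follows. The paper eliminates this configuration by a group-theoretic rather than metric argument: all three commutators $(z_1z_2z_3,z_i)$ equal $t\cdot t=\be$, so $z_1z_2z_3$ would be central, contradicting the unique-expression property of the generating set (central elements must have all $z$-exponents even). You need this, or some replacement, to close the case.

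A second, smaller issue is exhaustiveness in the branches ``$\epsilon=1$ with $z_1^2=z_2^2=\bu$'' and ``$\epsilon=0$'': you assert these automatically yield shapes~\ref{rho=arb-tau=0} and~\ref{rho=arb-tau=arb}, but that conclusion requires $(z_i,z_j)=z_j^2$ for \emph{both} $i=1,2$ (resp.\ for $i=1$) and all $j\ge 3$ (resp.\ $j\ge2$), which only comes for free when $\bu\notin\gen{U}$. When $\bu\in\gen{U}$ the paper shows the commutator pattern cannot be as required and, after regenerating, the code is of shape~\ref{rho=4} with $\rho=4$ (resp.\ reduces to the $\epsilon=1$ case); likewise the $\delta>0$ sub-case of $\epsilon=0$ is handled there by replacing $z_1$ with $yz_1$. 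These sub-cases are genuine branches of the classification, not routine verifications, and your sketch does not address them.
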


\begin{proof}
If $\cC$ is abelian then condition~\ref{rho=0} holds and the values for the rank and dimension of the kernel are already known \cite{PRV06}. So, in the remainder of the proof we assume that $\cC$ is non-abelian and therefore $\rho\ge 2$. In each case the description of the structure of $\cC$ is a consequence of the relations and the bounds for $r$ and $k$ follow from Lemma~\ref{lem:k} and Lemma~\ref{CuadradosIndependientes}.

We fix a normalized generating set \genset of $\cC$ which will be modified throughout the proof to be adapted to one of the cases.
Let $\epsilon$ be as in Lemma~\ref{CuadradosIndependientes} and reorder the $z_i$'s such that those with equal square are consecutive and placed at the beginning of the list. Then $z_{2i-1}^2=z_{2i}^2=(z_{2i-1},z_i)$ for every $i=1,\dots,\epsilon$ (see the comment after Corollary~\ref{HadamardZs}).
Also, as in Lemma~\ref{CuadradosIndependientes} we set Let $V=\{y_1,\dots,y_{\delta},z_1,z_3,\dots,z_{2\epsilon-1},z_{2\epsilon+1},z_{2\epsilon+2},\dots,z_{\rho}\}$ and  $U=\{ u\in V : u^2 \not=\bu \}$.

\begin{enumerate}
\item
Assume $\epsilon=2$. Then $\delta=0$ and $\rho=4$ (Lemma~\ref{CuadradosIndependientes}). If either $z_1^2$ or $z_3^2$ equals $\bu$ we can assume that $z_1^2=\bu$. If $z_1^2$, $z_3^2$ and $\bu$ are pairwise different we can take $z'_1=z_1z_3$ and $z'_2=z_2z_4$ which is a new normalized generating set $z'_1,z'_2,z_3,z_4$ with ${z'_1}^2=\bu$. By Corollary~\ref{HadamardZs} $(z_i,z_j)=\langle z_j^2\rangle$, for $i=1,2$ and $j=3,4$. Hence, condition~\ref{rho=4} holds and $\bu\notin \gen{U}$, so $\sigma \geq \delta+\rho-\epsilon = 2$.

\item
Assume $\epsilon=1$ and $z_1^2=\bu$.
Then $\delta=0$ (Lemma~\ref{CuadradosIndependientes}).

If $\rho=2$ then condition~\ref{rho=arb-tau=0} holds. In this case $\bu\notin \gen{U}$, so $\sigma \geq \delta+\rho-\epsilon = \rho-1$.

Assume $\rho\ge 3$.
Then $U=\{z_3,\dots,z_{\rho}\}$.
By Corollary~\ref{HadamardZs}, for every $3\le i,j \le \rho$ we have $(z_i,z_j)=\be$ and therefore $\gen{(z_1,z_i),(z_2,z_i)}=\gen{z_i^2}$. By changing the generators $z_1$ and $z_2$ if necessary, we may assume that $(z_1,z_{\rho})=(z_2,z_{\rho})=z_{\rho}^2$. The new generating set is still  normalized.
We have two options: $\bu\not\in \gen{U}$ or $\bu \in \gen{U}$.
In the first case, $(z_1,z_i)=z_i^2$ for every $i\ge 3$ for otherwise $(z_1,z_iz_{\rho})=z_{\rho}^2\not\in \gen{(z_iz_{\rho})^2}$ in contradiction with Lemma~\ref{lemm:cuadradosHadamard}. Similarly $(z_2,z_i)=z_i^2$. Then condition~\ref{rho=arb-tau=0} holds.
We claim that in the second case, so when $\bu \in \gen{U}$, we have $(z_1,z_i)=\be$, $(z_2,z_i)=z_{i}^2$ for some $i\ge 3$. Otherwise
$(z_1,z_i)=(z_2,z_i)=z_i^2$ for every $i=3,\dots,\rho$. Then $(z_1z_2,z_i)=\be$ for every $j\ge 3$.
After reordering the $z_i$'s we may assume that $(z_3\cdots z_k)^2=\bu$. Then $(z_1z_2z_3\cdots z_k)^2=\be$ contradicting the fact that $z_1\dots z_k\not \in T(C)$. This proves the claim. So assume that $\bu\in \gen{U}$ and (after reordering the $z_i$) $(z_1,z_3)=\be$ and $(z_2,z_3)=z_3^2$. Then $(z_1,z_3z_{\rho})^2=z_{\rho}^2\not \in \gen{(z_3z_{\rho})^2}$ and therefore $(z_3z_{\rho})^2=\bu$.  If $4\le i <\rho$ then $(z_iz_{\rho})^2\ne (z_3z_{\rho})^2 = \bu$ and therefore $(z_1,z_i)=(z_2,z_i)=z_i^2$. Thus $(z_3z_i)^2=\bu=(z_3z_{\rho})^2$ which is not possible. This proves that $\rho=4$. Now we can construct a new generating set $\{z'_1=z_1z_2, z'_2=z'_2, z'_3=z_1z_3,z'_4=z_4\}$ and it is easy to check that $z'^2_1 = z'^2_2 = (z'_1,z'_2) = \bu \not = z'^2_3 = z'^2_4 = (z'_3,z'_4)$ and $(z'_i,z'_j) = z_3'^2$, for every $i=1,2$ and $j=3,4$.
Thus, $\epsilon=2$ which has been treated before.

\item
Assume $\epsilon=1$ and $z_1^2\not=\bu$.  We can have $\rho \geq 3$ or $\rho=2$.

In the first case, so if $\rho\ge 3$ then $(z_i,z_j)=(z_j,z_k)=\be$ for every $i\ne 2$ and every $j,k\ge 3$ such that $z_j^2=\be$,  by Corollary~\ref{HadamardZs}. As each $z_j$ is not central,  $z_j^2=\bu$ for some $j\ge 3$ and we may assume that $z_3^2=\bu$. After reordering coordinates one also may assume that $\gen{z_1,z_2}$ projects to $\cQ_8$ in the first $\frac{n}{8}$ coordinates. Then either $(z_1,z_3)\ne \be$ or $(z_2,z_3)\ne \be$ so, for instance, $(z_1,z_3)\ne\be$.  If $(z_2,z_3)= \be$ then replacing $z_2$ by $z_1z_2$, we can always assume that  $(z_2,z_3)\ne \be$. Then $(z_i,z_3)=z_i^2$ for $i=1,2$, by Lemma~\ref{lemm:cuadradosHadamard}. If it were $3<\rho$ then $z_4$ projects to an element of order at most 2 in the first $\frac{n}{8}$ coordinates and to an element of order 4 in the remaining coordinates and $(z_3,z_4)=z_4^2$.
Then $x_1,\dots,x_{\sigma};y_1,\dots,y_{\delta};z_1,z_2,z_3,z'_4=z_1z_4$ is a new generating set such that ${z'_4}^2 = z_1^2 z_4^2 = \bu = (z_3,{z'_4})$, with $\epsilon=2$, which is out of our initial assumption about $\epsilon=1$. Hence, we have $\rho=3$ with $z_3^2=\bu$ and $(z_i,z_3)=z_i^2$ for every $i=1,2$. Then $(z_1z_2z_3,z_1)=(z_1z_2z_3,z_2)=(z_1z_2z_3,z_3)=\be$. Therefore $z_1z_2z_3\in Z(\cC)$, a contradiction.

In the second case, so when $\rho=2$, assume $\delta >1$. We can reorder the coordinates in such a way that $\gen{z_1,z_2}$ projects to $\cQ_8$ in the first $\frac{n}{8}$ coordinates and then take two elements $y_1, y_2$ of order four which commutes with $z_1$ and $z_2$. The first $\frac{n}{8}$ coordinates of both $y_i$ must be of order at most two and so, $y_1y_2$ is of order two, contradicting the fact that $y_1,y_2$ are elements of a generating set. Hence, $\delta\leq 1$. In this case, note that if we take $\cA =\gen{x_1,\dots,x_{\sigma};y_1;z_1}$ then $A=\Phi(\cA)$ is a linear code. Indeed, the value of all swappers is $\be$ except for $[y_1,y_1]=y^2_1$ and $[z_1,z_1]=z^2_1$, which are also values belonging to $\cA$. Code $\cC =\gen{\cA,z_2}$ has only one possible swapper given by $[z_2,z_1]$ and so, $r(C)\leq m+2$. Then, condition~\ref{rho=2} holds.

\item
Finally assume $\epsilon=0$. Necessarily $z_i^2=\bu$ for some $i$, since $(z_i,z_j)=\be$ if $z_i^2,z_j^2$ and $\bu$ are pairwise different. We may assume that $z_1^2=\bu$. Then $(z_1,z_i)=z_i^2$ and $(z_i,z_j)=\be$ for every $i,j=2,\dots,\rho$.
We have $U=\{z_2,\dots,z_{\rho}\}$ and $\gen{U}$ is isomorphic to $\Z_2^{\rho-1}$. We have two options: $\bu\in \gen{U}$ or $\bu\not\in \gen{U}$.
In the first case, reordering $z_2,\dots,z_{\rho}$, we may assume that $z_2^2z_3^2\dots z_k^2=\bu$ for some $2<k \le \rho$. Then we change the set of generators by replacing $z_2$ by $z_2\dots z_k$.
Observe that we have passed from a normalized generating set with $\epsilon=0$ to a normalized one with $\epsilon=1$ and $(z_1,z_2)=z_1^2=z_2^2=\bu$. This case is already studied.
If $\bu\notin \gen{U}$ then $\delta=0$ and condition~\ref{rho=arb-tau=arb} holds.
Otherwise, there exists an element of order four, $y$ commuting with $z_1$ and $z_2$. But $(z_1,z_2)= z^2_2\not= \bu$, so the coordinates of order four in $z_2$ should coincide with the coordinates of order at most two in $y$ and then $y^2z^2_2=\bu$ or, the same $(yz_1)^2= z^2_2$.  Then $x_1,\dots,x_{\rho};y_1,\dots,y_{\delta};z'_1=yz_1,z_2,z_3,\dots,z_{\rho}$ is a new normalized generating set with $\epsilon=1$, and $(z'_1)^2=z_2^2=(z'_1,z_2)\not=\bu$, a case which has been treated before.
This finishes the proof.\end{enumerate}
\end{proof}

If $\cC$ is a subgroup of $\Z_2^{k_1}\times \Z_4^{k_2}\times \cQ^{k_3}$ such that $C=\Phi(\cC)$ is a Hadamard code and with a normalized set of generating vectors satisfying condition $i$ in Theorem~\ref{ClasificacionHadamard} we will say that $\cC$ is of \emph{shape} $i$.

\begin{coro}\label{Cotas}
Let $\cC$ be a subgroup of $\Z_2^{k_1}\times \Z_4^{k_2}\times \cQ^{k_3}$ of length $2^m$ and type $(\sigma,\delta,\rho)$ such that $C=\Phi(\cC)$ is a Hadamard code. Let $k=k(C)$ and $r=r(C)$. Then the following statements hold:
\begin{enumerate}
 \item $\left\lceil \frac{m}{2} \right\rceil \le \sigma \leq k \leq  m+1 \le r \le m+1+\binom{\delta+\rho}{2}$ and $\delta+\rho=m+1-\sigma \le \left\lfloor \frac{m+2}{2}\right \rfloor$, with one exception for a code with parameters $m=5,\sigma=2,\delta=0,\rho=4$.
 \item  
$r\le \left\{\matriz{{ll} m+1+\binom{\frac{m+1}{2}}{2}, & \text{if } m \text{ is odd}; \\ m+2+\binom{\frac{m}{2}}{2}, & \text{if } m \text{ is even}.}\right.$

More precisely:
$$r-(m+1)\le \left\{\matriz{{ll}
\binom{\frac{m-1}{2}}{2}, & \text{if } m \text{ is odd and } \cC \text{ is of shape~\ref{rho=0}}; \\
1+\binom{\frac{m-1}{2}}{2}, & \text{if } m \text{ is odd and } \cC \text{ is of shape~\ref{rho=arb-tau=0} }; \\
\binom{\frac{m+1}{2}}{2}, & \text{if } m \text{ is odd and } \cC \text{ is of shape~\ref{rho=arb-tau=arb}}; \\
\binom{\frac{m}{2}}{2}, & \text{if } m \text{ is even and } \cC \text{ is of shape~\ref{rho=0} or 3}; \\
1+\binom{\frac{m}{2}}{2}, & \text{if } m \text{ is even and } \cC \text{ is of shape~\ref{rho=arb-tau=0} }; \\
1, & \text{if } \cC \text{ is of shape~\ref{rho=2}}; \\
3, & \text{if } \cC \text{ is of shape~\ref{rho=4}}; \\
}\right.$$
\end{enumerate}
\end{coro}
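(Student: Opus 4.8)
The plan is to derive everything from the cardinality identity $\sigma+\delta+\rho=m+1$ together with the shape classification of Theorem~\ref{ClasificacionHadamard}. Since $C$ is a Hadamard code of length $n=2^m$ it has $2n=2^{m+1}$ codewords; as $\Phi$ is a bijection and $|\cC|=2^{\sigma+\delta+\rho}$ by Definition~\ref{def:type}, we get $\sigma+\delta+\rho=m+1$, so in particular $\delta+\rho=m+1-\sigma$. The remaining inequalities of the chain in part~1 are then immediate from earlier results: $\sigma\le k$ is Lemma~\ref{lem:k}; $k\le m+1$ because $\zero\in C$ forces $K(C)\subseteq C$ and $|C|=2^{m+1}$; $m+1\le r$ because $r=\dim\gen{C}\ge\log_2|C|$; and $r\le m+1+\binom{\delta+\rho}{2}$ is item~3 of Lemma~\ref{lem:KernelSpan} rewritten using $\sigma+\delta+\rho=m+1$.

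The one non-trivial bound in part~1 is $\lceil m/2\rceil\le\sigma$, which I would prove by examining each shape and combining its intrinsic lower bound on $\sigma$ with $\sigma+\delta+\rho=m+1$. For shape~\ref{rho=0} ($\rho=0$, $\sigma\ge\delta$) one gets $2\sigma\ge\sigma+\delta=m+1$; for shape~\ref{rho=arb-tau=0} ($\delta=0$, $\sigma\ge\rho-1$), $2\sigma\ge m$; for shape~\ref{rho=arb-tau=arb} ($\delta=0$, $\sigma\ge\rho$), $2\sigma\ge m+1$; for shape~\ref{rho=2} ($\rho=2$, $\sigma\ge\delta+1$), $2\sigma\ge m$. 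In each case $2\sigma\ge m$, hence $\sigma\ge\lceil m/2\rceil$. Shape~\ref{rho=4} is the exceptional one: there $\delta=0$, $\rho=4$, so $\sigma=m-3$, and $m-3\ge\lceil m/2\rceil$ fails precisely for odd $m<7$ and even $m<6$; since the shape also requires $\sigma\ge 2$, i.e.\ $m\ge 5$, the only surviving exception is $(m,\sigma,\delta,\rho)=(5,2,0,4)$, as claimed. Finally $\delta+\rho=m+1-\sigma\le m+1-\lceil m/2\rceil=\lfloor(m+2)/2\rfloor$ whenever $\sigma\ge\lceil m/2\rceil$, and the same exceptional code (with $\delta+\rho=4>3$) is the lone failure.

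For part~2 I would estimate $h:=r-(m+1)$ shape by shape using item~5 of Lemma~\ref{CuadradosIndependientes}. This requires first reading off $\epsilon$ in each shape --- $\epsilon=0$ for shapes~\ref{rho=0} and \ref{rho=arb-tau=arb}, $\epsilon=1$ for shapes~\ref{rho=arb-tau=0} and \ref{rho=2}, $\epsilon=2$ for shape~\ref{rho=4} --- and then maximizing the resulting expression under the constraints from part~1. One obtains $h\le\binom{\rho}{2}$ for shape~\ref{rho=arb-tau=arb}, $h\le 1+\binom{\rho-1}{2}$ for shape~\ref{rho=arb-tau=0}, $h\le 3$ for shape~\ref{rho=4}, and $h\le 1$ for shape~\ref{rho=2} (here using the sharper estimate $r\le m+2$ established inside the proof of Theorem~\ref{ClasificacionHadamard}, rather than the weaker generic bound of item~5 of Lemma~\ref{CuadradosIndependientes}). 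Substituting $\rho=m+1-\sigma$ and the admissible range of $\sigma$ gives the per-shape entries of the table. Taking the maximum over all shapes then yields the two coarse bounds: via the Pascal identity $\binom{(m+1)/2}{2}=\binom{(m-1)/2}{2}+\tfrac{m-1}{2}$, shape~\ref{rho=arb-tau=arb} dominates for odd $m$ and shape~\ref{rho=arb-tau=0} for even $m$, while shape~\ref{rho=4}'s value $3$ is absorbed once $m$ is large enough --- and for $m=4$ shape~\ref{rho=4} simply cannot occur, since $\sigma\ge 2$ would fail.

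The step I expect to be most delicate is shape~\ref{rho=0} in part~2 for odd $m$. The naive bound $h\le\binom{\delta}{2}$ together with $\delta\le(m+1)/2$ overshoots the claimed $\binom{(m-1)/2}{2}$; the point is that the extremal choice $\delta=(m+1)/2$ forces $\sigma=\delta$, so $\cC\cong\Z_4^{\delta}$ is $\Z_4$-linear and lands in the subcase of Theorem~\ref{ClasificacionHadamard}(\ref{rho=0}) with the smaller rank $\sigma+\delta+\binom{\delta-1}{2}$, whereas any non-$\Z_4$-linear code has $\sigma>\delta$ and hence $\delta\le(m-1)/2$. Keeping these two subcases apart, and matching each extremal parameter vector to the subcase that actually realizes it, is the only place where a blind optimization over $(\sigma,\delta,\rho)$ would give a bound that is too large.
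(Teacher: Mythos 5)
Your proposal is correct and follows essentially the same route as the paper: both derive $\sigma+\delta+\rho=m+1$ from $|C|=2^{m+1}$, combine it with the per-shape lower bounds on $\sigma$ from Theorem~\ref{ClasificacionHadamard} to get $\left\lceil m/2\right\rceil\le\sigma$ with the single exceptional parameter vector $(m,\sigma,\delta,\rho)=(5,2,0,4)$, and then bound $h=r-(m+1)$ shape by shape via item~5 of Lemma~\ref{CuadradosIndependientes} (with the sharper $r\le m+2$ for shape~\ref{rho=2} and the split of shape~\ref{rho=0} into the $\Z_4$-linear and non-$\Z_4$-linear subcases). The delicate point you flag for shape~\ref{rho=0} with $m$ odd is exactly how the paper handles it.
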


\begin{proof}

Item 1. It is clear that $k \leq  m+1 =\sigma+\delta+\rho \le r$. From Lemma~\ref{lem:k} we have $\sigma \leq k$ and Lemma~\ref{lem:KernelSpan} gives $r \le m+1+\binom{\delta+\rho}{2}$.
Moreover  $\sigma \geq \delta+\rho-1$, except for the case $\epsilon=2$, which is  shape \ref{rho=4}. In this last case we also have $\delta=0$ and $\rho=4$, so $\sigma$ still fulfils the inequality $\sigma \geq \delta+\rho-1$, with the exception of $(m=5,\sigma=2,\delta=0,\rho=4)$.
Hence $m+1=\sigma+\delta+\rho\le 2\sigma+1$ (with the above exception) and therefore $\left\lceil \frac{m}{2} \right\rceil \le \sigma \leq k$.
Then $\delta+\rho=m+1-\sigma\le m+1-\frac{m}{2}=\frac{m+2}{2}$.


Item 2. Let $h=r-(m+1)$.

For shape~\ref{rho=0}(a), $m+1=\sigma+\delta\ge 2\delta$, hence $\delta-1\le \left\lfloor\frac{m-1}{2}\right\rfloor$. Hence, $h=r-(m+1) = (\sigma+\delta + \binom{\delta-1}{2} -(m+1)=\binom{\delta-1}{2}$. Thus, if $m$ is odd then $h\le \binom{\frac{m-1}{2}}{2}$ and if $m$ is even $h\le \binom{\frac{m-2}{2}}{2}<\binom{\frac{m}{2}}{2}$.

For shape~\ref{rho=0}(b), $m+1=\sigma+\delta\ge 2\delta+1$, so $\delta\le \left\lfloor\frac{m}{2}\right\rfloor$ and $h=\binom{\delta}{2}$. Thus, if $m$ is odd then $h \le \binom{\frac{m-1}{2}}{2}$ and if $m$ is even $h \le \binom{\frac{m}{2}}{2}$.

For shape~\ref{rho=arb-tau=0}, $m+1=\sigma+\rho\ge 2\rho-1$, so that $\rho-1\le \left\lfloor\frac{m}{2}\right\rfloor$, and $h\le 1+\binom{\rho-1}{2}$. Hence, if $m$ is odd then $h \le 1+\binom{\frac{m-1}{2}}{2}$ and if $m$ is even $h \le 1+\binom{\frac{m}{2}}{2}$.

For shape~\ref{rho=arb-tau=arb}, $\sigma\ge \rho$ and $m+1=\sigma+\rho\ge 2\rho$, so $\rho\le \left\lfloor\frac{m+1}{2}\right\rfloor$. Moreover, $h\le \binom{\rho}{2}$. Hence, if $m$ is odd then $h \le \binom{\frac{m+1}{2}}{2}$ and if $m$ is even $h \le \binom{\frac{m}{2}}{2}$.

For shape~\ref{rho=2}, $r\leq \sigma+\delta+\rho+1 \leq m+2$. Hence, $h=r-(m+1)\leq 1$.

Finally, the bound for shape~\ref{rho=4}, comes from Lemma~\ref{CuadradosIndependientes}.
\end{proof}

\begin{example}[A Hadamard $\Z_2\Z_4\cQ_8$-code with $\left\lceil \frac{m}{2} \right\rceil >  k$ and $m+1-\sigma>\left\lfloor\frac{m+2}{2}\right\rfloor$]\label{Hadamard3228}{\rm
By item 1 in Corollary~\ref{Cotas}, there are not Hadamard $\Z_2\Z_4\cQ_8$-codes of length $2^m$ such that neither $\left\lceil \frac{m}{2} \right\rceil < k$ nor $m+1-\sigma > \left\lfloor\frac{m+2}{2}\right\rfloor$, except perhaps for a code of parameters $(m=5,\sigma=2,\delta=0,\rho=4)$.
We present one example of such a code.

Consider the subgroup $\cC$ of $\cQ_8^8$ generated by
 \begin{eqnarray*}
 z_1 &=& (\ba,\ba,\ba,\ba,\ba,\ba,\ba,\ba), \\
 z_2 &=& (\bb,\bb,\ba\bb,\ba\bb,\bb,\bb,\ba\bb,\ba\bb), \\
 z_3 &=& (\ba,\ba,\ba^3,\ba^3,\one,\one,\ba^2,\ba^2), \\
 z_4 &=& (\bb,\bb^3,\ba\bb,\ba^3\bb,\one,\ba^2,\one,\ba^2).
 \end{eqnarray*}
Then $\cC$ is of type $(2,0,4)$, $z_1,z_2,z_3,z_4$ is a normalized generating system of $\cC$, which is of shape~\ref{rho=4}  and $\Phi(\cC)$ is a Hadamard code. Note that $k(C)=2=\frac{m-1}{2}<3=\left\lceil\frac{m}{2}\right\rceil$, $r(C)=8=m+3$ and $m+1-\sigma=4>3=\left\lfloor\frac{m+2}{2}\right\rfloor$.
}\end{example}

\begin{example}[The Hadamard codes of length 16]\label{Hadamar16}{\rm
Let $C$ be a Hadamard code of length 16 and let $r=r(C)$ and $k=k(C)$. As it was explained at the beginning of this section $(r,k)=\{(5,5),(6,3),(7,2),(8,1),(8,2)\}$. Of course if $(r,k)=(5,5)$ then $C$ is $\Z_2$-linear. If $(r,k)=(6,3)$ then $C$ is \zz and if $(r,k)\in \{(7,2),(8,1),(8,2)\}$ then $C$ is not a \zz linear code \cite{PRV06}.
In Proposition~\ref{Q8linearN16} we have exhibited a Hadamard $\cQ_8$-code of length 16 with $(r,k)=(7,2)$ and from item 2 of Corollary~\ref{Cotas} the upper bound for the rank of $\Z_2\Z_4\cQ_8$-codes of length $2^4$ is 7. Hence, the Hadamard codes of length 16 and rank 8 are not $\Z_2\Z_4\cQ_8$-codes.
}\end{example}

\section{Recursive constructions of Hadamard $\qq$-codes}\label{sec:Recursive}

In this section we present some methods to construct \qh codes from a given Hadamard code.

The complement of a binary vector $c$ is denoted $\overline{c}$. Observe that if $x\in \cG$ then $\overline{\Phi(c)} = \Phi(\bu c)$.

\subsection{From $\Z_2\Z_4$-linear Hadamard codes to Hadamard $\Z_4\cQ_8$-codes}

It is known \cite{PRV06} that for any $m$ we have $\lfloor \frac{m-1}{2}\rfloor$ nonequivalent $\Z_4$-linear Hadamard codes of binary length $n=2^m$. These codes can be characterized by the parameter $\delta$. Note that $\delta \in \{1,2,\ldots, \lfloor \frac{m+1}{2} \rfloor\}$, but the values $\delta =1,2$ give codes equivalent to the linear Hadamard. Also in~\cite{PRV06} there are described the $\Z_2\Z_4$-linear Hadamard codes (which are not $\Z_4$-codes). For any $m$ there are $\lfloor \frac{m}{2}\rfloor$ nonequivalent such codes of binary length $n=2^m$. As for the $\Z_4$-linear case, these codes can be characterized by the parameter $\delta \in \{0,1,2,\ldots, \lfloor \frac{m}{2} \rfloor\}$ and the values $\delta =0,1$ give codes equivalent to the linear Hadamard.

We begin by taking a $\Z_2\Z_4$-linear Hadamard code to obtain a Hadamard $\Z_4\cQ_8$-code.  Let $C=\Phi(\cC)$ be a $\Z_2\Z_4$-linear code, where $\cC$ is a subgroup of $\Z_2^{\alpha} \times \Z_4^{\beta}$.
Let $\xi_1: \Z_2 \longrightarrow \Z_4$ be the homomorphism defined by $\xi_1(i) = 2i$ and let $\xi_2:\Z_4 \longrightarrow \gen{\ba}\subseteq \cQ_8$ be the homomorphism defined by $\xi_2(i) = \ba^{i}$ and generalize those to a componentwise group homomorphism $\xi:\Z_2^{\alpha}\times \Z_4^{\beta}\rightarrow \Z_4^{\alpha}\times \cQ_8^{\beta}$. Let $\cC_q$ be the $\Z_4\cQ_8$-code $\xi(\cC)$ of binary length $2(\alpha+2\beta)=2n$. Code $\cC_q$ is a  $\Z_4\cQ_8$-code of the same type as $\cC$.

Assume that $\Phi(\cC)$ is a Hadamard code. Then the length of $\Phi(\cC_q)$ is $2n$ and all the codewords of $\Phi(\cC_q)$ have length $0$, $n$ or $2n$. However $\Phi(\cC_q)$ is not a Hadamard code since $|\cC_q|=|C|=2n$. Hence, to obtain a Hadamard code we need to double the cardinality of this code. We do that by taking $\cC^{(x)}=\gen{\cC_q,x}$ for an appropriate element $x\in \Z_4^{\alpha}\times \cQ_8^{\beta}$ of order 2 modulo $\cC_q$ which normalizes $\cC_q$ and $C^{(x)}=\Phi(\cC^{(x)})$. Then $\cC^{(x)}=\cC_q \cup\,x \cC_q$ and to make sure that $C^{(x)}$ is a Hadamard code we must choose $x$ so that
\begin{equation}\label{Condition}
\w(\Phi(x c))=n, \text{ for every }c\in \cC
\end{equation}

If $x$ has order 2 then, after reordering the coordinates we may assume that $x=(e_{l_1},\bu_{l_2})$. Then $C^{(x)}=\Phi(\cC_q)\cup \{(c_1,\overline{c_2}): (c_1,c_2)\in C\}$, where both $c_1$ and $c_2$ have length $n$. Observe that $\w(c_1,c_2)=\w(c_1)+n-\w(c_2)$.
Thus for $C^{(x)}$ to be a Hadamard code it is necessary that $\w(c_1)=\w(c_2)$ for every $(c_1,c_2)\in C$.

\begin{example}\label{Ejemplo16Lineal}{\rm
Take $\cC=\gen{(1,1,1,1),(2,0,1,3)}$, which is a $\Z_4$-linear code, but with the same codewords as the linear Hadamard code of length $8$. Then $\cC_q=\gen{(\ba,\ba,\ba,\ba),(\ba^2,\one,\ba,\ba^3)}$ and taking $x=(\one,\one,\ba^2,\ba^2)$ we obtain $C^{(x)}$, which is a Hadamard code of length 16 (with the same codewords as the binary linear Hadamard code of length 16).}
\end{example}

One way to ensure that $C^{(x)}$ is a Hadamard code is taking $x=(x_1,\dots,x_{\frac{n}{2}})$ with each $x_i\in \cQ_8\setminus \gen{\ba}$. Condition (\ref{Condition}) above is satisfied because for every $c\in \cC$, all the coordinates of $x c$ have order 4 and therefore $\w(\Phi(x c))=n$, as desired. The rank and dimension of the kernel of $C{(x)}$ depends on the election of $x$.

\begin{example}\label{Ejemplo16-72-63}{\rm
Take $\cC=\gen{(1,1,1,1),(2,0,1,3))}\subset \Z_4^4$, as in the Example~\ref{Ejemplo16Lineal}. If we choose $x=(\bb,\bb,\bb,\bb)$ then $C^{(x)}$ is again the (unique up to equivalence) binary linear code of length 16. If we take $x=(\bb,\ba\bb,\bb,\ba\bb)$ then $\cC^{(x)}$ is the group of Proposition~\ref{Q8linearN16}
and hence $C^{(x)}$ is the Hadamard $\cQ_8$-code of length 16 with rank $7$ and dimension of the kernel 2.
Finally, if we choose $y=(\bb,\bb,\bb,\ba^3\bb)$ then $C^{(y)}$ is a Hadamard $\cQ_8$-code of length 16, with rank 6 and dimension of kernel 3.
Hence the three Hadamard $\qq$-codes of length $16$ can be obtained applying our construction to $\cC$.
}\end{example}

The following theorem shows that most Hadamard $\Z_2\Z_4\cQ_8$-codes can be obtained with this construction.

\begin{theo}\label{theo:z2z4tohadamard}
Let $C'$ be a Hadamard $\Z_2\Z_4\cQ_8$-code. Assume that $C'$ is either of shape~\ref{rho=arb-tau=0} or \ref{rho=arb-tau=arb}. Then $C'$ is equivalent to $C^{(z)}$ for $C$ a \zz code and some $z$.
\end{theo}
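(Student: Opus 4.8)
The plan is to recognise $C'$ itself as an output $C^{(z)}$ of the construction, by locating inside $\cC'$ the abelian index-$2$ subgroup that will play the role of $\cC_q=\xi(\cC)$, together with the element playing the role of $x$.

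\textbf{Step 1 (the abelian half and the doubling element).} I would fix a normalized generating set \genset as in Theorem~\ref{ClasificacionHadamard}; in both shapes $\delta=0$ and $z_1^2=\bu$. For shape~\ref{rho=arb-tau=arb} I put $z=z_1$ and $\cC_q=\gen{x_1,\dots,x_{\sigma},z_2,\dots,z_{\rho}}$: since $(z_i,z_j)=\be$ for $i,j\ge 2$ this is abelian, and it has index $2$ because $z_1\notin\cC_q$ while $z_1^2=\bu\in\gen{x_1,\dots,x_{\sigma}}=T(\cC')\subseteq\cC_q$. For shape~\ref{rho=arb-tau=0}, where $\gen{z_1,z_2}\cong\cQ_8$, the product $z_1z_2$ centralises every $z_j$ with $j\ge3$ (because $z_2z_jz_2^{-1}=z_j^{-1}$ and then $z_1z_j^{-1}z_1^{-1}=z_j$), so I put $z=z_1$ and $\cC_q=\gen{x_1,\dots,x_{\sigma},z_1z_2,z_3,\dots,z_{\rho}}$, again abelian of index $2$ with $\bu=(z_1z_2)^2\in\cC_q$. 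In either case $z^2=\bu\in\cC_q$ and $\cC'=\gen{\cC_q,z}$; moreover, since each relevant commutator equals the corresponding square (e.g.\ $(z_j,z_1)=z_j^2$ and $(z_1z_2,z_1)=(z_1z_2)^2$), conjugation by $z$ inverts every generator of the abelian group $\cC_q$, hence inverts $\cC_q$.

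\textbf{Step 2 (normalising coordinates so that $\cC_q\subseteq\mathrm{Im}\,\xi$).} From $z^2=\bu$ every coordinate of $z$ has order $4$, so there can be no $\Z_2$-coordinate, i.e.\ $k_1=0$. As $z$ inverts $\cC_q$: in each $\Z_4$-coordinate conjugation is trivial, so the entries of $\cC_q$ there have order $\le 2$, i.e.\ lie in $\{0,2\}=\xi_1(\Z_2)$; in each $\cQ_8$-coordinate the projection $(\cC_q)_c$ is an abelian, hence cyclic, subgroup of $\cQ_8$ inverted by $z_c$. It then remains to rotate each such cyclic subgroup into $\gen{\ba}$. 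The key fact is that the order-$3$ automorphism of $\cQ_8$ cyclically permuting the three order-$4$ cyclic subgroups $\gen{\ba},\gen{\bb},\gen{\ba\bb}$ is induced by a $3$-cycle on the four binary coordinates of a $\cQ_8$-block (checked directly from the definition of $\phi$); since this $\Z_3$-action is transitive on the three subgroups, a suitable block permutation, being a code equivalence, places $(\cC_q)_c\subseteq\gen{\ba}$ for every $\cQ_8$-coordinate. Replacing $C'$ by this equivalent code I now have $\cC_q\subseteq\{0,2\}^{k_2}\times\gen{\ba}^{k_3}=\mathrm{Im}\,\xi$, so $\cC:=\xi^{-1}(\cC_q)$ is a well-defined subgroup of $\Z_2^{k_2}\times\Z_4^{k_3}$ with $\xi(\cC)=\cC_q$, and $\Phi(\cC)$ is a \zz code.

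\textbf{Step 3 (identifying the construction).} A coordinatewise check gives $\w(\Phi(\xi(w)))=2\,\w(\Phi(w))$ for every $w$. Since $\xi(\cC)=\cC_q\subseteq\cC'$ and $\Phi(\cC')$ is Hadamard of length $n=2^m$, the weights of $\Phi(\xi(w))$ lie in $\{0,\frac{n}{2},n\}$, whence those of $\Phi(\cC)$ lie in $\{0,\frac{n}{4},\frac{n}{2}\}$; together with $|\cC|=|\cC_q|=\tfrac12|\cC'|=n=2\cdot\frac{n}{2}$ this shows (by the criterion used in the proof of Proposition~\ref{Q8linearN16}) that $\Phi(\cC)$ is a Hadamard \zz code of length $\frac{n}{2}$. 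Finally $z$ satisfies the hypotheses of the construction: it has order $2$ modulo $\cC_q$ and normalises $\cC_q$, and it meets~(\ref{Condition}) because for $c\in\cC_q$ one has $zc\ne\be$ (else $c=z^{-1}\notin\cC_q$) and $zc\ne\bu$ (else $c=z\notin\cC_q$), so $\w(\Phi(zc))=\frac{n}{2}$. Hence $\cC^{(z)}=\gen{\cC_q,z}=\cC'$, and $C'$ equals $C^{(z)}$ up to the equivalence of Step 2, with $C=\Phi(\cC)$ the above \zz Hadamard code.

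\textbf{Where the difficulty lies.} The delicate point is Step 2: the automorphisms of $\cQ_8$ realised by binary coordinate permutations form only the subgroup $A_4$ (transpositions and $4$-cycles do \emph{not} induce automorphisms), so they act on the three order-$4$ cyclic subgroups as $\Z_3$ rather than as $S_3$. The argument that one can nonetheless move an arbitrary cyclic subgroup into $\gen{\ba}$ rests precisely on the transitivity of this $\Z_3$-action. A secondary subtlety, specific to shape~\ref{rho=arb-tau=0}, is recognising that the correct abelian complement must be built from $z_1z_2$ rather than $z_2$, which is forced by the fact that $z_2$ does not commute with the remaining generators.
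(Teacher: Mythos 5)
Your proof is correct and follows essentially the same route as the paper's: the same abelian index-$2$ subgroup ($\gen{x_1,\dots,x_{\sigma},z_1z_2,z_3,\dots,z_{\rho}}$ for shape~\ref{rho=arb-tau=0}, resp.\ $\gen{x_1,\dots,x_{\sigma},z_2,\dots,z_{\rho}}$ for shape~\ref{rho=arb-tau=arb}), the same doubling element $z_1$, and the same coordinate permutation placing that subgroup inside $2\Z_4^{\alpha}\times\gen{\ba}^{\beta}$. Your Steps 2 and 3 simply make explicit two details the paper leaves implicit, namely the transitivity of the coordinate-permutation action on the three cyclic subgroups of order $4$ of $\cQ_8$, and the verification that $\Phi(\cC)$ is Hadamard and that $z_1$ satisfies condition~(\ref{Condition}).
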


\begin{proof}
 Assume that $C'=\phi(\cC')$ with $\cC'$ a subgroup of $\Z_2^{k_1}\times \Z_4^{\alpha}\times \cQ_8^{\beta}$ and let \genset a normalized generating set of $\cC'$ satisfying either condition~\ref{rho=arb-tau=0} or \ref{rho=arb-tau=arb} of Theorem~\ref{ClasificacionHadamard}. As $z_1^2=\bu$, we have $k_1=0$. Moreover, for shape~\ref{rho=arb-tau=0}, $(z_1,z_2)=\bu$ and therefore $\alpha=0$. Let
	$$\cC''=\left\{\matriz{{ll} \gen{x_1,\dots,x_{\sigma},z_1z_2,z_3,\dots,z_{\rho}}, & \text{for shape } \ref{rho=arb-tau=0}; \\
													 \gen{x_1,\dots,x_{\sigma};z_2,z_3,\dots,z_{\rho}}, & \text{for shape } \ref{rho=arb-tau=arb}.}\right.$$
Then $\cC''$ is an abelian subgroup of $\cC$ of index 2. Moreover, the projection on the $\Z_4$ part is contained in $\{0,2\}$. This is clear for shape~\ref{rho=arb-tau=0}. For shape \ref{rho=arb-tau=arb}, it is a consequence of $(z_1,z_i)=z_i^2$ for $i\ge 2$. After a suitable permutation on the $\cQ_8$-coordinates we may assume that $\cC\subseteq 2\Z_4^{\alpha}\times \gen{\ba}^{\beta}$ and therefore $\cC''=\xi(\cC)$ for a suitable subgroup $\cC$ of $\Z_2^{\alpha}\times \Z_4^{\beta}$ such that $C=\Phi(\cC)$ is a Hadamard code. Then $\cC'=\gen{\cC,z_1}$ and so $C'=C^{(z_1)}$.

Note that if $\cC'$ is of shape~\ref{rho=arb-tau=0}, then $\alpha=0$ and so it is equivalent to $C^{(z_1)}$ for $C$ a $\Z_4$-linear code.
\end{proof}

Notice that if $\cC$ is of shape~\ref{rho=4} then $\cC$ has not any abelian subgroup of index 2 and therefore $\cC$ can not be obtained with this type of construction.
\bigskip

The \zz codes $\cC$ used in the last Theorem have length $n=2^{m}$, where $m+1=\sigma+\delta$ and $\sigma > \delta$ in the case we are dealing with \zz (non $\Z_4$-linear) codes (see~\cite{PRV06}). The parameters of the obtained code $\cC'$ after the construction in Theorem~\ref{theo:z2z4tohadamard} are $m'=m+1$, $\rho' = \delta+1$ and $\sigma' = \sigma$. Therefore, $m'=m+1=\sigma+\delta=\sigma'+\rho'-1$. Rank of $C'$ can be computed from rank of $C$ adding vector $z_1$ and all the swappers $[z_1,z_i]$, where $i\in \{1,\rho\}$, so $r(C')\leq r(C)+1+\rho'=\sigma+\delta+\binom{\delta}{2}+1+\rho' =\sigma'+\rho'+\binom{\rho'-1}{2}+\rho' =\sigma'+\rho'+\binom{\rho'}{2}$.

From Corollary~\ref{Cotas}, if $m$ is odd then the upper bound $r\le m+1+\binom{\frac{m+1}{2}}{2}$ can only be reached for Hadamard $\qq$-codes of shape~\ref{rho=arb-tau=arb}
or for shape~\ref{rho=4} , with $m=5$. For $m$ even the upper bound $r\le m+2+\binom{\frac{m}{2}}{2}$ only can be obtained with Hadamard $\qq$-codes of shape~\ref{rho=arb-tau=0}. For instance, for $m=4$ this maximum is 7 and it is reached by the code of Proposition~\ref{Q8linearN16} which is of shape~\ref{rho=arb-tau=0}. For $m=5$, the upper bound for the rank of a $\Z_2\Z_4\cQ_8$-code is 9. In the next example we will show that we can construct a $\Z_2\Z_4\cQ_8$-code with $m=5$ and rank 9, by using the latest construction.

\begin{example}\label{shape5}{\rm
Take the Hadamard $\Z_2\Z_4$-linear code $\cC$, with $m=4$ and parameter $\delta=2$. Code $\cC$ is generated by $(1,1,1,1\,|\,2,2,2,2,2,2)$, $(0,1,0,1\,|\,0,2,1,1,1,1)$, $(0,0,1,1\,|\,1,1,0,1,2,3)$ $\in \Z_2^4\times \Z_4^6$. Now, construct $\xi(\cC) \subset \Z_4^4\times \cQ_8^6$ generated by
\begin{align*}
x_1=&(2,2,2,2\,|\, \ba^2, \ba^2, \ba^2, \ba^2, \ba^2, \ba^2)\\
z_2=&( 0,2,0,2\,|\,  \one, \ba^2, \ba,\ba,\ba,\ba)\\
z_3=&(0,0,2,2\,|\, \ba,\ba, \one, \ba, \ba^2,  \ba^3)
\end{align*}

If we choose $z_1=(1,1,1,1,|\, \bb, \ba\bb, \bb, \ba\bb, \ba\bb, \ba^3\bb)$ then $C^{(z_1)}$ is a Hadamard code of length $32$, type $(3,0,3)$, shape~\ref{rho=arb-tau=arb}, rank $r=9$ and dimension of the kernel $k=3$.}
\end{example}

\subsection{The generalized Kronecker construction}

We give a generalization of the Kronecker construction of Hadamard matrices in the context of Hadamard $\Z_2\Z_4\cQ_8$-codes.

If $H$ is a Hadamard matrix then the Kronecker matrix of $H$ is $\mathcal{K}(H)=\left(\matriz{{cc} H & H \\ H & -H}\right)$, which is another Hadamard matrix. If $C$ is the Hadamard code associates to $H$ and $\mathcal{K}(C)$ is the Hadamard code associated to $\mathcal{K}(H)$ then $\mathcal{K}(C)$ is formed by the vectors of the form $(c,c)$ and $(c,\overline{c})$, with $c\in C$.

Let $\Delta:\cG\rightarrow \cG\times \cG$ be the diagonal map, i.e., $\Delta(x)=(x,x)$ for each $x\in \cG$.

Assume that $C=\Phi(\cC)$, for $\cC$ a subgroup of $\cG$. Then $\mathcal{K}(C)=\Phi(\mathcal{K}(\cC))$ where $\mathcal{K}(\cC)=\gen{\Delta(\cC),(1,\bu)}$. Moreover $k(\mathcal{K}(C))=k(C)+1$, $r(\mathcal{K}(C))=r(C)+1$ and if $\cC$ is of type $(\sigma,\delta,\rho)$ then $\mathcal{K}(\cC)$ is of type $(\sigma+1,\delta,\rho)$.

More generally, let $g$ be an element of $\cG$ of order 2 modulo $\cC$ which normalizes $\cC$, i.e., 	$$\cC^g=\cC \quad \text{ and } \quad g^2\in \cC.$$
Consider the subgroup $\mathcal{K}_g(\cC)=\gen{\Delta(\cC),(g,g\bu)}$ of $\cG \times \cG$. For example, $\mathcal{K}_g(\cC)=\mathcal{K}(\cC)$ if and only if $g\in \cC$. We claim that $\Phi(\mathcal{K}_g(\cC))$ is a Hadamard code.

First we have that $\Delta(\cC)$ is a subgroup of $\cG\times \cG$ of cardinality $2n$. Moreover, $(g,g\bu)^2=(g^2,g^2)\in \Delta(\cC)$; $\cC^g=\cC$ and $c\in \cC$ then $(g,g\bu)^{-1}(c,c)(g,g\bu)=(gcg^{-1},gcg^{-1})$. Therefore $\mathcal{K}_g(\cC)=\Delta(\cC)\cup \{(gc,g\bu c) : c\in \cC\}$ is a subgroup of $\cG\times \cG$ of cardinality $4n$. Furthermore, for every $c\in \cC$ we have $\w(\Phi(c,c))=2\w(\Phi(c))\in \{0,n,2n\}$ and $\w(\Phi(cg,c\bu g)) = \w(\Phi(cg),\overline{\Phi(cg)})=\w(\Phi(cg))+2n-\w(\Phi(cg))=2n$.

Moreover, $r(\mathcal{K}_g(C))=r(C)+1$ and $k(\mathcal{K}_g(C))\le k(C)+1$.
Assume that $\cC$ is of type $(\sigma,\delta,\rho)$. If $g$ has order $2$ then $\mathcal{K}_g(\cC)$ is of type $(\sigma+1,\delta,\rho)$.
If $g$ has order 4 and commutes with all the elements of $Z(\cC)$ then $\mathcal{K}_g(\cC)$ is of type $(\sigma,\delta+1,\rho)$. Finally, if $C_{Z(\cC)}(g)=\{x\in Z(\cC):xg=gx\}$ is of order $\sigma+\delta_1$ with $\delta_1<\delta$ then $\mathcal{K}_g(\cC)$ is of type $(\sigma,\delta_1,\rho+\delta-\delta_1+1)$.

\begin{example}{\rm
As an example of the above construction, from the code $C$ constructed in Proposition~\ref{Q8linearN16} and taking $g=(\bb, \ba \bb,\one,\one)\in \cQ_8^{4}$ we obtain a new code $\mathcal{K}_g(C)$ of binary length 32, which is a quaternionic Hadamard, non \zz code, with dimension of the kernel $2$ and rank $8$. It is equivalent to the code of Example~\ref{Hadamard3228}.}
\end{example}

\begin{example}\label{codi74}{\rm
Note that in some cases, when the size of the kernel of $C$ is strictly greater than the size of the center of $\cC$ it could happen that using the above generalized Kronecker construction the dimension of the kernel of the new code $\mathcal{K}_g(C)$ is lower than the original. As an example, take $\cC$ the subgroup of $\cQ_8^8$ generated by
\begin{align*}
z_1=&(\ba,\ba,\ba,\ba,\ba,\ba,\ba,\ba,)\\
z_2=&(\bb,\ba\bb,\bb,\ba\bb,\bb,\ba\bb,\bb,\ba\bb)\\
z_3=&(\ba^2,\one,\ba^2,\one,\ba^2,\one,\ba^2,\one)\\
z_4=&(\ba^2,\ba^2,\one,\one,\ba,\ba,\ba^3,\ba^3)
\end{align*}
The corresponding binary code $C=\Phi(\cC)$ is a Hadamard, non $\Z_2\Z_4$-linear code of length 32, type $(3,0,3)$, shape~\ref{rho=arb-tau=arb}, rank 7 and dimension of the kernel 4.

Take $g=(\ba^2,\ba^2,\one,\one,\bb,\ba\bb,\bb,\ba\bb)\in \cQ_8^8$ and construct  $\mathcal{K}_g(C)$, which turn out to be  a Hadamard, non $\Z_2\Z_4$-linear, code of length 64, type $(3,0,4)$, shape~\ref{rho=arb-tau=0}, rank 8 and dimension of the kernel 3.}
\end{example}

\subsection{Some final remarks}
Using the above constructions from an initial well known code (linear or \zz) we can construct several infinite families of $\Z_2\Z_4\cQ_8$-codes, which are not \zz.

We already mentioned that the Hadamard codes of length 16 can
be completely classified using the invariants given by the rank and the dimension of the kernel. However, in general, for larger lengths, we can find nonisomorphic Hadamard $\Z_2\Z_4\cQ_8$-codes with the same invariants.
\begin{example}\label{z2z4toq8}{\rm
As an example, consider code $C$ in Example~\ref{codi74}. It is a binary Hadamard, non $\Z_2\Z_4$-linear code of length 32, rank 7 and dimension of the kernel 4 . We also know a $\Z_2\Z_4$-linear code of length 32, rank 7 and dimension of the kernel 4 (item~\ref{rho=0} of Theorem~\ref{ClasificacionHadamard}). It is the code $C'$, where $C'=\Phi(\cC')$ and $\cC'$ is a subgroup of $\Z_2^8\times \Z_4^{12}$ generated by:
\begin{align*}
x_1=&(1, 1, 1, 1, 1, 1, 1, 1\,|\,2, 2, 2, 2, 2, 2, 2, 2, 2, 2, 2, 2)\\
x_2=&(0, 0, 0, 0, 1, 1, 1, 1\,|\, 0, 0, 0, 0, 0, 0, 2, 2, 2, 2, 2, 2)\\
y_1=&(0, 1, 0, 1, 0, 1, 0, 1\,|\, 0, 2, 1, 1, 1, 1, 0, 2, 1, 1, 1, 1)\\
y_2=&(0, 0, 1, 1, 0, 0, 1, 1\,|\, 1, 1, 0, 1, 2, 3, 1, 1, 0, 1, 2, 3)
\end{align*}}
\end{example}
\bigskip

Note that the examples we wrote into the paper achieve almost all the shapes according Theorem~\ref{ClasificacionHadamard}. For instance, shape~\ref{rho=0} is satisfied for all well known $\Z_2\Z_4$-linear and $\Z_4$-linear Hadamard codes; code in Proposition~\ref{Q8linearN16} is of shape~\ref{rho=arb-tau=0}; code in Example~\ref{shape5} is of shape~\ref{rho=arb-tau=arb} and code in Example~\ref{Hadamard3228} is of shape~\ref{rho=4}. However, there is no any example of shape~\ref{rho=2}. We supply such an example below.

\begin{example}{\rm
Let $\cC$ be the $\Z_2^4\times \cQ_8$-code  $\langle (1,1,0,0|\ba) , (1,0,1,0|\bb) , (1,1,1,1|\ba^2) \rangle$. Code $\cC$ is of shape~\ref{rho=2} and, after the Gray map, code $C$ is a linear Hadamard code.

We can use a slightly variation of the Kronecker construction to obtain a shape~\ref{rho=2} non linear code with the maximum rank allowed for this shape.

First of all, we use the Kronecker construction to obtain the code $\cD=\mathcal{K}(\cC)$, which is generated by
\begin{align*}
x_1=&(1,1,1,1,1,1,1,1|\ba^2,\ba^2)\\
x_2=&(0,0,0,0,1,1, 1,1|\one,\ba^2)\\
z_1=&(1,1,0,0,1,1,0,0|\ba,\ba)\\
z_2=&(1,0,1,0,1,0,1,0|\bb,\bb)
\end{align*}
Code $D$ is a (linear) binary code of length 16, shape~\ref{rho=2} and dimension of the kernel and rank equal to the dimension of $D$, which is 5.

Finally, take the code ${\bar{\cD}}$ with generators $x_1, x_2, z_1$ and $${\bar{z}_2}=(1,0,1,0,1,0,1,0|\ba\bb,\bb).$$
It is straightforward to check that ${\bar{\cD}}$ is of shape~\ref{rho=2} and the binary code ${\bar D}=\Phi({\bar{\cD}})$ is of rank 6. Indeed, code ${\bar{\cD}}$ has a new swapper $[z_1,{\bar{z}_2}]=(0,0,0,0,0,0,0,0|\ba^2,\one)$ which did not exist in $\cD$.
}
\end{example}

\section*{Acknowledgment}
The authors wish to thank J. Borges and M. Villanueva for useful discussions and valuable comments.

\end{document}